\title{Adaptive Off-Policy Inference for M-Estimators\\ Under Model Misspecification}
\begin{document}

\author{
James Leiner\textsuperscript{1}, Robin Dunn\textsuperscript{2}, Aaditya Ramdas\textsuperscript{1,3} 
}

\maketitle
\begin{tabular}{l}
\textsuperscript{1}Department of Statistics and Data Science, Carnegie Mellon University \\
\textsuperscript{2}Novartis Pharmaceuticals Corporation, Advanced Methodology and Data Science \\
\textsuperscript{3}Machine Learning Department, Carnegie Mellon University \\
\end{tabular}

\begin{abstract}

When data are collected adaptively, such as in bandit algorithms, classical statistical approaches such as ordinary least squares and $M$-estimation will often fail to achieve asymptotic normality. Although recent lines of work have modified the classical approaches to ensure valid inference on adaptively collected data, most of these works assume that the model is correctly specified. The misspecified setting poses unique challenges because the parameter of interest itself may not be well-defined over a non-stationary distribution of rewards. We therefore tackle the problem of \emph{off-policy} inference in adaptive settings, where we uniquely define a projected solution over a stationary evaluation policy. Our method provides valid inference for $M$-estimators that use adaptively collected bandit data with a possibly misspecified working model.  A key ingredient in our approach is the use of flexible approaches to stabilize the variance induced by adaptive data collection. A major novelty is that the procedure enables the construction of valid confidence sets even in settings where treatment policies are unstable and non-converging, such as when there is no unique optimal arm and standard bandit algorithms are used. Empirical results on semi-synthetic datasets constructed from the Osteoarthritis Initiative demonstrate that the method maintains type I error control, while existing methods for inference in adaptive settings do not cover in the misspecified case. 
\end{abstract}
\section{Introduction}
Adaptive data collection has become a common practice in modern data analysis due to the rise of reinforcement learning and sequential data collections in contexts such as personalized healthcare \citep{info:doi/10.2196/jmir.7994}, web recommendations \citep{afsar2022reinforcement}, and clinical trials \citep{https://doi.org/10.1002/sim.3720,info:doi/10.2196/18477}. In contrast to classical settings, where i.i.d.\ data are observed, adaptively collected data allows an analyst to interact with the decision-making algorithm in various ways, leading to sequentially dependent data. A canonical example is the (contextual) bandit problem where an analyst is allowed to make a choice of treatment (or ``arm'') at each time step and then observes a reward. Bandit algorithms attempt to maximize the total expected reward over time. These algorithms tend to result in data collection policies that explore treatment options in the early rounds and tend to shift to a greedier strategy that chooses the treatment with the highest estimated reward in the later rounds. When inference is conducted on data collected by bandit algorithms, the greedy search process can result in unstable variances \citep{pmlr-v80-deshpande18a,NEURIPS2020_6fd86e0a} and biased estimates of the average reward in each treatment arm \citep{NEURIPS2019_65b1e92c} that make naive approaches to inference invalid. 

There is a rich literature for adapting to the difficulties that arise for inference on adaptively collected data. Typical approaches focus on limiting the variance in each data collection using techniques such as propensity score truncation \citep{cook2023semiparametric} or post-hoc reweighting of the data when constructing an estimator \citep{bibault2021postcontext,syrgkanis2024postreinforcementlearninginference}. These approaches allow for assumption-lean inference in semiparametric settings, but they limit the study to relatively simple functionals such as the average reward for a fixed treatment or contrasts between average rewards for fixed arms. 

An alternative set of approaches assumes a linear model linking treatment ($X_{t} \in \R^{d}$) and rewards ($Y_{t} \in \R$),
$$ Y_{t} = \theta^{T}X_{t} + \epsilon_{t},$$ where $\epsilon_t$ is i.i.d.\ random noise. 
It is striking that even in this simple setting, estimators such as ordinary least squares fail to achieve asymptotic normality when $X_{t}$ depends on previous rounds $\{X_{i},Y_{i}\}_{i=1}^{t-1}$. A classical result \citep{1982laiwei} demonstrates that a necessary condition for the OLS estimate of $\theta$ to be asymptotically normal is when there exists a deterministic sequence of positive definite matrices $\{B_{T}\}_{T=1}^{\infty}$ such that $B_{T}^{-1}\sum_{t=1}^{T}X_{t}X_{t}^{T} \overset{p}{\to} I_{d}$. More recent literature \citep{NEURIPS2020_6fd86e0a,pmlr-v80-deshpande18a,khamaru2021near} makes the point that this assumption typically does not apply for common bandit algorithms when the margin (that is, the difference in mean rewards between arms) is zero. \cite{NEURIPS2020_6fd86e0a} proposes a solution through batching, by fixing the sampling rule for each batch and letting the number of observations within each batch go to infinity. This means that even if the sampling rule does not concentrate \emph{across time}, it is fixed within each batch, allowing the empirical covariance matrix within each batch to concentrate. An alternative set of approaches \citep{pmlr-v80-deshpande18a,khamaru2021near} uses an online debiasing approach, where the finite sample bias is controlled through regularization. Other approaches generalize this this to the generalized partial linear model \cite{10.1214/24-AOS2485} and for estimating equations \citep{NEURIPS2023_a399456a}.

We note that across all of these approaches, \emph{correct specification} of the working model is crucial to ensuring valid inference. Correct model specification is an unlikely assumption to hold in practical settings. In the non-adaptive setting, it is common to instead conduct inference on a projected solution, 
\begin{equation} \label{eqn:projected_linear}
\theta^{\star}:=\argmin_{\theta}\mathbb{E}\left[\left(Y - \theta^{T}X\right)^{2}\right]
\end{equation}
It is well known that under mild regularity conditions, the ordinary least squares estimate with sandwich estimators of the variance will cover the projected solution \citep{white1982maximum}, but no analogous result exists in the adaptive setting.

Moving beyond the linear case, \cite{zhang2021mestimators} considers statistical inference on $M$-estimators in the contextual bandit problem, allowing for more complex models to describe the reward structure, such as generalized linear models. However, this work again requires that the conditional mean of the model is correctly specified in the working model. However, in adaptive settings, selecting a target corresponding to a projected solution analogous to \cref{eqn:projected_linear} is not straightforward because the data-generating mechanism is non-stationary over time and dependent on user choices. We therefore define a target with respect to a fixed evaluation policy, which yields an off-policy projection parameter as an inferential target that remains interpretable under persistent adaptivity; formal details are deferred to \cref{sec:problem}.

An alternative approach \citep{zhang2023statistical} performs inference for $Z$-estimators and allows for a misspecified model, but only under a finite amount of adaptivity. In this framework, $n$ individuals are tracked over $T$ time periods, with $T$ fixed and $n$ growing to $\infty$. The advantage of this approach is that it allows for non-stationary reward and context distributions over time, but it relies on the number of individuals enrolled in the trial to tend to $\infty$ and does not allow adaptive decisions to be made separately for each individual in the trial.

\subsection{Our Contributions}

In this work, we propose a methodology for inferring parameters in $M$-estimation problems under a potentially misspecified working model and non-vanishing levels of adaptivity (i.e.\ without assuming the adaptivity eventually vanishes). The main result, \cref{thm:clt}, provides a Central Limit Theorem that enables the construction of confidence intervals when the variance (which varies over time and may not converge) of the score function can be estimated consistently. 

The problem of estimating the variance of the score function is nontrivial because using naive empirical estimates when applying \cref{thm:clt} is only valid when the variance is a fixed deterministic quantity. In many adaptive settings, such as bandit problems where expected rewards between arms are comparable, action selection probabilities will be non-convergent random quantities, leading to unstable variances \citep{NEURIPS2020_6fd86e0a}. To solve this problem, we propose flexible methods to construct valid time-varying plug-in estimates for the variances at each time step.

In a work that is concurrent to ours,  \cite{guo2025statisticalinferencemisspecifiedcontextual} also provides a procedure for inference on $Z$-estimators trained on adaptively collected data via inverse propensity weighting, but only under the assumption that treatment policies converge as $T \rightarrow \infty$. Since many common bandit algorithms fail to converge under model misspecification, the authors provide a sufficient set of conditions to guarantee policy convergence (e.g., continuous policies that are sufficiently smooth). We take a different approach --- rather than restricting ourselves to only particular classes of policy that are guaranteed to converge, we allow the analyst to assign treatments in relatively arbitrary ways and then adjust for the adaptivity post hoc by estimating the conditional variance \emph{separately} at each time step. In cases where the policy converges, these machine-learning-based approaches can be replaced with simple empirical estimates of the variance, and our procedure can be simplified considerably. 

We note that before now, the problem of estimating projection parameters under non-finite amounts of adaptivity was not solved even for simple linear models. Of course, this is merely a special case of $M$-estimation so our proposed methodology can be applied straightforwardly. We discuss this particular application as a running example throughout the remainder of the paper.

Our primary results are written under the assumption that the number of time steps tends to infinity, with only a single observation at each time step. However, this can be easily extended to the batched setting, where multiple observations are observed at each time step.

\subsection{Paper Outline}
In \cref{sec:problem}, we define the problem, most crucially introducing definitions of target parameters inspired by the off-policy evaluation literature that remain tractable in the misspecified setting. In \cref{sec:results}, we present a CLT that enables inference for $M$-estimators in the adaptive setting given accurate (time-varying) estimates of the variance of the score function. \cref{sec:covariance_estimation} discusses practical strategies for estimating variance using flexible machine learning approaches. \cref{sec:empirical_results} presents empirical results on semi-synthetic datasets constructed from the Osteoarthritis Initiative, a publicly available longitudinal dataset provided by the NIH which tracks health outcomes of patients with osteoarthritis. The results verify the validity of the procedure and the failure of existing approaches to provide confidence sets with valid coverage. We provide concluding thoughts in \cref{sec:conclusion}, including potential limitations of this approach and possible directions for future work. All proofs for stated theorems are included in the Appendix.

\subsection{Notation}
We introduce some shorthand notation used throughout the paper. We define $[n] := \{1,...,n\}$ for a positive integer $n$. We denote $e_{j}$ as the $j$-th standard basis vector in $\R^{d}$. For a function $f_{\theta} : \R^{d} \rightarrow \R$, we refer to $\dot{f}_{\theta}$ as the first derivative of $f$ with respect to $\theta$, $\ddot{f}_{\theta}$ as the Hessian matrix with respect to $\theta$, $\dddot{f}_{\theta}$ as the third derivative with respect to $\theta$, and so on. For $M\in \R^{d_1 \times d_2 \times d_3}$, we write $\norm{M}_{1} = \sum_{i=1}^{d_{1}}\sum_{j=1}^{d_2}\sum_{k=1}^{d_3}|a_{i,j,k}|$.
For two matrices $A,B$, $A \succeq B$ means that $A-B$ is positive semidefinite.

When taking expectations, $\E_{\mathcal{P},\pi}$ denotes the expectation under the assumption that $(X_t,A_t,Y_t) \sim  p \left(y | x, a\right)\pi(a|x) p(x)$. When no subscript is denoted, the expectation is over the entire joint distribution of $\{X_t,A_t,Y_t\}_{t=1}^{T}$. 

\section{Problem Setup} \label{sec:problem}

We assume a stochastic bandit environment in which features $X_{t} \in \mathbb{R}$ are observed in each round $t\in [T]$. After observing a feature, an analyst chooses an action $A_{t}$ from a finite set of $K$ actions $\mathcal{A}:= \{1,...,K\}$. After selecting an action, an outcome $Y_{t}$ is observed. We denote $Y_{t}(a)$, where $a \in \mathcal{A}$, to be the counterfactual result had the analyst chosen $a$ in round $t$, regardless of the actual action taken. We assume that the joint distribution of features and potential outcomes is independent and identically distributed across time. 

\begin{assumption} \label{assumption:potential_outcomes}
$\{(X_{t}, Y_{t}(1), \ldots, Y_t(K))\}_{t=1}^{T} \overset{\mathrm{iid}}{\sim}  \mathcal{P}$ .
\end{assumption}

Although potential outcomes are distributed i.i.d., we allow the analyst to choose the treatment adaptively based on a pooled history $\mathcal{H}_{t-1} := \{ (X_{i},A_{i},Y_{i}) \}_{i=1}^{t-1}$. Formally, we say $\pi_{t}(a|X_{t}) := \mathbb{P} (A_{t} =a | X_{t}, \mathcal{H}_{t-1}) \in \sigma(\mathcal{H}_{t-1}) $ for all $a \in \mathcal{A}$. We can summarize the conditional density (given $\mathcal{H}_{t-1}$) over actions, features, and outcomes at time step $t$ as
\begin{equation} \label{eqn:distribution}
    (X_t, A_t,Y_t) \sim p \left(y | x, a\right)\pi_{t}(a|x)  p(x),
\end{equation}
where densities $p(x)$ and $p(y|x,a)$ are invariant over time and defined by \cref{assumption:potential_outcomes}, but $\pi_{t}(a|x)$ is adaptive over time, because the analyst has the option to change the assignment probabilities in reaction to $\mathcal{H}_{t-1}$. 

Our goal is to construct confidence regions for a parameter $\theta^{\star}$, which we define as the expected maximizer of some function $m_{\theta}$ in some parameter space $\Theta$. For example, $m_{\theta}(x,a,y):= -(y -\theta^{T}(x,a))^{2}$ corresponds to an ordinary least squares regression, and $m_{\theta}(x,a,y):= -y(x,a)^{T}\theta + \psi ((x,a)^{T} \theta)$, where $\psi$ denotes the convex log-partition function, corresponds more generally to GLMs \citep{agresti2015foundations}. Some nuance is in order to determine the distribution over which the expected loss is minimized. Because the distribution of actions evolves over time, the distribution at each time step may be substantially different so the solution to  $\argmax_{\theta \in \Theta} \E\left[m_{\theta}(X_{t},A_{t},Y_{t}) \right]$ will also vary substantially across $t \in [T]$. 
In order to ensure there is a unique and stable inferential target, we tackle the problem of \emph{off-policy learning} and seek inference under a hypothetical policy $\pi_{e}(a|x)$ that is distinct from the policy used during treatment and invariant over time. 
\begin{equation} \label{eqn:theta_star}
\theta^{\star} := \argmax_{\theta \in \Theta} \mathbb{E}_{\mathcal{P}, A\sim\pi_{e}}\left[ m_{\theta}(X,A,Y) \right].
\end{equation}

The policy $\pi_{e}(a |x)$ may be a fixed and known distribution of interest. For example, $\pi_e(A_t = a| X_t) = 1 / |\mathcal{A}| \text{ for all } a \in \mathcal{A}$ and $t \in [T]$ would represent a hypothetical setting in which treatments are chosen uniformly within the population. Alternatively, $\pi_{e}(a | x)$ might be a fixed but unknown quantity to be estimated later, such as a hypothetical optimal policy or a limiting quantity such as $\lim_{t \rightarrow \infty} \mathbb{P}(A_{t}=a| X_{t}, \mathcal{H}_{t-1})$ (which we do not assume to exist in general in our paper). An appeal to Slutsky's theorem will allow us to substitute estimated versions of optimal policies after data collection to form our estimators, so long as the estimates are asymptotically independent of the history as $T \rightarrow \infty$. Of course, the interpretation of $\theta^{\star}$ can vary dramatically depending on the choice of $\pi_{e}$, as a simple example will elucidate. 

\begin{figure} 
    \centering
    \includegraphics[width=0.8\linewidth]{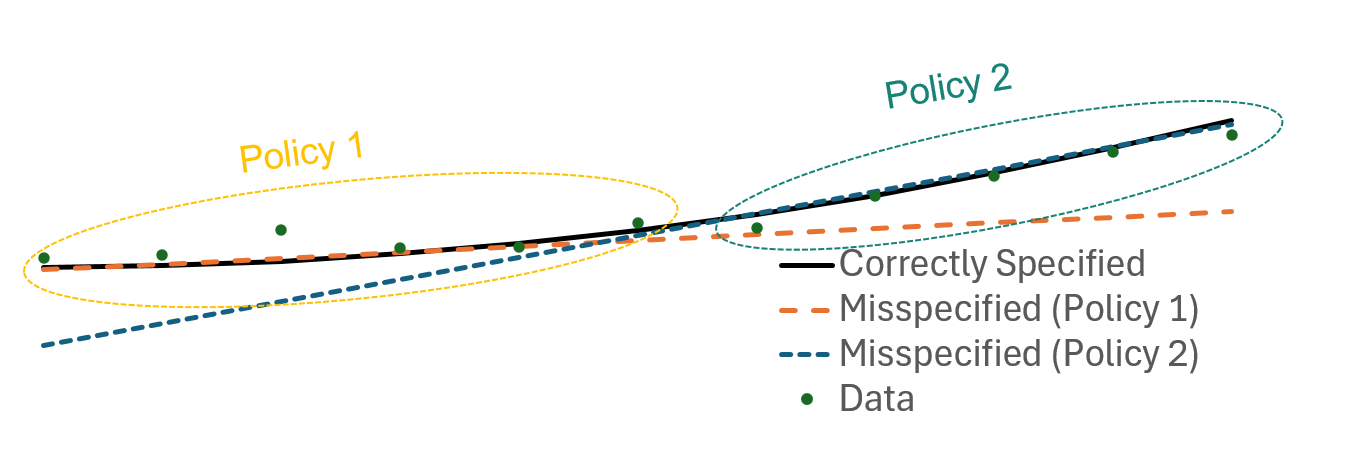}
    \caption{Illustration for \cref{example1}. Under the misspecified linear model $m_{\theta} = -(Y_{t} - \theta_{0} - \theta_{1}A_{t})^{2}$, $(\theta_{0}^{\star},\theta_{1}^{\star}) = (-0.2, 4)$ for the first policy and $(\theta_{0}^{\star},\theta_{1}^{\star}) = (-5.3, 15)$ for the second policy. Intuitively, the misspecified model will try to estimate a secant line at different points of the quadratic function, resulting in very different interpretations for the target parameter. As such, the choice of evaluation policy is critical for properly interpreting the target parameter.}
    \label{fig:example_misspecified}
\end{figure}

\begin{example}[Effect of $\pi_{e}$ on $\theta^{\star}$] \label{example1}
Consider data generated as $Y_{t} \sim N(6A_{t}^{2},1)$ under two treatment regimes, where $\mathcal{A} = \{0, 0.1, 0.2, 0.3, 0.4, 0.5, 0.6, 0.7,0.8,0.9,1\}$. The first policy is uniform over $A_t \in \{0,0.1,0.2,0.3,0.4,0.5\}$. The second policy is uniform over $A_t \in \{0.6,0.7,0.8,0.9,1.0\}$. Under the correctly specified model $m_{\theta}(A_{t},Y_{t}) := -(Y_{t} - \theta_{0} - \theta_{1}A_{t}^{2})^{2}$, $(\theta_{0}^{\star},\theta_{1}^{\star}) = (0, 6)$ for both policies. On the other hand, under an incorrectly specified linear model, $\theta_{0}^{\star}$ and $\theta_{1}^{\star}$ can vary markedly as seen in \cref{fig:example_misspecified}. 

\end{example}

Although this framework is robust enough to accommodate parameter estimation for relatively complex parametric models that describe the relationship between rewards, features, and treatments, it can also be used to target simple functionals such as average treatment effect. 

\begin{remark}[Average Treatment Effect]
We can use this framework to target the average treatment effect by using a vector of indicator variables as the chosen model with a square loss function. We can permit any evaluation policy that satisfies the standard causal assumption of $Y_{t}(a) \perp \!\!\! \perp A_{t}$ under $\pi_{e}$, and such that $\pi_{e}(A_{t}=a) > 0$. 

Formally, we let $\theta = (\theta_1, ..., \theta_{|\mathcal{A}|})$ and consider setting  $m_{\theta} = -(Y_{t} - \sum_{a=1}^{K} 1_{A_{t} = a} \theta_{a})^{2}$. First order conditions imply that for all $t$,
$$0= \mathbb{E}_{\mathcal{P}, \pi_{e}}\left[ Y_{t} - \sum_{a=1}^{K} 1_{A_{t} = a} \theta^{\star}_{a}\right].$$
Noting that $Y_{t} = Y_{t}(a) 1_{A_{t} = a}$ and rearranging terms yields, 
\begin{align*}
 \theta^{\star}_{a} = \frac{\mathbb{E}_{\mathcal{P}, \pi_{e}}\left[Y_{t}(a) 1_{A_{t} = a} \right]  }{ \mathbb{E}_{\mathcal{P}, \pi_{e}}\left[ 1_{A_{t} = a} \right]  } =  \frac{ 
  \mathbb{E}_{\mathcal{P}}\left[Y_{t}(a)\right] \mathbb{E}_{\mathcal{P}, \pi_{e}}\left[ 1_{A_{t} = a} \right]  }{ \mathbb{E}_{\mathcal{P}, \pi_{e}}\left[ 1_{A_{t} = a} \right]  }  = \mathbb{E}_{\mathcal{P}}\left[Y_{t}(a)\right]= \mathbb{E}_{\mathcal{P}}\left[Y(a)\right].
\end{align*}
To infer the average effect of treatment $j$ versus treatment $k$, we can then target $\eta^{T} \theta^{\star}$ for some contrast vector $\eta:= e_{j} - e_{k}$ (recall $e_j$ refers to the $j$-th standard basis vector in $\R^{K})$. For example $\eta = (-1,1)$ in the setting of $K=2$ would recover the  average treatment effect $ \mathbb{E}\left[Y(1) - Y(0) \right].$
\end{remark}

\paragraph{Choice of estimator} Similar to \cite{zhang2021mestimators}, one option is to consider estimators of the form
\begin{equation*}
\hat{\theta}_{0} :=  \argmax_{\theta} \sum_{t=1}^{T}  w_{t} m_{\theta}(X_{t},A_{t},Y_{t}),
\end{equation*}
where $w_{t} \in \sigma(\mathcal{H}_{t-1},X_{t})$. Note that when the model is correctly specified, we have at every $t$ that
\begin{equation}\label{eqn:theta_star_mis}
  \theta^{\star} = \argmax_{\theta \in \Theta} \E \left[ m_{\theta} (X_{t},A_{t},Y_{t}) |A_{t}, X_{t} \right] \text{ for all } A_{t} \in \mathcal{A}, X_{t} \in \mathbb{R}.  
\end{equation} 
When this is true conditionally, it will also be true marginally for \emph{any choice} of policy. Therefore, under no model misspecification, we have at every time step $t$ that 
\begin{align*}
    \theta^{\star} 
= \argmax_{\theta \in \Theta}\mathbb{E}_{\mathcal{P}, \pi_{e}}\left[ m_{\theta}(X_{t},A_{t},Y_{t}) \right]  
= \argmax_{\theta \in \Theta}\mathbb{E}_{\mathcal{P}, \pi_{t}}\left[ m_{\theta}(X_{t},A_{t},Y_{t}) |\mathcal{H}_{t-1} \right],
\end{align*}
removing the need to consider the choice of \emph{policy} when defining the target parameter. Assuming that the maximum occurs at a critical point of $m_{\theta}$, this implies that
$$0 = \E\left[\dot{m}_{\theta^{\star}}(X_{t},A_{t},Y_{t}) |X_{t},A_{t}\right] = \E\left[w_{t} \dot{m}_{\theta^{\star}}(X_{t},A_{t},Y_{t}) |X_{t},A_{t},\mathcal{H}_{t-1}\right],$$
for all $X_{t},A_{t}$ and all $w_{t} \in \sigma(X_{t},A_{t},\mathcal{H}_{t-1})$.
Thus, $w_{t}$ can be arbitrarily chosen based on $X_t$, $A_t$, and $\mathcal{H}_{t-1}$ without violating the first-order conditions. For this reason, previous work often uses $w_t$ as a free parameter to control the variance without having to worry that this type of re-weighting will change the target away from $\theta^{\star}$. As an example, \cite{zhang2021mestimators} chooses $w_{t} = \sqrt{\frac{\pi_{e}(A_{t} | X_{t})}{ \mathbb{P} (A_{t}  | X_{t}, \mathcal{H}_{t-1}) }}$, which allows the variance of the score function to converge to a value independent of history.  Other examples include online debiasing approaches \citep{pmlr-v80-deshpande18a,khamaru2021near} which use regularization to choose $w_{t}$ so that the variance of $\hat{\theta}_{T}$ is controlled.

Under model misspecification, however, the choice of weights has an impact on the target parameter that is being covered because at $\theta^{\star}$ we will require that at each time step $t$,
$$ 0 = \E_{\mathcal{P},\pi_{e}}\left[ \dot{m}_{\theta^{\star}}(X_{t},A_{t},Y_{t})\right] =\E_{\mathcal{P},\pi_{t}}\left[ w_{t}\E_{\mathcal{P}}\left[\dot{m}_{\theta^{\star}}(X_{t},A_{t},Y_{t}) | X_{t},A_{t}, \mathcal{H}_{t-1}\right]|\mathcal{H}_{t-1}\right],$$
which is \emph{only} true for the specific choice of $w_{t} = \frac{\pi_{e}(A_{t} | X_{t})}{ \mathbb{P} (A_{t}  | X_{t}, \mathcal{H}_{t-1}) }$. Thus, we are constrained to define an estimator such as
\begin{equation} \label{eqn:hat-theta-naive} 
\hat{\theta}_{0} =  \argmax_{\theta} \sum_{t=1}^{T}  \frac{\pi_{e}(A_{t} | X_{t})}{ \mathbb{P} (A_{t}  | X_{t}, \mathcal{H}_{t-1}) } m_{\theta}(X_{t},A_{t},Y_{t}).
\end{equation}
Using inverse propensity weights ensures that $\hat{\theta}_{0}$ will be unbiased for $\theta^{\star}$, but we must now consider alternative methods to stabilize the variance of the estimator, which we will discuss in the next section. 

\paragraph{MAIPWM-Estimator} As an alternative to \cref{eqn:hat-theta-naive}, we can potentially improve power through the use of a predictive model that we update over time, $f_{t} : \R  \times \mathcal{A}  \rightarrow \R$. $f_{t}$ is trained on $\mathcal{H}_{t-1}$ (along with potentially other information known to the experimenter a priori before the experiment). $f_{t}$ then takes in $(X_{t},A_{t})$ as an input and outputs an estimate for $  \E\left[ Y_{t} | X_{t},A_{t}\right]$. An alternative definition to  \cref{eqn:hat-theta-naive} is what we will term the \textbf{misspecified augmented inverse propensity weighted $M$-estimator (MAIPWM-Estimator)} 
\begin{equation} \label{eqn:hat_theta}
\tilde{\theta}_{T} = \argmax_{\theta \in \Theta}  \sum_{t=1}^{T} \sum_{a=1}^{K} \pi_{e}(A_{t} =a | X_{t}) \left( m_{\theta}(a,X_{t},f_{t}(a,X_{t}))   + \mathbbm{1}_{A_{t} = a} \frac{ m_{\theta}(A_{t},X_{t},Y_{t})  -  m_{\theta}(X_{t},A_{t},f_{t}(X_{t},A_{t})) }{ \mathbb{P} \left(A_{t} =a | X_{t},\mathcal{H}_{t-1} \right)  }\right).
\end{equation}
The estimator improves power when the model $f_{t}$ is a good estimator of $\E[Y_{t} | X_t,A_t]$
by augmenting observations with predictions from a model. This approach is routinely used when estimating simple functionals like the average treatment effect in causal inference \citep{10.1111/ectj.12097} and off-policy evaluation \citep{doi:10.1073/pnas.2014602118} and is referred to as the \emph{augmented inverse probability weighted} estimator in these settings. More recently, \cite{zrnic2024active} introduces a similar modification of the score function of $M$-estimators with predictions derived from black box machine learning models, but limits their findings to the non-causal, non-adaptive observational setting. 

A common approach to finding the maximum is to find the solution defined by the root of the \emph{score equation}, 
\begin{equation} \label{eqn:score}
s_{t,\theta} = \sum_{a=1}^{K} \pi_{e}(A_{t} =a | X_{t}) \left( \dot{m}_{\theta}(X_{t},a,f_{t}(X_{t},a))   + \mathbbm{1}_{A_{t}=a}\frac{ \left( \dot{m}_{\theta}(X_{t},a,Y_{t})  -  \dot{m}_{\theta}(X_{t},a,f_{t}(X_{t},a) \right)}{ \mathbb{P} \left(A_{t}| X_{t}, \mathcal{H}_{t-1} \right)  }\right).
\end{equation}
In order to limit the variance of this quantity, we will instead consider a second estimator $\hat{\theta}_{T}$ defined as the (approximate) root of $\sum_{t=1}^{T} \Sigma_{t}^{-1} s_{t,\theta}$ for specially constructed matrices $\Sigma_t \in \R^{d}$ designed to normalize the variance of $s_{t,\theta}$. However, the construction of $\Sigma_t$ will require consistent estimates of $\theta^{\star}$ so our procedure will involve a two-step process where $\tilde{\theta}_{T}$ is used to construct the stabilizing matrices $\Sigma_t$ and then a central limit theorem is proved for $\hat{\theta}_{T}$.

This approach will be most useful when an individual has access to powerful machine learning methods for prediction but wants to perform inference on a simpler parametric model that is more interpretable. If an analyst prefers not to manage a predictive model during the course of the experiments, substituting any constant for $f_{t}$ will recover \cref{eqn:hat-theta-naive}. Therefore, we will prove our main results with respect to $\tilde{\theta}_{T}$ and $\hat{\theta}_{T}$, but similar guarantees will exist for $\hat{\theta}_{0}$ when no predictive model is being updated as a special case.  

\section{Main Results} \label{sec:results}
Our approach will be to construct two sets of weights based on different filtrations. As before, $w_{t} \in \sigma(X_{t},\mathcal{H}_{t-1})$ is restricted to being the inverse propensity weights $\frac{\pi_{e}(A_{t} | X_{t})}{ \mathbb{P} (A_{t}  | X_{t}, \mathcal{H}_{t-1}) }$ to ensure that $\hat{\theta}_{T}$ is unbiased for $\theta^{\star}$. Similarly, we consider $\Sigma_{t} \in \sigma(\mathcal{H}_{t-1})$ that will allow us to adjust the second moment of $s_{t,\theta}$, while preserving the first moment of the score function in $\theta^{\star}$ to be $0$. By assumption, $\E\left[s_{t,\theta^{\star}}|\mathcal{H}_{t-1}\right] =0$. Define a martingale difference sequence (MDS) $Z_{t} := \Sigma_{t}^{-1/2} s_{t,\theta^{\star}}$. Then,
$$\E[Z_{t} | \mathcal{H}_{t-1}] = \E[\Sigma_{t}^{-1/2} s_{t,\theta^{\star}}| \mathcal{H}_{t-1}] =\Sigma_{t}^{-1/2}\E[s_{t,\theta^{\star}}| \mathcal{H}_{t-1}] = 0,$$
by the fact that $\Sigma_{t}^{-1/2}\in \sigma(\mathcal{H}_{t-1})$. A natural choice is to choose $\Sigma_{t}$ as 
$$ V_{t,\theta^{\star}} = \mathbb{E}_{\mathcal{P},\pi_{t}}\left[ s_{t,\theta^{\star}}s_{t,\theta^{\star}}^{T} |\mathcal{H}_{t-1}\right].$$ 

By re-weighting the score function to stabilize the variance but \emph{requiring} that the weights be determined from a coarser filtration than what was used for constructing $w_{t}$, we are able to preserve the first-order conditions that will be stated in \cref{assumption:maximum}. Of course,  $V_{t,\theta^{\star}}$ is unknown in practice and will need to be replaced with an estimate $\hat{V}_{t}$. We proceed with the remainder of this section assuming the existence of such an estimate but note that it is a non-trivial task as we only have a single observation corresponding to each $t$. The key idea for constructing an estimator is to leverage the potential outcomes framework of \cref{assumption:potential_outcomes}, which we discuss in \cref{sec:covariance_estimation}. 

Before stating the main result, we introduce several assumptions that will be required to ensure asymptotic normality. 

\begin{assumption} \label{assumption:differentiability}
The first three derivatives of $m_{\theta}(x,a,y)$ with respect to $\theta$ exist for every $\theta \in \Theta$, every $a \in \mathcal{A}$ and every $(x,y)$ in the joint support of $\mathcal{P}$.
\end{assumption}

The existence of the first two derivatives allows us to identify the quantity of interest as the expected maximizer of $m_{\theta}$ and estimate it by evaluating the critical points of this function, which we discuss in more detail in \cref{assumption:maximum}. The existence of a third derivative will allow us to use the Taylor expansion around the score function to form a confidence set covering $\theta^{\star}$. 

\begin{assumption} \label{assumption:maximum}
Let $\theta^{\star}$ be defined as in \cref{eqn:theta_star} and $\Theta \subset \R^{d}$ be a bounded parameter space such that \cref{assumption:differentiability} holds. We further assume that at every $t \in [T]$,
\begin{enumerate}
\item $\E_{\mathcal{P},\pi_{e}}\left[ \dot{m}_{\theta^{\star}}(X_{t},A_{t},Y_{t})\right] =0,$  
\item $-\E_{\mathcal{P},\pi_{e}}\left[ \ddot{m}_{\theta^{\star}}(X_{t},A_{t},Y_{t}) \right] \succeq H  $ for some positive definite matrix $H$,
\item For any $\epsilon> 0 $, there exists constants $\delta_1,\delta_2 > 0 $ such that 
\begin{itemize}
    \item $\inf_{\norm{\theta- \theta^{\star}} \ge \epsilon} \left\{ \mathbb{E}_{\pi_{e}}\left[ m_{\theta^{\star}}(X_{t},A_t,Y_t) - m_{\theta}(X_{t},A_t,Y_t) \right]  \right\} \ge \delta_1$,
    \item  $\inf_{\norm{\theta- \theta^{\star}} \ge \epsilon} \norm{\mathbb{E}_{\pi_{e}}\left[\dot{m}_{\theta}(X_{t},A_t,Y_t)\right]}_{1}  \ge \delta_2$.
\end{itemize}
\item The first four moments of $m_{\theta}(X_t,A_t,Y_t)$,  $\dot{m}_{\theta}(X_t,A_t,Y_t)$, and $\ddot{m}_{\theta}(X_t,A_t,Y_t)$ are bounded with respect to $\mathcal{P},\pi_{e}$. 
\end{enumerate}
\end{assumption}
The first two conditions of this assumption ensure that the quantity of interest $\theta^{\star}$ is the solution to a maximization problem and can be found by evaluating the critical points of $m_{\theta}$. The third condition mirrors classical assumptions to demonstrate the consistency of $Z$ estimators and $M$ estimators \citep{van2000asymptotic}. Having a well-separated solution ensures the uniqueness of $\theta^{\star}$ and allows estimators constructed from finite samples to converge appropriately. Note that we assume that $\theta^{\star}$ is well separated \emph{both} as a maximizer of $m_{\theta}$ and as a root of the score equation because we consider estimators defined in both of these ways within our procedure. A special case of the third condition is when $m_{\theta}$ is a continuously differentiable concave function with the maximum obtained at $\theta^{\star}$. The fourth condition is used to ensure that Lindeberg conditions are achieved when invoking the martingale central limit theorem and is analogous to conditions in classical proofs of the normality of $M$-estimators \citep{van2000asymptotic}.

\begin{assumption}[Bounded importance ratios] \label{assumption:bounded_importance}
There exist a constant $C_{1}> 0$ such that  $ \frac{1}{ \mathbb{P} \left(A_{t}=a| X_{t}, \mathcal{H}_{t-1} \right) } \le C_{1}$ for all $a\in \mathcal{A}$. 
\end{assumption}
Bounding the weights is important to ensure that the variance of the score function is bounded, so that the martingale law of large numbers and central limit theorems can be applied. It is possible to weaken this so that the $\mathbb{P} \left(A_{t}=a| X_{t}, \mathcal{H}_{t-1} \right)$ are allowed to converge to either $0$ or $1$ at appropriately slow rates, but we leave this generalization to future work.

\begin{assumption}[Finite Bracketing Numbers] \label{assumption:bracket3}
Define the set of functions
$ \mathcal{M}_{\Theta} =  \{ m_{\theta}(X_t,A_{t},Y_{t}) : \theta \in \Theta  \}$ and
$ \dot{\mathcal{M}}_{\Theta} =  \{c^{T}\dot{m}_{\theta}(X_t,A_{t},Y_{t}) : \theta \in \Theta , \norm{c} \le 1 \}$. We assume that for all $\epsilon>0$, the bracketing numbers  $N_{[]}(\epsilon, \mathcal{M}_{\Theta}, L_{2}(\mathcal{P},\pi_{e})) < \infty$ and $N_{[]}(\epsilon, \dot{\mathcal{M}}_{\Theta}, L_{2}(\mathcal{P},\pi_{e})) < \infty$.

\end{assumption}
\cref{assumption:bracket3} limits the complexity of the function class $\{ m_{\theta} : \theta \in \Theta\}$ so that a martingale law of large numbers can be applied which is required when constructing an argument for the consistency of $\hat{\theta}_{T}$ and $\tilde{\theta}_{T}$. We note that \cite{zhang2021mestimators} provides an intuitive sufficient condition to check to ensure that \cref{assumption:bracket3} is true. They show that whenever $m_{\theta}$ is Lipschitz in the sense that 
$$|m_{\theta}(X_{t},A_{t},Y_{t}) - m_{\theta'}(X_{t},A_{t},Y_{t})| \le h(X_{t},A_{t},Y_{t}) \norm{\theta - \theta'}$$
for some function $h$ such that $\E_{\mathcal{P},\pi_e}[h(X_t,A_t,Y_t)^{2}] < m $ for some constant $m$, then $\mathcal{M}_{\theta}$ will have finite bracketing numbers for all $\epsilon > 0 $. 
We place a similar assumption on the complexity of the model classes. 
\begin{assumption}[Model Class Complexity] \label{assumption:bracket_complexity}
The predictive model $f_{t} \in \sigma(\mathcal{H}_{t-1})$, and $f_t$ belongs to a function class $\mathcal{F}$. For this function class, there exists a single function $u_{\mathcal{F}}$ and a constant $m_{u} \in \R$ such that
$\mathbb{E}[u_{\mathcal{F}}(X_t,a,f_{t}(a,X_{t}))^{4}] < m_{u}$ for all $a \in \mathcal{A}$ and,
$$\sup_{\theta \in \Theta} m_{\theta}(x,a,f_{t}(a,x)) \le u_{\mathcal{F}}(x,a,f_{t}(a,x)),$$ 
$$\sup_{\theta \in \Theta} \norm{\dot{m}_{\theta}(x,a,f_{t}(a,x))}_{1} \le u_{\mathcal{F}}(x,a,f_{t}(a,x)),$$
$$\sup_{\theta \in \Theta} \norm{\ddot{m}_{\theta}(x,a,f_{t}(a,x))}_{1} \le u_{\mathcal{F}}(x,a,f_{t}(a,x)),$$
$$\sup_{\theta \in \Theta} \norm{\dddot{m}_{\theta}(x,a,f_{t}(a,x))}_{1} \le u_{\mathcal{F}}(x,a,f_{t}(a,x)),$$
for all $a \in \mathcal{A}$, $x$ in the support of $\mathcal{P}$ and $f_{t} \in \mathcal{F}$. Furthermore, we assume that there exists a constant $C_{2}$ not dependent on $t$ such thate
$$\E\left[ \norm{m_{\theta}(X_{t},a,f_{t}(X_t,a)) - m_{\theta'}(X_{t},a,f_{t}(X_t,a))}^{2}_{2} |\mathcal{H}_{t-1} \right] \le C_{2} \norm{ \theta - \theta'}_{2}^{2}$$ almost surely.

\end{assumption}
Note that a primary difference between \cref{assumption:bracket3} and \cref{assumption:bracket_complexity} is that \cref{assumption:bracket_complexity} is written in terms of expectations relative to the actual policy $\pi_t$ rather than $\pi_e$. This is necessary because \cref{assumption:bracket_complexity} bounds the variance of a term related to $f_t$, which is dependent on history whereas \cref{assumption:bracket3} is bounding the complexity of a quantity that is \emph{independent} of history. 
These assumptions allow us to prove a key lemma needed to ensure the consistency of $\hat{\theta}_{T}$ and $\tilde{\theta}_{T}$.

\begin{lemma} \label{lemma:uniform_lln}

Assume $\mathcal{G}_{\Theta}=\{ g_{\theta}(X_t,A_{t},Y_{t}) : \theta \in \Theta  \}$ is a class of functions such that for any $\epsilon >0$, the bracketing number  $N_{[]}(\epsilon, \mathcal{G}_{\Theta}, L_{2}(\mathcal{P},\pi_{e})) < \infty$. Define
\begin{equation} \label{eqn:r_t_theta}
R_{t}(\theta) =  \sum_{a=1}^{K} \pi_{e}(A_{t} =a | X_{t}) \left( g_{\theta}(a,X_{t},f_{t}(a,X_{t}))   + \mathbbm{1}_{A_{t} = a} \frac{ g_{\theta}(X_{t},A_{t},Y_{t})  -  g_{\theta}(X_{t},A_{t},f_{t}(X_{t},A_{t})) }{ \mathbb{P} \left(A_{t} =a | X_{t},\mathcal{H}_{t-1} \right)  }\right).
\end{equation}
Assume that there exists a constant $L$ not dependent on $t$ such that
$$\E\left[ \norm{m_{\theta}(X_{t},a,f_{t}(X_t,a)) - m_{\theta'}(X_{t},a,f_{t}(X_t,a))}^{2}_{2} |\mathcal{H}_{t-1} \right] \le L \norm{ \theta - \theta'}_{2}^{2}$$ almost surely.
Then under Assumptions \ref{assumption:potential_outcomes}-\ref{assumption:bounded_importance},
$$\sup_{\theta \in \Theta }\frac{1}{T} \sum_{t=1}^{T} \left(R_t(\theta) - \mathbb{E}[R_{t}(\theta) |\mathcal{H}_{t-1}]\right) \overset{p}{\to} 0.$$
\end{lemma}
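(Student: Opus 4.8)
The plan is to prove this uniform (over $\theta$) martingale law of large numbers by combining a bracketing reduction with an $L^2$ martingale weak law, after first exploiting the augmented-IPW structure of $R_t$ to make the centering sequence deterministic. \textbf{Step 1 (deterministic conditional mean).} The first thing I would establish is that the martingale centering is in fact a fixed function of $\theta$ alone. Since $f_t \in \sigma(\mathcal{H}_{t-1})$ while $(X_t,A_t,Y_t)$ is drawn from \cref{eqn:distribution} independently of $\mathcal{H}_{t-1}$ given $\pi_t$, taking the conditional expectation of \cref{eqn:r_t_theta} and using that the inverse-propensity factor cancels $\mathbb{P}(A_t=a\mid X_t,\mathcal{H}_{t-1})$ while the two plug-in terms involving $f_t$ cancel exactly yields
$$\E[R_t(\theta)\mid\mathcal{H}_{t-1}] = \E_{\mathcal{P},\pi_{e}}[g_\theta(X,A,Y)] =: G(\theta),$$
a quantity that is deterministic and independent of both $t$ and $\mathcal{H}_{t-1}$. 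This both identifies $R_t(\theta)-G(\theta)$ as a martingale difference sequence for each fixed $\theta$ and reduces the claim to showing $\sup_{\theta\in\Theta}\left|\frac{1}{T}\sum_{t=1}^T R_t(\theta)-G(\theta)\right|\overset{p}{\to}0$.

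\textbf{Step 2 (bracketing and sandwiching).} Fix $\epsilon>0$ and, using the hypothesis $N_{[]}(\epsilon,\mathcal{G}_\Theta,L_2(\mathcal{P},\pi_{e}))<\infty$, take finitely many brackets $[l_k,u_k]$, $k\in[N]$, with $\norm{u_k-l_k}_{L_2(\mathcal{P},\pi_{e})}<\epsilon$ and, for each $\theta$, an index $k(\theta)$ with $l_{k(\theta)}\le g_\theta\le u_{k(\theta)}$ pointwise. Because $R_t$ is linear in $g$ with a sign pattern fixed by the nonnegative evaluation probabilities, the nonnegative importance weights, and the single negative plug-in slot, I would sandwich $R_t^L(k(\theta))\le R_t(g_\theta)\le R_t^U(k(\theta))$, where $R_t^U(k)$ (resp.\ $R_t^L(k)$) is obtained by inserting $u_k$ into the positively weighted occurrences and $l_k$ into the negatively weighted one (and vice versa). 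It then suffices to control $\frac{1}{T}\sum_t R_t^U(k)$ and $\frac{1}{T}\sum_t R_t^L(k)$ over the finite index set.

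\textbf{Step 3 (martingale WLLN and width bound).} For each fixed $k$, $\frac{1}{T}\sum_{t=1}^T\left(R_t^U(k)-\E[R_t^U(k)\mid\mathcal{H}_{t-1}]\right)$ is a normalized martingale; its increments have second moments bounded uniformly in $t$ by \cref{assumption:bounded_importance} (which caps $1/\mathbb{P}(A_t=a\mid X_t,\mathcal{H}_{t-1})$ by $C_1$) together with the $L^2$ envelope of \cref{assumption:bracket_complexity} and the finiteness of $\norm{u_k}_{L_2(\mathcal{P},\pi_{e})}$, so a martingale Chebyshev bound gives convergence to $0$ in probability, simultaneously over the $N$ brackets. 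Combining with Step 1, the remaining deterministic gap is $\E[R_t^U(k)\mid\mathcal{H}_{t-1}]-G(\theta)$, whose outcome-slot contribution is at most $\E_{\mathcal{P},\pi_{e}}[u_k-l_k]\le\norm{u_k-l_k}_{L_2(\mathcal{P},\pi_{e})}<\epsilon$.

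\textbf{Main obstacle.} The residual piece left by the plug-in terms, namely $\E\left[\sum_{a}\pi_{e}(a\mid X_t)(u_k-l_k)(X_t,a,f_t(X_t,a))\mid\mathcal{H}_{t-1}\right]$, is where the real work lies: the $L_2(\mathcal{P},\pi_{e})$ bracket size measures the width only when the true outcome $Y$ sits in the third argument, whereas here the width is evaluated at the prediction $f_t(X_t,a)$, whose conditional law is not that of $Y$. This is precisely the role of \cref{assumption:bracket_complexity}: the envelope $u_{\mathcal{F}}$ dominates all plug-in evaluations in $L^2$ uniformly over $\theta\in\Theta$ and $f\in\mathcal{F}$, so the finitely many width functions are uniformly integrable and the residual can be driven to zero, e.g.\ by refining the brackets in a metric that also dominates plug-in evaluations or via a truncation argument using that $f_t$ targets a conditional mean and hence stays in the support of $\mathcal{P}$. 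I expect transferring control from the outcome metric $L_2(\mathcal{P},\pi_{e})$ to this prediction-evaluated quantity to be the crux; the outcome and importance-weighted pieces otherwise follow the classical bracketing Glivenko--Cantelli template adapted to martingale differences, after which taking $\epsilon\downarrow0$ and a maximum over the finite bracket set completes the argument.
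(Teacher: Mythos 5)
Your Steps 1 and 3 are sound and line up with the paper's ingredients: the identity $\E[R_t(\theta)\mid\mathcal{H}_{t-1}]=\E_{\mathcal{P},\pi_e}[g_\theta(X_t,A_t,Y_t)]$ is exactly what the paper uses, and the outcome slot $\frac{\pi_e(A_t\mid X_t)}{\mathbb{P}(A_t\mid X_t,\mathcal{H}_{t-1})}\,g_\theta(X_t,A_t,Y_t)$ is handled in the paper by precisely your recipe: finitely many brackets, the martingale weak law of large numbers (\cref{fact:martingale_wlln}) applied to each bracket function with second moments controlled via \cref{assumption:bounded_importance}, and an $\epsilon$ width bound. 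The genuine gap is that you stop at the crux rather than resolving it. Your global sandwich inserts brackets into the plug-in slots $g_\theta(a,X_t,f_t(a,X_t))$ and $g_\theta(X_t,A_t,f_t(X_t,A_t))$, which leaves you needing $\E\bigl[\sum_a\pi_e(a\mid X_t)\,(u_k-l_k)(X_t,a,f_t(X_t,a))\mid\mathcal{H}_{t-1}\bigr]$ to be small; as you yourself observe, the $L_2(\mathcal{P},\pi_e)$ bracket width gives no control over the width evaluated at the predictions $f_t$, and none of the lemma's hypotheses supplies such control (indeed $f_t(a,x)$ need not even lie in the support of $Y$, so even the pointwise validity of the brackets at the plug-in arguments is unclear). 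The fixes you gesture at --- refining the bracketing metric, or truncation --- are not carried out and are not available from the stated assumptions, so the proposal as written does not close.

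The missing idea is that the plug-in slots should never be bracketed at all. The paper decomposes $R_t(\theta)$ into the outcome-IPW piece plus the augmentation piece $\sum_a\pi_e(a\mid X_t)\bigl(1-\frac{\mathbbm{1}_{A_t=a}}{\mathbb{P}(A_t=a\mid X_t,\mathcal{H}_{t-1})}\bigr)g_\theta(X_t,a,f_t(X_t,a))$, and exploits that the multiplier $1-\mathbbm{1}_{A_t=a}/\mathbb{P}(A_t=a\mid X_t,\mathcal{H}_{t-1})$ has conditional mean zero given $(X_t,\mathcal{H}_{t-1})$ while $g_\theta(X_t,a,f_t(X_t,a))$ is $\sigma(X_t,\mathcal{H}_{t-1})$-measurable. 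The augmentation piece is therefore a martingale difference for \emph{every} $\theta$ simultaneously --- there is no $\theta$-dependent centering to approximate --- and uniformity over $\theta$ costs only the envelope $u_{\mathcal{F}}$ of \cref{assumption:bracket_complexity}, which dominates $\sup_\theta\lvert g_\theta\rvert$ at the plug-in arguments and supplies the square-integrability needed to invoke \cref{fact:martingale_wlln} once more. This is exactly why \cref{assumption:bracket_complexity} is phrased as an $L^2$ envelope over $\mathcal{F}$ rather than as a bracketing condition. If you restructure Step 2 so that brackets touch only the $Y_t$ slot and the augmentation term is dispatched by the envelope-plus-conditional-mean-zero argument, your proof closes and coincides with the paper's.
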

We can think of \cref{lemma:uniform_lln} as a uniform version of the martingale weak law of large numbers, specifically adjusted for our setting where machine learning methods are used to augment data collection.

\begin{assumption} \label{assumption:domination}
There exists a fixed, integrable function  $u_{m}(X_{t},A_{t},Y_{t})$ such that for some $\delta >0$, $$\sup_{\theta \in \Theta:\norm{\theta - \theta^{\star}} < \delta} \norm{\dddot{m}_{\theta}(X_{t},A_{t},Y_{t})}_{1} \le u_{m}(X_{t},A_{t},Y_{t}),$$
and $\E_{\mathcal{P},\pi_e}\left[ u_{m}(X_{t},A_{t},Y_{t})^{2} \right]$ is bounded. 
\end{assumption}

This assumption mirrors that of classical approaches to proving the normality of $M$-estimators \citep{van2000asymptotic} and is used to make sure that certain quantities related to the third derivative of the objective function remain bounded when using Taylor expansion to construct the confidence set. 

We are now ready to state the theorem. 

\begin{theorem} \label{thm:clt} 
Let $\theta^{\star}$ be defined as \cref{eqn:theta_star} and $s_{t,\theta}$ defined as in \cref{eqn:score}. Assume $V_{t,\theta^{\star}} := \mathbb{E}_{\mathcal{P},\pi_{t}}\left[ s_{t,\theta^{\star}}s_{t,\theta^{\star}}^{T} | \mathcal{H}_{t-1}\right]$ is almost surely invertible and that there exists a sequence of random matrices $\{\hat{V}_{t}\}_{t=1}^{T}$ adapted to the filtration $\sigma(\mathcal{H}_{t-1})$ such that $\norm{\hat{V}_{t}^{-1/2} - V_{t,\theta^{\star}}^{-1/2}}_{\text{op}} \overset{p}{\to} 0$. Assume that the eigenvalues of both $\hat{V}_{t}$ and $V_{t,\theta^{\star}}$ are bounded above and below by constants $\delta_{\text{min}}, \delta_{\text{max}}$.  Let $\hat{\theta}_{T}$ be any estimator such that  $\frac{1}{T}\sum_{t=1}^{T}\hat{V}_{t}^{-1/2} s_{t,\hat{\theta}_{T}} = o_{p}(1/\sqrt{T})$. 
Then under Assumptions \ref{assumption:potential_outcomes}-\ref{assumption:bracket_complexity},
$$\frac{1}{\sqrt{T}} \sum_{t=1}^{T}\hat{V}_{t}^{-1/2} \dot{s}_{t,\hat{\theta}_{T}} \left(\hat{\theta}_{T}- \theta^{\star} \right) \overset{d}{\to} N(0,I_{d}).$$
\end{theorem}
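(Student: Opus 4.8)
The plan is to reduce the claim to a martingale central limit theorem for the whitened score, via a Taylor expansion of the score equation around $\theta^\star$. The starting point is the pointwise identity obtained by expanding $s_{t,\theta^\star}$ about $\hat\theta_T$: for each $t$,
\[
\hat V_t^{-1/2}\dot s_{t,\hat\theta_T}(\hat\theta_T-\theta^\star) = \hat V_t^{-1/2} s_{t,\hat\theta_T} - \hat V_t^{-1/2} s_{t,\theta^\star} + \hat V_t^{-1/2} r_t,
\]
where $r_t$ is a second-order remainder, quadratic in $\hat\theta_T-\theta^\star$ and controlled by $\dddot m$ (equivalently $\ddot s$) evaluated at an intermediate point. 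Averaging with the $T^{-1/2}$ normalization splits the target into three pieces: (I) the weighted score at $\hat\theta_T$, (II) the weighted score at $\theta^\star$, and (III) the weighted remainder. I will show (I) and (III) are $o_p(1)$ and that $-$(II)$\overset{d}{\to}N(0,I_d)$; since the limit is symmetric the sign is immaterial. The single structural fact underlying everything, which I would verify first, is that the AIPW form of \cref{eqn:score} forces $\E[s_{t,\theta}\mid\mathcal H_{t-1}]=\E_{\mathcal P,\pi_e}[\dot m_\theta]$ and $\E[\dot s_{t,\theta}\mid\mathcal H_{t-1}]=\E_{\mathcal P,\pi_e}[\ddot m_\theta]$, both history-independent (the $f_t$ contributions and the propensity denominators cancel in conditional expectation), so that $s_{t,\theta^\star}$ is a martingale difference sequence by \cref{assumption:maximum}(1).

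Next I would establish consistency $\hat\theta_T\overset{p}{\to}\theta^\star$ and then the rate $\norm{\hat\theta_T-\theta^\star}=O_p(T^{-1/2})$. Consistency combines the defining near-root property of $\hat\theta_T$ with the uniform law of large numbers of \cref{lemma:uniform_lln} applied to the score class $\dot{\mathcal M}_\Theta$ (finite bracketing by \cref{assumption:bracket3}) and the well-separation of \cref{assumption:maximum}(3). For the rate, I linearize the near-root equation about $\theta^\star$: up to $o_p(1)$, one obtains $\tfrac1{\sqrt T}\sum_t\hat V_t^{-1/2}s_{t,\theta^\star}+J_T\sqrt T(\hat\theta_T-\theta^\star)=o_p(1)$ with $J_T=\tfrac1T\sum_t\hat V_t^{-1/2}\dot s_{t,\theta^\star}$. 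Using \cref{lemma:uniform_lln} and equicontinuity, $J_T$ is close to $(\tfrac1T\sum_t\hat V_t^{-1/2})\,\E_{\mathcal P,\pi_e}[\ddot m_{\theta^\star}]$, which is invertible with smallest singular value bounded away from zero because $-\E_{\mathcal P,\pi_e}[\ddot m_{\theta^\star}]\succeq H\succ 0$ (\cref{assumption:maximum}(2)) and $V_{t,\theta^\star}^{-1/2}$ has eigenvalues in $[\delta_{\text{max}}^{-1/2},\delta_{\text{min}}^{-1/2}]$. Since the score term is $O_p(1)$, this yields the $\sqrt T$-rate.

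For term (II) I would first replace $\hat V_t^{-1/2}$ by $V_{t,\theta^\star}^{-1/2}$. The difference $\tfrac1{\sqrt T}\sum_t(\hat V_t^{-1/2}-V_{t,\theta^\star}^{-1/2})s_{t,\theta^\star}$ is itself a martingale-difference sum, since its coefficient is $\mathcal H_{t-1}$-measurable, and its aggregated conditional covariance $\tfrac1T\sum_t(\hat V_t^{-1/2}-V_{t,\theta^\star}^{-1/2})V_{t,\theta^\star}(\hat V_t^{-1/2}-V_{t,\theta^\star}^{-1/2})^{T}$ is $o_p(1)$ because $\norm{\hat V_t^{-1/2}-V_{t,\theta^\star}^{-1/2}}_{\text{op}}\overset{p}{\to}0$ and $V_{t,\theta^\star}$ has bounded eigenvalues; a conditional Chebyshev bound then kills this term. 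The remaining $\tfrac1{\sqrt T}\sum_t V_{t,\theta^\star}^{-1/2}s_{t,\theta^\star}=\tfrac1{\sqrt T}\sum_t Z_t$ is a normalized MDS sum whose conditional covariance $\E[Z_tZ_t^{T}\mid\mathcal H_{t-1}]=V_{t,\theta^\star}^{-1/2}V_{t,\theta^\star}V_{t,\theta^\star}^{-1/2}=I_d$ exactly, so the aggregated conditional covariance $\tfrac1T\sum_t I_d=I_d$ is deterministic and the only remaining hypothesis of the multivariate martingale CLT is a conditional Lindeberg condition. I would verify the latter from \cref{assumption:bounded_importance} (bounded importance ratios), the eigenvalue bounds on $V_{t,\theta^\star}$, and the second-moment bounds of \cref{assumption:bracket_complexity} and \cref{assumption:domination}, which give uniform integrability of $\norm{Z_t}^2$. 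This delivers (II)$\overset{d}{\to}N(0,I_d)$.

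Finally, term (I) vanishes by the defining near-root property of $\hat\theta_T$ at the $T^{-1/2}$ scale, and term (III) is bounded by $\sqrt T\norm{\hat\theta_T-\theta^\star}^2\cdot\tfrac1T\sum_t\norm{\hat V_t^{-1/2}}_{\text{op}}\,u_m(X_t,A_t,Y_t)$ once $\hat\theta_T$ lies in the $\delta$-ball of \cref{assumption:domination} (true with high probability by consistency); the eigenvalue bound gives $\norm{\hat V_t^{-1/2}}_{\text{op}}\lesssim\delta_{\text{min}}^{-1/2}$, a martingale LLN gives $\tfrac1T\sum_t u_m=O_p(1)$, and the $O_p(T^{-1/2})$ rate yields $\sqrt T\norm{\hat\theta_T-\theta^\star}^2=o_p(1)$, so (III)$=o_p(1)$. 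Assembling the three pieces via Slutsky's theorem gives the result. I expect the main obstacle to be the rate/linearization step: proving stochastic equicontinuity of the martingale empirical process $\theta\mapsto\tfrac1{\sqrt T}\sum_t\hat V_t^{-1/2}(s_{t,\theta}-\E[s_{t,\theta}\mid\mathcal H_{t-1}])$ near $\theta^\star$, uniformly over the history-dependent weights $\hat V_t^{-1/2}$ and the adaptively trained models $f_t$, so as to justify replacing the intermediate Jacobian by its predictable mean. This is precisely where \cref{assumption:bracket3}, \cref{assumption:bracket_complexity}, and the augmented uniform law of \cref{lemma:uniform_lln} do the heavy lifting, and where the non-converging policy $\pi_t$ makes the standard i.i.d.\ empirical-process machinery inapplicable.
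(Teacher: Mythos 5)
Your decomposition is essentially the paper's: a Taylor expansion of $s_{t,\theta^{\star}}$ around $\hat{\theta}_{T}$, the near-root property to kill the score at $\hat{\theta}_{T}$, a martingale CLT for the whitened score at $\theta^{\star}$ (including the same reduction of $\hat{V}_{t}^{-1/2}$ to $V_{t,\theta^{\star}}^{-1/2}$ so that the aggregated conditional covariance is exactly $I_{d}$), consistency via the uniform martingale LLN of \cref{lemma:uniform_lln} together with the well-separation conditions of \cref{assumption:maximum}, and \cref{assumption:domination} plus \cref{assumption:bracket_complexity} to dominate the third-derivative terms. The one place you genuinely diverge is the treatment of the quadratic remainder. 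You bound it by $\sqrt{T}\,\norm{\hat{\theta}_{T}-\theta^{\star}}^{2}\cdot O_{p}(1)$ and therefore need the rate $\norm{\hat{\theta}_{T}-\theta^{\star}}=O_{p}(T^{-1/2})$, which you propose to obtain by linearizing the near-root equation and invoking stochastic equicontinuity of the score empirical process at the $T^{-1/2}$ scale. The paper never establishes a rate: it factors the expansion as
\[
-\frac{1}{\sqrt{T}}\sum_{t}\hat{V}_{t}^{-1/2}s_{t,\theta^{\star}}
=\frac{1}{\sqrt{T}}\Bigl(I_{d}+\tfrac{1}{2}\sum_{t}\hat{V}_{t}^{-1/2}(\hat{\theta}_{T}-\theta^{\star})^{T}\ddot{s}_{t,\tilde{\theta}_{t}}\bigl(\textstyle\sum_{t}\hat{V}_{t}^{-1/2}\dot{s}_{t,\hat{\theta}_{T}}\bigr)^{-1}\Bigr)\sum_{t}\hat{V}_{t}^{-1/2}\dot{s}_{t,\hat{\theta}_{T}}(\hat{\theta}_{T}-\theta^{\star}),
\]
so one copy of $\hat{\theta}_{T}-\theta^{\star}$ is absorbed into the target statistic, the correction matrix is $o_{p}(1)\cdot O_{p}(1)=o_{p}(1)$ by consistency alone, and Slutsky finishes. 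This matters because the equicontinuity you lean on is not delivered by the stated assumptions: \cref{assumption:bracket3} gives only finite bracketing numbers for each $\epsilon$, not a finite bracketing integral, and \cref{lemma:uniform_lln} is a law of large numbers at the $1/T$ scale, not a Donsker-type result at the $1/\sqrt{T}$ scale --- a difficulty you yourself flag as the main obstacle but do not resolve. Your route is repairable without equicontinuity (expand the near-root equation to second order around $\theta^{\star}$ and bootstrap: $\lambda_{\min}(J_{T})\sqrt{T}\norm{\hat{\theta}_{T}-\theta^{\star}}\le O_{p}(1)+o_{p}(1)\cdot\sqrt{T}\norm{\hat{\theta}_{T}-\theta^{\star}}$ yields the rate from consistency and invertibility of the Jacobian), but as written that step is the gap; the paper's factoring shows the rate is not needed at all.
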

Note that letting $\frac{1}{T}\sum_{t=1}^{T}\hat{V}_{t}^{-1/2} s_{t,\hat{\theta}_{T}} = o_{p}(1/\sqrt{T})$ instead of exactly zero ensures that the results are still valid when using an approximate root finding algorithm instead of an exact root. As an illustrative example, we first demonstrate the application of \cref{thm:clt} in the case of a linear model. 

\subsection*{An illustration: correctly specified models}

Let us first consider the case of correct specification, where $Y_{t} = Z_{t}^{T}\theta^{*}  +\epsilon_{t}$ and $\epsilon_{t}$ are independent Gauassian noise with $\epsilon_{t} \sim N(0,\sigma_{t}^{2})$. We assume $Z_{t} \in \sigma(\mathcal{H}_{t-1})$ is dependent on user choices based on a shared history $\mathcal{H}_{t-1} := \{ Z_{1},Y_{1},...,Z_{t-1},Y_{t-1}\}$. In our framework, we can think of $Z_{t} = \phi(A_t)$ as some deterministic function of the observed actions (e.g., one-hot encoding). 

Letting $\Sigma_{T} := \sum_{t=1}^{T} Z_{t}Z_{t}^{T}$, $\lambda_{\text{min}}(\Sigma_{T})$ represent the minimum eigenvalue of $\Sigma_T$, and $\lambda_{\text{max}}(\Sigma_{T})$ represent the maximum eigenvalue of $\Sigma_T$, \cite{1982laiwei} show that a sufficient condition for the OLS estimator $\tilde{\theta}_{T}:= \Sigma_{T}^{-1} \sum_{t=1}^{T} Z_{t}Y_{t} $ to converge to $\theta$ almost surely is that $\lambda_{\text{min}}(\Sigma_{T}) \overset{\text{a.s.}}{\to} \infty $ and $\frac{\log\lambda_{\text{max}} (\Sigma_{T})}{ \lambda_{\text{min}}(\Sigma_{T})} \overset{\text{a.s.}}{\to} 0 $. However, the requirement for asymptotic normality is much stronger. They require that a deterministic series of matrices $\{B_{T} \}_{T=1}^{\infty}$ exist such that $B_{T}^{-1} (\sum_{t=1}^{T} Z_{t}Z_{t}^{T})^{1/2} \overset{p}{\to} I_{p}$. 

Suppose that we apply \cref{thm:clt} in the case where $\tilde{\theta}_{T}$ is consistent but the asymptotic normality condition does not apply. We can still form a confidence interval for $\theta^*$ as follows. For simplicity, we assume that $f_{t} = c$ for some constant $c$ so the score function corresponds to the ordinary least squares loss without augmented components. In this case, we have $s_{t,\theta^{\star}} = - 2w_{t}(Y_{t} - Z_{t}^{T}\theta^{*} )Z_{t}$ with $w_{t} =\frac{\pi_{e}(A_{t})}{ \mathbb{P} (A_{t}  | \mathcal{H}_{t-1}) }$. Moreover, $$V_{t,\theta^{\star}} = \E\left[s_{t,\theta^{\star}}s_{t,\theta^{\star}}^{T} | \mathcal{H}_{t-1} \right] =\Var\left(Z_{t}^{T}(Y_{t} - \theta^{*}Z_{t}^{T})|\mathcal{H}_{t-1} \right) = \sigma_{t}^{2} \E \left[w_{t}^{2}Z_{t}Z_{t}^{T} |\mathcal{H}_{t-1} \right],$$ noting that $\text{Var} \left(Y_{t} | Z_{t} \right) =  \text{Var} \left(\epsilon_{t} |\mathcal{H}_{t-1}\right) =\sigma_{t}^{2}$ and $\dot{s}_{t,\theta} = w_{t}Z_{t}Z_{t}^{T}$.

Note that we can obtain a closed form solution by solving for $\hat{\theta}_{T}$ in the equation $$0=\frac{1}{T}\sum_{t=1}^{T} V_{t,\theta^{\star}}^{-1/2} s_{t,\hat{\theta}} =\frac{1}{T}\sum_{t=1}^{T}-\left(w_{t}^{2}\sigma_{t}^{2} \E \left[Z_{t}Z_{t}^{T} |\mathcal{H}_{t-1} \right]\right)^{-1/2}w_{t}(Y_{t} - Z_{t}^{T}\theta^{*} )Z_{t}.$$ Rearranging terms yields
$$\hat{\theta}_{T}= \left(\sum_{t=1}^{T} \sigma_{t}^{-1}\E \left[w_{t}^{2}Z_{t}Z_{t}^{T} |\mathcal{H}_{t-1} \right]^{-1/2} Z_{t}Z_{t}^{T}\right)^{-1} \left(\sum_{t=1}^{T}\sigma_{t}^{-1}w_t^{-1}\E \left[w_{t}^{2}Z_{t}Z_{t}^{T} |\mathcal{H}_{t-1} \right]^{-1/2} Z_{t}Y_{t}\right).$$
The formation of confidence intervals using \cref{thm:clt} yields, 
$$\frac{1}{\sqrt{T}}\sum_{t=1}^{T} w_{t} \sigma_{t}^{-1} \E \left[w_{t}^{2}Z_{t}Z_{t}^{T} |\mathcal{H}_{t-1} \right]^{-1/2} Z_{t}Z_{t}^{T} \left(  \hat{\theta}_{T}- \theta^{\star}\right) \overset{d}{\to} N(0,I_{p}).$$
In the case where the model is specified correctly, $\hat{\theta}_{T}$ is a consistent estimate of $\theta^{\star}$ for any choices of $w_{t}$ that follow \cref{assumption:bounded_importance}. Therefore, the choice of $w_{t}$ becomes strictly a question of asymptotic efficiency. We can illustrate this with a few special cases
\begin{itemize}
  \item Suppose $\E \left[Z_{t}Z_{t}^{T} |\mathcal{H}_{t-1} \right]$ is independent of $\mathcal{H}_{t-1}$, then $\frac{1}{T}\sum_{t=1}^{T} Z_{t}Z_{t}^{T}$ becomes a consistent estimate for $\E \left[Z_{t}Z_{t}^{T}\right]$. 
  \begin{itemize}
      \item When $\sigma_{t}$ is constant across $t$, it is optimal to choose $w_{t} = 1$. This reduces to the ordinary least squares solution (i.e., $\tilde{\theta}_{T}= \hat{\theta}_{T}$) and the confidence intervals become standard.
      \item  When $\sigma_{t}$ is not constant, it is optimal to choose $w_{t} \propto \frac{1}{\sigma_t^{2}}$. This reduces to the case of weighted least squares, and the confidence intervals again become standard. 
  \end{itemize}
  \item When each component of $Z_{t}$ corresponds to an indicator variable (i.e. $Z_{t,j} = 1_{A_t =j}$), then $\E \left[w_{t}^{2}Z_{t}Z_{t}^{T} |\mathcal{H}_{t-1} \right] = \diag\left(w_{t}^{2}\mathbb{P}\left(Z_{t,j} = 1 | \mathcal{H}_{t-1} \right)\right)$. In this case, the CLT simplifies to:
  $$\frac{1}{\sqrt{T}}\sum_{t=1}^{T} \sigma_{t}^{-1/2}\text{diag}\left(\{\mathbb{P}(A_{t} = a|\mathcal{H}_{t=1})^{-1/2}\}_{a \in \mathcal{A}}\right)\left(  \hat{\theta}_{T}- \theta^{\star}\right)  \overset{d}{\to} N(0,I_{p}).$$

\end{itemize}

Of course, this example is unrealistic because it assumes oracle knowledge of $V_{t,\theta^{\star}} = \E[w_{t}^{2}Z_{t}Z_{t}^{T}]$. In practice, the key challenge in applying \cref{thm:clt} is to find a reliable method to estimate this quantity. In the next section, we tackle the problem of finding practical methods to construct a sequence of estimators $\{\hat{V}_{t}\}_{t=1}^{T}$ that  can be used as plug-ins.  



\section{Practical Strategies for Covariance Estimation} \label{sec:covariance_estimation}
Our strategy for building an estimator of the variance is to decompose $\Var(s_{t,\theta^{\star}} \mid \mathcal{H}_{t-1})$ into terms that are either independent of the history or known to the experimenter and then find empirical estimators of these quantities from external data, similar in spirit to \cite{kato2020standardized}.

\begin{proposition} \label{prop:variance_decomp}
Recalling definitions for $\mathcal{H}_{t-1}$, $V_{t,\theta^{\star}}$, and $\theta^{\star}$ defined previously,
\begin{align*}
\Var(s_{t,\theta^{\star}} \mid \mathcal{H}_{t-1})
&=\Var\Big(\E_{A_{t} \sim \pi_{e}}  \left[\dot{m}_{\theta^{\star}}(X_t,A_{t},Y_t) \mid X_t\right]\Big) + \\
&\qquad \E \Bigg[\sum_{a=1}^{K} \frac{\pi_e(a \mid X_t)^2}{\pi_t(a|X_t)} 
\E[\dot{m}_{\theta^{\star}}(X_t,a,Y_t(a))\dot{m}_{\theta^{\star}}(X_t,a,Y_t(a))^{T} \mid X_t] \mid \mathcal{H}_{t-1}\Bigg] - \\
 & \qquad \E\left[\E_{A_{t} \sim \pi_{e}}  \left[\dot{m}_{\theta^{\star}}(X_t,A_{t},Y_t) \mid X_t\right]\E_{A_{t} \sim \pi_{e}}  \left[\dot{m}_{\theta^{\star}}(X_t,A_{t},Y_t) \mid X_t\right]^{T} \right].
\end{align*}
\end{proposition}

This decomposition allows the dependence on $\mathcal{H}_{t-1}$ to be explicitly decomposed into pieces that are known conditional on $\mathcal{H}_{t-1}$ (action-selection probabilities) and terms that are independent of history. It is still challenging to use this decomposition without knowing the potential outcome distribution $(X_{t}, Y_{t}(a))$. To bridge this gap, we assume that we can sample freely from the marginal distribution of $X_{t}$ and then estimate the first and second moments of the conditional distribution $Y_{t}(a) | X_{t}$ using flexible machine learning approaches. In many applications, we may have access to an independent data set of features that can be used for this purpose. However, if this is unavailable, we can also use sample splitting to use the features from the experiment itself. We briefly discuss both of these strategies below.

\subsection{Using external data} \label{subsec:external}
We first assume access to an external dataset composed of observations of $X$ independent of the history but with the same marginal distribution as $X_{t}$. 

\begin{assumption} \label{assumption:external}
There exists an external data set $\tilde{X} := \{ \tilde{X}_{i}\}_{i=1}^{n}$, independent of $\mathcal{H}_{t}$ for all $t \in T$, where $\tilde{X}_{i} \overset{\mathrm{iid}}{\sim} p(x)$ and $n = [rT]$ for some fixed $r \in (0,\infty)$. 
\end{assumption}

We also assume access to models targeting the conditional means and variance that can be used to create a plug-in estimate of $\Var \left(s_{t,\theta^{\star}} |\Hist_{t-1} \right)$.

\begin{assumption} \label{assumption:converges_var} We can construct functions $g_{t} : \mathbb{R}^{p} \times \mathcal{A} \to \mathbb{R}^{d}$ and $h_{t} : \mathbb{R}^{p}\times \mathcal{A}  \to \mathbb{R}^{d \times d}$ adapted to $\sigma(\mathcal{H}_{t-1})$ such that 
$g_{t}(X_{t},a) - \E\left[\dot{m}_{\theta^{\star}}(X_t,A_t,Y_t) |X_{t},A_t=a \right]\overset{p}{\to} 0$ for all $a \in \mathcal{A}$ and $$h_{t}(X_{t},a) - \E \left[  \dot{m}_{\theta^{\star}}(X_t,A_t,Y_{t} )\dot{m}_{\theta^{\star}}(X_t,A_t,Y_{t} )^{T} | X_{t},A_t=a \right] \overset{p}{\to} 0$$ for all $a \in \mathcal{A}$. 
\end{assumption}
We postpone discussion of how to construct the functions $g_t$ and $h_t$ to \cref{prop:glm}. In this section, we assume the existence of these quantities and use them to construct an estimate for $\Var \left(s_{t,\theta^{\star}} |\Hist_{t-1} \right)$. 
\begin{proposition} \label{prop:estimator}
Define the functions
\begin{equation} \label{equation:nu_t}
  \hat{\nu}_{t}(X_i) = \sum_{a=1}^{K} \pi_{e}(A_{t} = a \mid X_{t}) g_{t}(a,X_i),
\end{equation}
and $\bar{\nu} = \frac{1}{n} \sum_{X_{i} \in \tilde{X}} \hat{\nu}_{t}(X_i)$. Define: 
\begin{align*}
\hat{V}_{t} &= \frac{1}{n - 1} 
\Biggl( \sum_{X_i \in \mathcal{\tilde{X}}} \hat{\nu}_t(X_i)
- \bar{\nu} \Biggr)
\Biggl( \sum_{X_i \in \mathcal{\tilde{X}}} \hat{\nu}_t(X_i) 
- \bar{\nu} \Biggr)^{T} \\
&\quad + \frac{1}{n} \sum_{X_i \in \mathcal{\tilde{X}}} \sum_{a=1}^{K} 
\frac{\pi_e(A_t = a \mid X_i)^2 \, h_t(X_i,a)}{\mathbb{P}(A_t = a \mid X_i, \mathcal{H}_{t-1})} 
- \frac{1}{n} \sum_{X_i \in \mathcal{\tilde{X}}}  \hat{\nu}_{t}(X_i)\hat{\nu}_{t}(X_i)^{T}.
\end{align*}
Then, under Assumptions \ref{assumption:potential_outcomes}-\ref{assumption:converges_var}, 
$\norm{\hat{V}_{t} - V_{t,\theta^{\star}}}_{\text{op}} \overset{p}{\to} 0$.
\end{proposition}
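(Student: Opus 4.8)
The plan is to reduce the claim to two ingredients: the exact variance decomposition of \cref{prop:variance_decomp}, and a conditional law of large numbers over the external sample. Writing
$\nu_t^{\star}(x) := \sum_{a} \pi_e(a\mid x)\,\E[\dot{m}_{\theta^{\star}}(x,a,Y_t)\mid X_t=x, A_t=a]$
and recalling that the first-order condition in \cref{assumption:maximum} forces $\E[\nu_t^{\star}(X_t)]=0$, \cref{prop:variance_decomp} lets me write $V_{t,\theta^{\star}} = A_t + B_t$ with $A_t = \Var[\nu_t^{\star}(X_t)] = \E[\nu_t^{\star}(X_t)\nu_t^{\star}(X_t)^{T}]$ and $B_t$ the expected conditional (reweighted) variance. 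The estimator $\hat{V}_t$ is built to estimate these two pieces separately: the sample-covariance term targets $A_t$ through the plug-in $\hat{\nu}_t$ (with $g_t$ standing in for the conditional mean), while the weighted second-moment term targets $B_t$ through $h_t$ and the known propensities. Since all objects live in $\R^{d\times d}$ with $d$ fixed, it suffices to prove entrywise convergence, from which operator-norm convergence follows (e.g.\ via the Frobenius norm).

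First I would introduce an \emph{oracle} version $\hat{V}_t^{\mathrm{or}}$ of $\hat{V}_t$ in which $g_t(a,\cdot)$ and $h_t(a,\cdot)$ are replaced by the true conditional moments $\E[\dot{m}_{\theta^{\star}}\mid X_t, A_t=a]$ and $\E[\dot{m}_{\theta^{\star}}\dot{m}_{\theta^{\star}}^{T}\mid X_t, A_t=a]$. The difference $\hat{V}_t - \hat{V}_t^{\mathrm{or}}$ is a finite sum over $a\in\mathcal{A}$ of empirical averages whose summands are products of an estimation error ($g_t-g^{\star}$ or $h_t-h^{\star}$) with bounded factors: the evaluation weights $\pi_e(a\mid\cdot)^2$ are at most one, and the inverse propensities are uniformly bounded by $C_1$ under \cref{assumption:bounded_importance}. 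Using \cref{assumption:converges_var} to get $g_t-g^{\star}\overset{p}{\to}0$ and $h_t-h^{\star}\overset{p}{\to}0$, and splitting the quadratic term through the identity $ab^{T}-a^{\star}(b^{\star})^{T} = (a-a^{\star})b^{T} + a^{\star}(b-b^{\star})^{T}$, I conclude $\norm{\hat{V}_t-\hat{V}_t^{\mathrm{or}}}_{\text{op}}\overset{p}{\to}0$; the square-integrable envelopes of \cref{assumption:bracket_complexity,assumption:domination} are what upgrade pointwise consistency of the nuisances to convergence of the corresponding averages.

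Next I would show $\hat{V}_t^{\mathrm{or}}-V_{t,\theta^{\star}}\overset{p}{\to}0$. Conditioning on $\mathcal{H}_{t-1}$, the oracle nuisance functions, the propensities, and $\nu_t^{\star}$ are all fixed, while the points $X_i$ over which we average are i.i.d.\ draws from $p(x)$ independent of $\mathcal{H}_{t-1}$ by \cref{assumption:external}. Each entry of $\hat{V}_t^{\mathrm{or}}$ is then an empirical average of i.i.d.\ summands whose conditional mean equals the corresponding entry of $V_{t,\theta^{\star}} = A_t+B_t$, invoking \cref{prop:variance_decomp} once more together with $\E[\nu_t^{\star}(X_t)]=0$ (the Bessel correction $1/(n_r-1)$ makes the covariance term exactly unbiased for $A_t$). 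A conditional Chebyshev bound, with conditional variances of order $1/n_r$ bounded uniformly in the realization of $\mathcal{H}_{t-1}$ by the same square-integrable envelopes, followed by bounded convergence, gives the unconditional $\hat{V}_t^{\mathrm{or}}-V_{t,\theta^{\star}}\overset{p}{\to}0$. Combining the two steps by the triangle inequality yields $\norm{\hat{V}_t-V_{t,\theta^{\star}}}_{\text{op}}\overset{p}{\to}0$.

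The main obstacle is the nuisance-error control in the oracle-replacement step: propagating in-probability consistency of $g_t,h_t$ through the nonlinear operations defining $\hat{V}_t$ — the outer product $\hat{\nu}_t\hat{\nu}_t^{T}$ is quadratic in $g_t$, and $h_t$ is multiplied by the random, history-dependent weights $1/\mathbb{P}(A_t=a\mid X_i,\mathcal{H}_{t-1})$. Pointwise consistency alone does not transfer to the averages, so I need the uniform boundedness of the importance ratios (\cref{assumption:bounded_importance}) and the square-integrable dominating functions (\cref{assumption:bracket_complexity,assumption:domination}) both to bound the cross terms and to ensure the relevant second moments are finite, so that the error averages genuinely vanish. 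A secondary delicacy is that $V_{t,\theta^{\star}}$ is itself random and $t$ may grow with $T$; the entire argument must therefore be carried out conditionally on $\mathcal{H}_{t-1}$ with bounds uniform over its realization, which is precisely what the envelope assumptions supply.
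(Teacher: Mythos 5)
Your proposal is correct and follows essentially the same route as the paper: your ``oracle replacement'' step is exactly the paper's add-and-subtract of $\nu_t^{\star}$ (and of the true conditional moments in the $g_t$ and $h_t$ terms), and your second step is the paper's application of the weak law of large numbers to the i.i.d.\ external sample conditionally on $\mathcal{H}_{t-1}$, with the limit identified via \cref{prop:variance_decomp}. If anything, your treatment is slightly more careful than the paper's in spelling out how the envelope and bounded-importance conditions let pointwise consistency of $g_t,h_t$ propagate through the quadratic and reweighted terms.
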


Of course, it is far from clear how to construct $g_t$ and $h_t$ so that \cref{assumption:converges_var} is satisfied. In the remainder of this section, we provide a construction in the case of certain classes of objective functions $m_{\theta}$ and demonstrate that generalized linear models are a special case of this. 

\begin{proposition} \label{prop:glm}
Assume you have access to an estimator $\bar{\theta}_{T} \in \sigma(\mathcal{H}_{T})$ such that $ \bar{\theta}_{T} \overset{p}{\to} \theta^{\star}$. Assume that you can construct a sequence of continuous functions $f_{t} : \mathbb{R} \times \mathcal{A} \to \mathbb{R}$ and $j_{t} : \mathbb{R}\times \mathcal{A}  \to \mathbb{R}$ adapted to $\sigma(\mathcal{H}_{t-1})$ such that $f_{t}(X_{t},a) - \E\left[Y_t |X_{t},A_t = a \right]\overset{p}{\to} 0$  and  $j_{t}(X_{t},a) - \text{Var}\left[Y_t |X_{t},A_t = a \right]\overset{p}{\to} 0$. Furthermore, assume that $\dot{m}_{\theta}$ is linear in $y$ such that it can be decomposed as  $\dot{m}_{\theta}(x,a,y) = z_{\theta}(x,a)y + \nu_{\theta}(x,a)$ for some continuous functions $z,\nu$. Then, defining
$$g_{t}(x,a) = z_{\bar{\theta}_{T}}(x,a) f_t(x,a) + \nu_{\bar{\theta}_{T}}(x,a) $$
$$h_{t}(x,a) = z_{\bar{\theta}_{T}}(x,a)z_{\bar{\theta}_{T}}(x,a)^{T} j_t(x,a),$$
will satisfy the conditions of \cref{assumption:converges_var}.
\end{proposition}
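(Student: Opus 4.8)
The plan is to verify the two convergence requirements of \cref{assumption:converges_var} directly, exploiting the linearity of $\dot{m}_{\theta}$ in $y$ to reduce everything to the consistency of $f_t$, $e_t$, and $\bar{\theta}_{T}$ together with the continuity of $z_{\theta}$ and $\nu_{\theta}$ in $\theta$. Throughout I treat $\bar{\theta}_{T}$ as the consistent plug-in estimator appearing in both coefficients of $g_t$ and $h_t$.

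For the first-moment condition, I would start from the decomposition $\dot{m}_{\theta^{\star}}(x,a,y) = z_{\theta^{\star}}(x,a)\,y + \nu_{\theta^{\star}}(x,a)$ and take the conditional expectation, which by linearity gives
\[
\E\!\left[\dot{m}_{\theta^{\star}}(X_t,A_t,Y_t)\mid X_t, A_t=a\right] = z_{\theta^{\star}}(X_t,a)\,\E[Y_t\mid X_t,A_t=a] + \nu_{\theta^{\star}}(X_t,a).
\]
Subtracting this from $g_{t}(a,X_t)=z_{\bar{\theta}_{T}}(X_t,a)f_t(a,X_t)+\nu_{\bar{\theta}_{T}}(X_t,a)$ and adding and subtracting $z_{\bar{\theta}_{T}}(X_t,a)\E[Y_t\mid X_t,A_t=a]$, I would split the difference into three pieces: (i) $z_{\bar{\theta}_{T}}(X_t,a)\big(f_t(a,X_t)-\E[Y_t\mid X_t,A_t=a]\big)$, (ii) $\big(z_{\bar{\theta}_{T}}(X_t,a)-z_{\theta^{\star}}(X_t,a)\big)\E[Y_t\mid X_t,A_t=a]$, and (iii) $\nu_{\bar{\theta}_{T}}(X_t,a)-\nu_{\theta^{\star}}(X_t,a)$. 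Piece (i) is a product of a factor that is bounded in probability (since $\bar{\theta}_{T}\overset{p}{\to}\theta^{\star}$ and $z$ is continuous) and the $o_p(1)$ estimation error of $f_t$, hence $o_p(1)$ by Slutsky; pieces (ii) and (iii) are $o_p(1)$ by the continuous mapping theorem applied to $\theta\mapsto z_\theta,\nu_\theta$, using boundedness in probability of the conditional mean to control (ii).

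For the second-moment condition, I would expand
\[
\dot{m}_{\theta^{\star}}\dot{m}_{\theta^{\star}}^{T} = z_{\theta^{\star}}z_{\theta^{\star}}^{T}\,Y_t^{2} + \big(z_{\theta^{\star}}\nu_{\theta^{\star}}^{T}+\nu_{\theta^{\star}}z_{\theta^{\star}}^{T}\big)Y_t + \nu_{\theta^{\star}}\nu_{\theta^{\star}}^{T},
\]
so that the conditional second moment of the score involves only $\E[Y_t^{2}\mid X_t,A_t=a]$, $\E[Y_t\mid X_t,A_t=a]$, and coefficients that are deterministic functions of $\theta$ and $X_t$. Matching these against $e_t$, $f_t$, and the plug-in $\bar{\theta}_{T}$, I would show that $h_t(x,a)$ converges to this target by the same add-and-subtract argument: each coefficient $z_{\bar{\theta}_{T}}z_{\bar{\theta}_{T}}^{T}$, $z_{\bar{\theta}_{T}}\nu_{\bar{\theta}_{T}}^{T}$, and so on converges in probability to its $\theta^{\star}$ counterpart by the continuous mapping theorem, while $e_t$ and $f_t$ converge to the true conditional moments; sums and products of $O_p(1)$ and $o_p(1)$ terms then give $o_p(1)$.

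The main obstacle will be that $X_t$ is itself random, so pointwise continuity of $z_\theta,\nu_\theta$ in $\theta$ does not by itself yield $z_{\bar{\theta}_{T}}(X_t,a)-z_{\theta^{\star}}(X_t,a)\overset{p}{\to}0$. To close this gap cleanly I would either verify that $z_\theta,\nu_\theta$ are equicontinuous in $\theta$ uniformly over the support of $X$ (automatic when that support is compact and the maps are jointly continuous), or argue along subsequences: every subsequence of $\bar{\theta}_{T}$ has a further a.s.-convergent subsequence, along which continuity gives the desired pointwise limit, and tightness of the conditional moments $\E[Y_t\mid X_t,A_t=a]$ and $\E[Y_t^{2}\mid X_t,A_t=a]$ upgrades this to convergence in probability. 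Establishing this tightness of the conditional moments — and hence boundedness in probability of the cross terms that multiply them — is the one place where a genuine regularity input (finite conditional moments of $Y$ under $\mathcal{P}$) is needed beyond the stated hypotheses.
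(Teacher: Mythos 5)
Your first-moment argument is essentially the paper's: the same add-and-subtract decomposition of $g_t(a,X_t)-\E[\dot m_{\theta^\star}\mid X_t,A_t=a]$ into the $f_t$-error term, the $z_{\bar\theta_T}-z_{\theta^\star}$ term, and the $\nu_{\bar\theta_T}-\nu_{\theta^\star}$ term, each handled by consistency of the plug-ins and the continuous mapping theorem. Your extra care about applying continuity of $\theta\mapsto z_\theta(x,a)$ at a random $X_t$ (via equicontinuity or the subsequence trick) is a genuine tightening; the paper simply invokes the continuous mapping theorem, which works under joint continuity of $z$ and $\nu$ but is left implicit there.

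The second-moment step is where your plan would run aground, and your own expansion shows why. You correctly compute $\E[\dot m_{\theta^\star}\dot m_{\theta^\star}^T\mid X_t,A_t=a]=zz^T\E[Y^2\mid\cdot]+(z\nu^T+\nu z^T)\E[Y\mid\cdot]+\nu\nu^T$, but the estimator defined in \cref{prop:glm} is only $h_t=z_{\bar\theta_T}z_{\bar\theta_T}^T e_t$, which converges to $z_{\theta^\star}z_{\theta^\star}^T\E[Y^2\mid\cdot]$ and therefore misses the cross terms and the $\nu\nu^T$ term whenever $\nu_{\theta^\star}\neq 0$ (the generic GLM case, where $\nu_\theta=-\psi(\theta^Tz)z$). ``Matching these against $e_t$, $f_t$ and the plug-in'' implicitly requires augmenting $h_t$ to $zz^Te_t+(z\nu^T+\nu z^T)f_t+\nu\nu^T$, all evaluated at $\bar\theta_T$; as written, the step ``I would show that $h_t$ converges to this target'' fails for the literal $h_t$. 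For comparison, the paper's own proof takes a different tack here: it uses that $z_{\theta}(X_t,A_t)$ is deterministic given $(X_t,A_t)$ to factor the conditional \emph{variance} as $zz^T\Var(Y\mid X_t,A_t)$ and verifies convergence of $h_t$ to that, through terms $e_t-\E[Y^2\mid\cdot]$ and $f_t^2-\E[Y\mid\cdot]^2$ --- which matches an estimator of the form $zz^T(e_t-f_t^2)$ rather than the stated $h_t$, and targets the variance rather than the second moment demanded by \cref{assumption:converges_var}. So the discrepancy you would hit is real and traces back to the statement itself; a complete proof must either augment $h_t$ as above or restrict to score functions for which the extra terms vanish.
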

Although this may seem restrictive, GLMs are a notable class of models that have a score function that satisfies these conditions. In our settings, GLMs will satisfy a first order condition of the form 
$$\sum_{t=1}^{T}(Y_{t} - \psi(\theta^{T}z(X_t,A_t)))z(X_t,A_t) = 0 ,$$
where $z : \R^{d} \times |\mathcal{A}| \rightarrow \R^{p}$ denotes some deterministic transformations of $(X_{t},A_{t})$ (e.g., one-hot encoding of $A_{t}$, interaction terms between $A_t$ and $X_t$) and  $\psi(\cdot)$ is the inverse-link function that maps $\theta^{T}z(X_t,A_t)$ to the mean under the (possibly misspecified) working model. In this case, we have:
\begin{align*}
    &g_{t}(X_t,a ) = \left(f_{t}(X_{t},a) - \psi(\theta^{T}z(X_t,A_t))\right)z(X_t,A_t) = f_{t}(X_{t},a)z(X_t,A_t) - \psi(\theta^{T}z(X_t,A_t))z(X_t,A_t),\\
   & h_{t}(X_t,a) = z(X_t,a)z(X_t,a)^{T} j_t(X_t,a) .
\end{align*}

\cref{prop:glm} simplifies the variance estimation task, as it converts the problem of estimation of potentially very high-dimensional covariance matrices into a simpler question of estimating conditional means and variances of $Y_{t}$. Note that $f_t$ can be the same model that was used in the definition of the MAIPWM estimator (\cref{eqn:hat_theta}) so this procedure only requires the management of an additional second model targeting $\E\left[Y_t^{2} |X_{t},A_t = a \right]$. In the case where $m_{\theta}$ is \emph{not} linear in $y$, then Jensen's inequality implies that a straightforward plug-in estimate is not available since $\E\left[\dot{m}_{\theta}(X_t,A_t,Y_t) \right] \ne \dot{m}_{\theta}(X_t,A_t,\E[Y_t])$. In these cases, more complicated methods of density estimation will be required. 

Some care must also be taken when choosing $\bar{\theta}_{T}$ in \cref{prop:glm}. Although $\hat{\theta}_{T}$ is consistent, it is defined as an estimator that satisfies $\frac{1}{T} \sum_{t=1}^{T} \hat{V}_{t}^{-1/2}s_{t,\hat{\theta}} = o_{p}(1)$ and therefore assumes that $\hat{V}_{t}$ has already been constructed in its definition, so it is not available when we need to construct $\hat{V}_{t}$. Fortunately, $\tilde{\theta}_{T}$ can be shown to be consistent.
\begin{proposition} \label{prop:consistency_tilde}
Let $\tilde{\theta}_{T}$ be defined as in \cref{eqn:hat_theta}. Under the conditions of \cref{thm:clt}, $\norm{\tilde{\theta}_{T}-  \theta^{\star}}_{1} = o_{p}(1)$.
\end{proposition}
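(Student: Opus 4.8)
The plan is to follow the classical argmax-consistency argument for $M$-estimators \citep[Theorem~5.7]{van2000asymptotic}, adapted to the adaptive setting by replacing the usual i.i.d.\ uniform law of large numbers with \cref{lemma:uniform_lln}. Writing the per-step contribution to the objective in \cref{eqn:hat_theta} as $R_t(\theta)$ with $g_\theta = m_\theta$ in the notation of \cref{eqn:r_t_theta}, we have $\tilde{\theta}_T = \argmax_{\theta \in \Theta} \frac{1}{T}\sum_{t=1}^T R_t(\theta)$. The first step is to identify the deterministic limiting objective. A direct computation of the conditional drift exploits the augmented-IPW structure: taking $\E[\,\cdot \mid \mathcal{H}_{t-1}]$, the indicator $\mathbbm{1}_{A_t = a}$ contributes $\mathbb{P}(A_t = a \mid X_t, \mathcal{H}_{t-1})$, which cancels the inverse-propensity factor, and the $f_t$ terms cancel between the plug-in and the correction. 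This yields, for every $t$,
\[
\E[R_t(\theta)\mid \mathcal{H}_{t-1}] = \E_{\mathcal{P},\pi_e}\!\left[ m_\theta(X,A,Y)\right] =: M(\theta),
\]
a fixed function that depends on neither $t$ nor the history, and which by \cref{eqn:theta_star} is maximized at $\theta^\star$.

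Next I would invoke \cref{lemma:uniform_lln} with $g_\theta = m_\theta$. \cref{assumption:bracket3} supplies the finite bracketing numbers for $\mathcal{M}_\Theta$, and \cref{assumption:bracket_complexity} controls the $f_t$-dependent terms, so the hypotheses of the lemma are met. Because the conditional drift equals $M(\theta)$ uniformly in $t$, the lemma gives the uniform convergence
\[
\sup_{\theta \in \Theta}\left| \frac{1}{T}\sum_{t=1}^T R_t(\theta) - M(\theta)\right| \overset{p}{\to} 0.
\]

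With uniform convergence and a well-separated maximum in hand, consistency follows from the standard sandwich argument. Since $\tilde{\theta}_T$ maximizes the empirical objective, $\frac{1}{T}\sum_t R_t(\tilde{\theta}_T) \ge \frac{1}{T}\sum_t R_t(\theta^\star)$; combining this with the display above, applied at both $\tilde{\theta}_T$ and $\theta^\star$, gives $M(\theta^\star) - M(\tilde{\theta}_T) \le 2\sup_{\theta}\left| \frac{1}{T}\sum_t R_t(\theta) - M(\theta)\right| = o_p(1)$. The first bullet of part~3 of \cref{assumption:maximum} states that $\inf_{\norm{\theta - \theta^\star}\ge \epsilon}\{M(\theta^\star) - M(\theta)\}\ge \delta_1 > 0$ for all large $T$, so $\{\norm{\tilde{\theta}_T - \theta^\star} \ge \epsilon\} \subseteq \{M(\theta^\star) - M(\tilde{\theta}_T) \ge \delta_1\}$, and the probability of the latter event tends to zero. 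Hence $\norm{\tilde{\theta}_T - \theta^\star} = o_p(1)$, and since $\Theta$ is finite-dimensional all norms are equivalent, yielding the stated $\norm{\cdot}_1$ conclusion.

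I expect the main obstacle to be the first step, namely verifying cleanly that the augmented-IPW drift collapses to the history-independent $M(\theta)$, together with the bookkeeping needed to confirm that \cref{lemma:uniform_lln} applies verbatim, in particular that the bracketing and domination assumptions cover the \emph{entire} augmented summand $R_t(\theta)$ rather than just $m_\theta$ evaluated on-policy. Everything downstream is the textbook consistency argument, whose only nonstandard ingredient, the uniform martingale law of large numbers, is already furnished by \cref{lemma:uniform_lln}.
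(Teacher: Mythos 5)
Your proposal is correct and follows essentially the same route as the paper's own proof: both compute the conditional drift $\E[R_t(\theta)\mid\mathcal{H}_{t-1}]=\E_{\mathcal{P},\pi_e}[m_\theta(X,A,Y)]$ via the augmented-IPW cancellation, invoke \cref{lemma:uniform_lln} (under \cref{assumption:bracket3} and \cref{assumption:bracket_complexity}) for uniform convergence, and close with the standard argmax/well-separation argument using the first bullet of part~3 of \cref{assumption:maximum}. The only cosmetic difference is that you package the final step as a two-sided ``$2\sup$'' sandwich while the paper splits it into three one-sided terms; both are the textbook argument.
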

This results in a two-step procedure in which $\tilde{\theta}_{T}$ is first estimated and used to construct $\{\hat{V}_{t}\}_{t=1}^{T}$. We then use these matrices to construct a corrected estimator $\hat{\theta}_{T}$ that allows us to form confidence sets. We summarize the end-to-end procedure for producing confidence sets under adaptive sampling for GLMs in \cref{alg:glm_confint}. 

\begin{algorithm}[H] 
\caption{Construction of Confidence Intervals for GLMs Under Adaptive Sampling}
\begin{algorithmic}[1]
\Require Data $\{(X_t, A_t, Y_t)\}_{t=1}^T$; evaluation dataset $\tilde{X} := \{\tilde{X}_{i}\}_{i=1}^{n}$; target policy $\pi_e(a \mid X_t)$; model class $\{m_\theta : \theta \in \Theta\}$ for the GLM with transformation function $z(x,a)$ and link function $\psi(\cdot)$.

\For{each time step $t$}
    \State Construct predictive models $f_t, j_t$ using $\mathcal{H}_{t-1}$ targeting the conditional mean and variances:
    \[
        f_t(X_i,a) \approx E[Y_i \mid X_i, A_i = a], \quad j_t(X_i,a) \approx \text{Var}[Y_{i} \mid X_i, A_i = a].
    \]
\EndFor
\State $\tilde{\theta}_{T} \leftarrow \argmax_{\theta \in \Theta}  \sum_{t=1}^{T} \sum_{a=1}^{K} \pi_{e}(a \mid X_{t}) \left( m_{\theta}(a,X_{t},f_{t}(a,X_{t}))   + \mathbbm{1}_{\{A_{t} = a\}} \frac{ m_{\theta}(A_{t},X_{t},Y_{t})  -  m_{\theta}(X_{t},A_{t},f_{t}(X_{t},A_{t})) }{ \mathbb{P}(A_{t} =a \mid X_{t},\mathcal{H}_{t-1})  }\right)$.
\State Compute 
\begin{align*}
    g_{t}(x,a) &\leftarrow f_{t}(x,a)z(x,a) - \psi(\tilde{\theta}^{T}_ {T}z(x,a))z(x,a),\\
   h_{t}(x,a) &\leftarrow z(x,a)z(x,a)^{T}j_t(x,a)  .
\end{align*}
\State Use $\tilde{X}$ and \cref{prop:estimator} to construct $\hat{V}_{t}$ for every $t \in [T]$. 
\State Use $\{\hat{V}_{t}\}_{t=1}^{T}$ and apply \cref{thm:clt} to construct $\hat{\theta}_{T}$ and corresponding confidence intervals. 

\State \textbf{Output:} Estimated parameter $\hat{\theta}_{T}$ and asymptotic $(1-\alpha)$ confidence region
\[
    \mathcal{C}_{\theta^{\star}} = \left\{ \theta \in \Theta : 
    \left(\frac{1}{\sqrt{T}} \sum_{t=1}^{T}\hat{V}_{t}^{-1/2} \dot{s}_{t,\hat{\theta}_{T}} (\hat{\theta}_{T}- \theta ) \right)^{\!\top}
    \left(\frac{1}{\sqrt{T}} \sum_{t=1}^{T}\hat{V}_{t}^{-1/2} \dot{s}_{t,\hat{\theta}_{T}} (\hat{\theta}_{T}- \theta ) \right)
    \le  \chi^{2}_{d,1-\alpha} \right\}.
\]
\end{algorithmic}\label{alg:glm_confint}
\end{algorithm}

\subsection{Using sequential sample splitting} \label{subsec:sequential}
Alternatively, when an external dataset is not available, we can use sample splitting to create an independent data set to estimate $\Var \left(s_{t,\theta^{\star}} |\Hist_{t-1} \right)$. At each $t$, we draw an independent random variable $\zeta_{t} \sim \text{Ber}(r)$ and then assign $X_t$ into one of two parallel histories $\mathcal{H}_{t}^{(1)}$ and $\mathcal{H}_{t}^{(2)}$: 
$$ X_{t} \in 
\begin{cases}
    \mathcal{H}_{t}^{(1)} & \text{when } \zeta_t = 0 \\
    \mathcal{H}_{t}^{(2)} & \text{when } \zeta_t = 1\\
\end{cases} \text{ and }
\pi_{t}(a |X_{t})  \in 
\begin{cases}
    \sigma(\mathcal{H}_{t-1}^{(1)},X_{t}) & \text{when } \zeta_t = 0 \\
    \sigma(\mathcal{H}_{t-1}^{(2)},X_{t})  & \text{when } \zeta_t = 1\\
\end{cases}.
$$
Proceeding in this fashion constructs two separate histories with the same distribution that are independent of each other. Note that for this particular application, we are only trying to sample from the marginal distribution of $X_{t}$ so the assignment of $A_{t}$ is only required for one of the two histories. An alternative experimental setup might only track the triples $(X_t,A_t,Y_t)$ for $\mathcal{H}_{t}^{(1)}$ and then refrain from treating or tracking outcomes for individuals in $\mathcal{H}_{t}^{(2)}$. In contexts where there is a cost to administering the treatment, the latter design will be preferable, but in other situations it may not be possible or desirable to refrain from treatment in which case the former is required.

Regardless of the particular approach, the only requirement is that $\zeta_{t}$ is independent of both histories and the predictive models are trained on only one of the tracked histories.  
\begin{assumption} \label{assumption:independence_zeta}
    $\zeta_{t} \sim \text{Ber}(p)$ for some fixed $p \in (0,1)$ and is independent of $\mathcal{H}_{t-1}^{(1)}$ and $\mathcal{H}_{t-1}^{(2)}$ for all $t\in [T]$.
\end{assumption}
\cref{assumption:independence_zeta} can be used in place of \cref{assumption:external} to construct an independent dataset of features when one is not available. 
\begin{assumption}
 \label{assumption:converges_var_sequential} There exist functions $g_{t} : \mathbb{R}^{p} \times \mathcal{A} \to \mathbb{R}$ and $h_{t} : \mathbb{R}^{p}\times \mathcal{A}  \to \mathbb{R}^{d \times d}$ adapted to $\sigma(\mathcal{H}_{t-1}^{(1)})$ such that 
$g_{t}(X_{t},a) - \E[Y_{t} | X_{t}, A_{t} = a] \overset{p}{\to} 0$ for all $a \in \mathcal{A}$ and $$h_{t}(X_{t},a) - \E \left[  \dot{m}_{\theta^{\star}}(X_t,A_t,Y_{t} )\dot{m}_{\theta^{\star}}(X_t,A_t,Y_{t} )^{T} | X_{t},A_t=a \right] \overset{p}{\to} 0$$ for all $a \in \mathcal{A}$. 
\end{assumption}
\cref{assumption:converges_var_sequential} is almost the same as \cref{assumption:converges_var} but we require that the models are only trained on $\mathcal{H}_{t-1}^{(1)}$ to ensure that $\mathcal{H}_{t-1}^{(2)}$ can be preserved as an independent data set for calculating $\hat{V}_{t}^{-1/2}$. At this point, the procedure can proceed the same as in \cref{subsec:external}. For completeness, we present a slightly modified version of \cref{prop:estimator} adapted to the case of sample splitting.

\begin{proposition} \label{prop:estimator_sequential}
Under Assumptions \ref{assumption:potential_outcomes}- \ref{assumption:bracket_complexity}, \ref{assumption:independence_zeta}, and \ref{assumption:converges_var_sequential}, define $\hat{\nu}_t(X_i)$ as in \cref{equation:nu_t} and  $n_p := \sum_{t=1}^{T} \zeta_t$. Define:
\begin{align*}
\hat{V}_{t} &= \frac{1}{n_p - 1} 
\Biggl( \sum_{X_i \in \mathcal{H}_{t}^{(2)}} \hat{\nu}_t(X_i) 
- \frac{1}{n_p} \sum_{X_i \in \mathcal{H}_{t}^{(2)}} \hat{\nu}_t(X_i) \Biggr)
\Biggl( \sum_{X_i \in \mathcal{H}_{t}^{(2)}} \hat{\nu}_t(X_i) 
- \frac{1}{n_p} \sum_{X_i \in \mathcal{H}_{t}^{(2)}} \hat{\nu}_t(X_i) \Biggr)^{T} + \\
&\qquad \frac{1}{n_p} \sum_{X_i \in \mathcal{H}_{t}^{(2)}} \sum_{a=1}^{K} 
\Biggl[ 
\frac{\pi_e(A_t = a \mid X_i)^2 \, h_t(X_i,a)}{\mathbb{P}(A_t = a \mid X_i, \mathcal{H}_{t-1}^{(2)})} 
\Biggr]-\frac{1}{n_p} \sum_{X_i \in \mathcal{H}_{t}^{(2)}}  \hat{\nu}_{t}(X_i)\hat{\nu}_{t}(X_i)^{T}.
\end{align*}
Then $\norm{\hat{V}_{t} - V_{t,\theta^{\star}}}_{\mathrm{op}} \overset{p}{\to} 0$.
\end{proposition}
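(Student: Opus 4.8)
The plan is to reduce \cref{prop:estimator_sequential} to the already-established \cref{prop:estimator}. The two statements are identical except that the external i.i.d.\ sample $\tilde X$ of \cref{assumption:external} is replaced by the held-out stream $\Hist_t^{(2)}$, and the plug-in functions $g_t, h_t$ are now trained only on $\Hist_{t-1}^{(1)}$ (\cref{assumption:converges_var_sequential}) rather than drawn from a separate source. Accordingly, the first step is to verify that $\Hist_t^{(2)}$ inherits the two properties that made the external dataset usable in \cref{prop:estimator}: its features are i.i.d.\ draws from $p(x)$, and they are independent of the estimators $g_t, h_t$. Once this is in place, the decomposition of \cref{prop:variance_decomp} and the term-by-term argument of \cref{prop:estimator} transfer directly.

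For the independence, I would argue as follows. By \cref{assumption:independence_zeta}, each $\zeta_t \sim \mathrm{Ber}(r)$ is drawn independently of both histories, so the subsequence of features routed into $\Hist_t^{(2)}$ is an i.i.d.\ thinning of the i.i.d.\ feature stream of \cref{assumption:potential_outcomes}; hence these features are i.i.d.\ $p(x)$, and $n_r = \sum_{s\le t}\zeta_s$ is a binomial count growing like $rt$. Crucially, because the assignment rule on the $\zeta=1$ stream depends only on $\Hist^{(2)}$ while $g_t, h_t \in \sigma(\Hist_{t-1}^{(1)})$, the plug-in functions are independent of the evaluation features in $\Hist_t^{(2)}$. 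Thus, conditionally on $\Hist_{t-1}^{(1)}$, the maps $\hat{\nu}_t$ (\cref{equation:nu_t}), $g_t$, and $h_t$ are \emph{fixed} deterministic functions evaluated at fresh i.i.d.\ points, which is exactly the structure exploited in the proof of \cref{prop:estimator}.

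With this reduction, the remaining steps mirror \cref{prop:estimator}. I would (i) show the sample-covariance block converges to $\Var[\E_{A\sim\pi_e}[\dot{m}_{\theta^\star}(X,A,Y)\mid X]]$ by a conditional law of large numbers for the empirical second moment and mean of $\hat{\nu}_t$, replacing $g_t$ by its limit via \cref{assumption:converges_var_sequential} and continuity; and (ii) show the averaged block converges to $\E[\Var_{A\sim\tilde{\pi}_{e,t}}(\dot{m}_{\theta^\star}\mid X)\mid\Hist_{t-1}]$, where the known action probabilities $\mathbb{P}(A_t=a\mid X_i,\Hist_{t-1}^{(2)})$ are plugged in exactly and the unknown conditional moments are supplied consistently by $h_t, g_t$. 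Summing the two blocks recovers the decomposition of \cref{prop:variance_decomp}, giving entrywise convergence $\hat{V}_t - V_{t,\theta^\star}\overset{p}{\to}0$; since the dimension $d$ is fixed and all matrix norms are equivalent, this upgrades to $\norm{\hat{V}_t - V_{t,\theta^\star}}_{\mathrm{op}}\overset{p}{\to}0$.

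The main obstacle is that we are chasing a moving target: both $\hat{V}_t$ and $V_{t,\theta^\star}$ depend on $t$ through the history-dependent action probabilities, the sample size $n_r$ is random and grows only like $rt$, and the plug-ins $g_t, h_t$ are themselves converging. The clean way to manage this is to split the estimation error into an \emph{oracle} part --- the estimator built with the true conditional moments $\E[\dot{m}_{\theta^\star}\mid X,A=a]$ and $\E[\dot{m}_{\theta^\star}\dot{m}_{\theta^\star}^{T}\mid X,A=a]$ in place of $g_t, h_t$ --- and a \emph{plug-in} part. The oracle part vanishes by the conditional law of large numbers over the i.i.d.\ held-out features, using \cref{assumption:bounded_importance} to bound the inverse-propensity factors so that the summands have bounded second moments; the plug-in part vanishes by the consistency in \cref{assumption:converges_var_sequential} combined with the same boundedness, so that the substitution error is uniformly controlled. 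This two-part decomposition is precisely what decouples the randomness of the adaptive weights from the statistical error of the machine-learning estimates.
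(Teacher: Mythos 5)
Your proposal is correct and follows essentially the same route as the paper's proof: both handle the random sample size via $n_r/(rT)\overset{p}{\to}1$ and Slutsky, exploit the independence of the Bernoulli thinning (and of $g_t,h_t\in\sigma(\mathcal{H}_{t-1}^{(1)})$) so that the held-out features behave like the external dataset of \cref{prop:estimator}, and then apply the oracle/plug-in substitution $\hat{\nu}_t = (\hat{\nu}_t-\nu_t^{\star})+\nu_t^{\star}$ together with the weak law of large numbers to each block of the decomposition from \cref{prop:variance_decomp}. No meaningful differences from the paper's argument.
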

\cref{alg:glm_confint} remains the same when used in this setting, with the only modification being that \cref{prop:estimator_sequential} should be used at the sixth step instead of \cref{prop:estimator}.

\section{Empirical Results} \label{sec:empirical_results}

We deploy these methods in the contextual bandit problem, where the goal is to choose a sequence of actions $\{A_{t}\}_{t=1}^{T}$ that minimize \emph{regret},
$\sum_{t=1}^{T} \left(\mu^{\star}_{t} - Y_{t}\right)$, where $\mu^{\star}_{t} := \max \{  \E[Y_{t}(a)] : a \in \mathcal{A} \}$ is the arm with the highest expected reward. This is a well-studied problem with a variety of proposed solutions; we focus on strategies for choosing $P(A_{t} = a| X_{t},\mathcal{H}_{t-1})$ that simultaneously leverage flexible predictive models $f_{t}$ and $j_{t}$ described in \cref{assumption:converges_var} targeting $\E[Y_{t}|X_{t},A_{t}]$ and $\E[Y_{t}^{2}|X_{t},A_{t}]$, respectively. 

\paragraph{Strategies for selecting $A_t$:}
\begin{itemize}
    \item A \textbf{uniform} strategy where $\mathbb{P}(A_{t} = a | X_{t}, \mathcal{H}_{t-1}) = \frac{1}{|\mathcal{A}|}$ for all $t$. Classical statistical guarantees will apply in this setting since the decision-making is stationary and non-adaptive.
    \item An \textbf{epsilon greedy} strategy that lets $$A_{t} = \begin{cases} \argmax_{a \in \mathcal{A}} f_{t}(X_{t},a) &\text{ with probability } 1- \epsilon  \\ \text{ any other } a \in \mathcal{A} &\text{ with probability } \frac{ \epsilon}{|\mathcal{A}| - 1}.
\end{cases}$$
    \item An \textbf{upper confidence bound (UCB)} strategy that constructs confidence intervals using $f_{t}$ to define the centerpoint and $ j_{t}$ to control the width. Here, 
    $$A_{t} = \argmax_{a\in \mathcal{A}} \left[ f_{t}(X_{t},a) - q_{\alpha/2}\sqrt{j_{t}(X_{t},a)}\right],$$ where $q_{\alpha/2}$ denotes the $\alpha/2$ quantile of a standard normal distribution. 
    \item An approach inspired by \textbf{Thompson Sampling} which draws from a distribution of $ Y_{t} | X_{t}, a$ for each $a$ to form an estimate $\hat{Y}_{t}(a)$ and then lets $A_{t} := \argmax \hat{Y}_{t}(a)$. We apply this algorithm in the current setting using the working assumption that $Y_{t}(a) \sim N(f_{t}(X_t,a), j_{t}(X_t,a) - f_{t}^{2}(X_t,a)) $. We also consider versions of this method with a clipping constraint so all probabilities are constrained to lie in $[0.05,0.95]$. 
\end{itemize}
We use a real data example to test the confidence intervals constructed using \cref{thm:clt} under the above sampling strategies. The data set is sourced from the Osteoarthritis Initiative (OAI) \citep{nevitt2006osteoarthritis}, a publicly available data set provided by the NIH. The OAI is a ten-year longitudinal observational study of men and women affected by osteoarthritis. The study collects several baseline measurements about each participant's knee health, such as self-reported measurements of pain, disability status, and flexion contracture, in addition to demographic measurements such as age, BMI, and sex. After enrollment in the study, measurements of knee health, such as Kellgren and Lawrence (KL) grade, are taken at yearly intervals. 

We consider the outcome ($Y_t$) to be the four year change in KL grade for the affected knee of an osteoarthritic patient. Although KL grade is a discrete numeric variable, we treat KL grade as a continuous variable for the purposes of this example. The chosen features ($X_{t}$) include the aforementioned baseline measurements of knee health and demographic variables as well as additional risk factors identified in \cite{DUNN20201020} such as self-reported quality of life scores and use of pain medication. Since OAI was an observational study to characterize disease progression rather than a trial to compare treatments, to evaluate methodology, we construct a semi-synthetic data set that applies hypothetical treatments ($A_{t}$) and then enforces a synthetic relationship between $A_{t}$ and $Y_{t}$ while preserving the conditional distribution of $Y_{t} | X_{t}$. This is done using machine learning methods to first learn the distribution of $Y_{t} | X_{t}$ and then to create synthetic outcomes that combine the output of the machine learning model with hypothetical treatment effects. For a detailed description of the semi-synthetic dataset creation process, see \cref{appendix:dataset_construction}. The end result is a dataset with similar complexity to the original longitudinal study but with the option to sample patients sequentially with ground-truth knowledge of the treatment effect and target parameters. Note that to ensure that this setting would align with the hardest case where treatment policies would not be likely to converge, we picked treatments so that there would be the same expected reward across multiple arms and there would be no unique optimal policy. 

\begin{figure*}  \label{fig:confint}
    \centering
    \begin{subfigure}[t]{\linewidth}
         \centering
         \includegraphics[width=\textwidth]{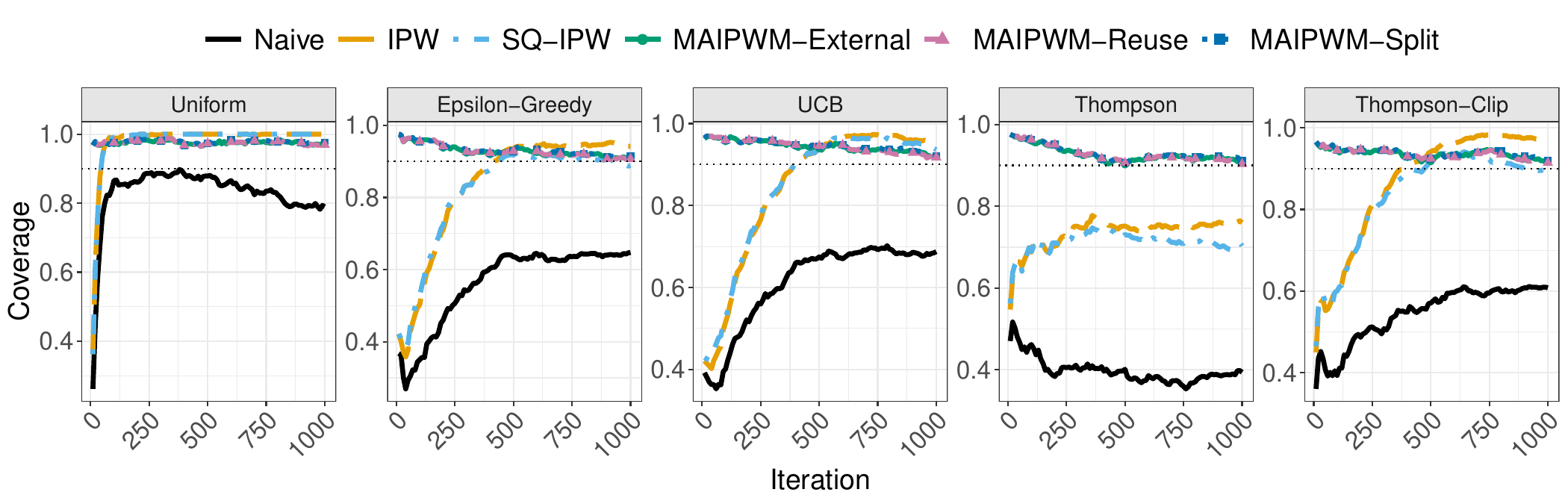}
        \caption{Coverage for target $1-\alpha=0.9$}
     \end{subfigure}
    \begin{subfigure}[t]{\linewidth}
         \centering
         \includegraphics[width=\textwidth]{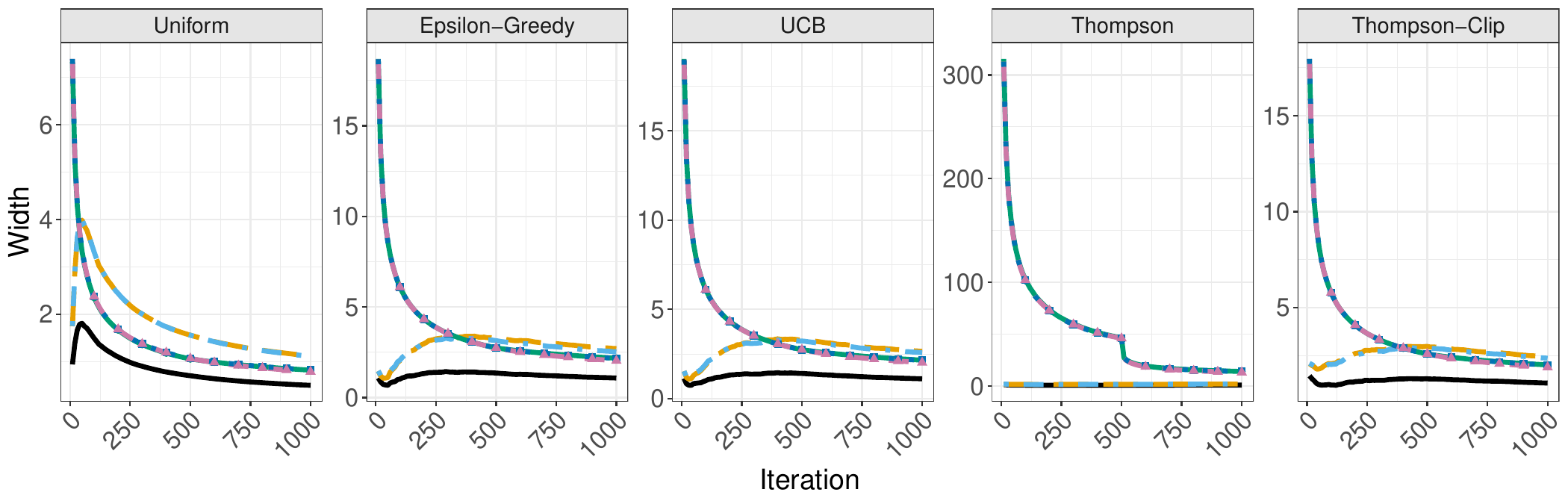}
        \caption{Confidence interval width for confidence intervals shown above}
     \end{subfigure}
\caption{Confidence intervals constructed using \cref{thm:clt} using ML-based estimates of the variance cover in all scenarios. GLMs using naive inverse propensity weighting often undercover, especially in situations where the assignment probabilities vary substantially over time. We note that both sample splitting and external data reuse for covariance estimation are valid, but sample splitting has significantly wider confidence intervals. Reusing the same data for variance estimation and parameter estimation performs similarly to using external data, though we lack theoretical guarantees for this method.} 
    \label{fig:simulation_results}
\end{figure*}

In simulations, patients come online one at a time and $\hat{\theta}_{t}$ is recomputed for each $t$ with \cref{thm:clt} used to form confidence intervals. As discussed in \cref{sec:covariance_estimation}, online model-based estimates of the conditional mean and variance are trained sequentially (using random forests) as inputs into the MAIPWM estimator. To increase computational efficiency, we choose to re-train the predictive models after every $100$ data points, though in principle they could be re-trained after every sample. For continuous outcomes, we consider a linear model with $m_{\theta} :=- (Y_{t} - \sum_{a\in \mathcal{A}}\theta_{a} \mathbbm{1}_{A_t = a})^{2}$. 

We track the length of the confidence intervals and their nominal coverage rate (given target coverage of $0.9$) in \cref{fig:simulation_results}. The methods we consider are:
\paragraph{Methods for confidence interval construction:}
\begin{itemize}
    \item \textbf{Naive} Use off-the shelf maximum likelihood estimation with equal weighting of outcomes and sandwich estimates of the variance. 
    \item \textbf{IPW} Uses the methodology of \cite{guo2025statisticalinferencemisspecifiedcontextual} to construct confidence intervals, which relies on inverse propensity weighting of the score function and sandwich estimates of the variance. 
    \item \textbf{SQ-IPW} Use the methodology of \cite{zhang2021mestimators} to construct confidence intervals, which weight observations by the \emph{square root} of the propensity scores, setting $w_{t} = \sqrt{\frac{\pi_{e}(A_t|X_t) }{P(A_t|X_t,\mathcal{H}_{t-1}) }}$ and using sandwich estimates of the variances. 
    \item \textbf{MAIPWM--External} Using the method described in \cref{thm:clt} with plug-in estimates $V_{t,\theta}$ constructed using external data as described in \cref{subsec:external}.
    \item \textbf{MAIPWM--Sample-Splitting} Using the method described in \cref{thm:clt} with plug-in estimates $V_{t,\theta}$ constructed using sequential sample splitting as described in \cref{subsec:sequential}.
    \item \textbf{MAIPWM--Reuse}  Using the method described in \cref{thm:clt} with plug-in estimates $V_{t,\theta^{\star}}$ constructed by naively reusing all $X_{i} \in \mathcal{H}_{t-1}$ at time step $t$.
\end{itemize}
We tested the MAIPWM estimator in all combinations of strategies used to estimate $V_{t,\theta^{\star}}$ and select actions. The empirical results shown in \cref{fig:simulation_results} align with the theory --- naive maximum likelihood estimates undercover in all situations other than the uniform (nonadaptive case). IPW and SQ-IPW methods require a significantly larger number of samples before they achieve the nominal coverage rate and do not cover at all in the case of Thompson sampling. The MAIPW estimator using sample splitting and external data both cover correctly, with the sample splitting method paying a price in terms of notably wider confidence intervals. Although we have no theoretical results supporting MAIPWM-Reuse, which reuses the historical data to estimate the covariance, this method has nearly identical results to the two MAIPWM methods that use external samples. Future work can potentially justify this formally. 

\section{Conclusion} \label{sec:conclusion}
We present a method for inference in $M$ estimation problems under a potentially misspecified working model. The method requires the estimation of time-varying covariance matrices, for which we provide an algorithm for generalized linear models using flexible machine learning approaches. Empirical results demonstrate that the method has correct nominal coverage, while existing approaches often undercover in situations where action-selection probabilities vary considerably and the variance of the score function does not converge. Our work suggests several potential avenues for follow-up work, which we detail below:

\paragraph{Estimation of action-selection probabilities} Throughout, we assume that the probabilities of each action being selected in the experiment are determined by the experimenter and known, which sometimes cannot be guaranteed in practical settings. Weakening this assumption first requires modifying the asymptotic arguments for \cref{thm:clt} and ensuring the rate of estimation of these probabilities to converge sufficiently fast. It also complicates the estimation of the time-varying variance as explained in \cref{sec:covariance_estimation} since \cref{prop:variance_decomp} crucially relies on knowledge of the action-selection probabilities to decompose the variance into a tractable form for plug-in estimation.

\paragraph{Data reuse}
In estimating the variance of the score function, we either assume access to an external dataset of features with the same marginal distribution as in the sequential experiment or are forced to split the data (and consequently sacrifice power). Empirical results suggest that simply reusing the collected sequential data for both variance estimation and parameter estimation provides nearly identical results compared to when an external dataset is available, but we have not provided a rigorous guarantee.

\paragraph{Efficiency of estimators} Although we have demonstrated that the proposed estimator results in valid confidence intervals, we have not shown it is optimal in any sense or demonstrated results about semi-parametric efficiency, which is a logical next step for future work.

\begin{appendix}

\section{Preliminaries }
We introduce several key facts about martingales that are used throughout. For a more thorough discussion, consult \cite{hall2014martingale}.

\begin{fact}[Martingale Weak Law of Large Numbers, \cite{hall2014martingale}]  \label{fact:martingale_wlln}
 
Let $\left\{S_n=\sum_{i=1}^n X_i, \mathcal{H}_{n}, n \geq 1\right\}$ be a martingale with respect to a filtration $\mathcal{H}_{n}$ and $\left\{b_n\right\}$ a sequence of positive constants with $b_n \rightarrow \infty$ as $n \rightarrow \infty$. Then, writing $X_{n i}=X_i \mathbbm{1}\left[\left|X_i\right| \leq b_n\right]$, $1 \leq i \leq n$, we have that $b_n^{-1} S_n \xrightarrow{\mathrm{p}} 0$ as $n \rightarrow \infty$ if
\begin{enumerate}
    \item $\sum_{i=1}^n P\left(\left|X_i\right|>b_n\right) \rightarrow 0$
    \item $b_n^{-1} \sum_{i=1}^n \mathbb{E}\left[X_{n i} \mid \mathcal{H}_{n-1}\right] \xrightarrow{\mathrm{p}} 0$
    \item $b_n^{-2} \sum_{i=1}^n\left\{\mathbb{E}\left[X_{n i}^2\right]-\mathbb{E}\left[\mathbb{E}\left[X_{n i} \mid \mathcal{H}_{n-1}\right]\right]^2\right\} \rightarrow 0$
\end{enumerate}
\end{fact}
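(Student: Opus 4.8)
The plan is to run the classical truncation--centering--Chebyshev argument, adapted to the martingale-difference structure of the increments. Since $\{S_n,\mathcal{H}_n\}$ is a martingale, the increments $X_i = S_i - S_{i-1}$ are $\mathcal{H}_i$-measurable and satisfy $\mathbb{E}[X_i \mid \mathcal{H}_{i-1}] = 0$; I read the conditional expectations $\mathbb{E}[X_{ni}\mid\mathcal{H}_{t-1}]$ appearing in conditions (2) and (3) as $\mathbb{E}[X_{ni}\mid\mathcal{H}_{i-1}]$. The three conditions will play three distinct roles: (1) controls the truncation error, (2) controls the bias introduced by centering the truncated increments, and (3) controls the $L^2$ fluctuation of the resulting centered martingale.

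First I would dispose of the truncation. Writing $S_n' = \sum_{i=1}^n X_{ni}$, the sums agree unless some increment is truncated, so
\[
P(S_n \neq S_n') \le \sum_{i=1}^n P(|X_i| > b_n) \to 0
\]
by condition (1). Hence $b_n^{-1}(S_n - S_n') \overset{p}{\to} 0$, and it suffices to show $b_n^{-1} S_n' \overset{p}{\to} 0$.

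Next I would recenter. The truncated increments $X_{ni}$ are in general no longer martingale differences, so I define $\bar{X}_{ni} = X_{ni} - \mathbb{E}[X_{ni}\mid\mathcal{H}_{i-1}]$ and $\bar{S}_n = \sum_{i=1}^n \bar{X}_{ni}$, which for each fixed $n$ is a martingale-difference array with respect to $\{\mathcal{H}_i\}$. Splitting
\[
b_n^{-1} S_n' = b_n^{-1}\bar{S}_n + b_n^{-1}\sum_{i=1}^n \mathbb{E}[X_{ni}\mid\mathcal{H}_{i-1}],
\]
the second term vanishes in probability by condition (2), so the task reduces to showing $b_n^{-1}\bar{S}_n \overset{p}{\to} 0$. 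For this I use orthogonality of martingale differences to write $\mathbb{E}[\bar{S}_n^2] = \sum_{i=1}^n \mathbb{E}[\bar{X}_{ni}^2]$, and a short computation (using $\mathbb{E}[X_{ni}\,\mathbb{E}[X_{ni}\mid\mathcal{H}_{i-1}]] = \mathbb{E}[(\mathbb{E}[X_{ni}\mid\mathcal{H}_{i-1}])^2]$) gives $\mathbb{E}[\bar{X}_{ni}^2] = \mathbb{E}[X_{ni}^2] - \mathbb{E}[(\mathbb{E}[X_{ni}\mid\mathcal{H}_{i-1}])^2]$. Conditional Jensen yields $\mathbb{E}[(\mathbb{E}[X_{ni}\mid\mathcal{H}_{i-1}])^2] \ge (\mathbb{E} X_{ni})^2$, so
\[
b_n^{-2}\,\mathbb{E}[\bar{S}_n^2] \le b_n^{-2}\sum_{i=1}^n \bigl\{\mathbb{E}[X_{ni}^2] - (\mathbb{E} X_{ni})^2\bigr\} \to 0
\]
by condition (3), since $(\mathbb{E} X_{ni})^2 = (\mathbb{E}[\mathbb{E}[X_{ni}\mid\mathcal{H}_{i-1}]])^2$. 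Chebyshev's inequality then gives $P(|b_n^{-1}\bar{S}_n| > \epsilon) \le \mathbb{E}[\bar{S}_n^2]/(\epsilon^2 b_n^2) \to 0$, and combining the three pieces finishes the proof.

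I expect the main obstacle to be the bookkeeping in the third step: condition (3) is phrased in terms of the variances of the \emph{truncated} increments $X_{ni}$, not the second moments of the \emph{centered} differences $\bar{X}_{ni}$ that actually drive the $L^2$ bound. The crux is recognizing, via conditional Jensen, that the condition-(3) summand dominates $\mathbb{E}[\bar{X}_{ni}^2]$, which is exactly what lets the martingale orthogonality argument close. Everything else is routine.
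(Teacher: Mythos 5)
Your proposal is correct, but there is no in-paper proof to compare it against: the paper states this result as a black-box \emph{Fact}, cited to Hall and Heyde's martingale limit theory monograph (their weak law for martingales), and then invokes it inside the proofs of \cref{lemma:uniform_lln} and \cref{thm:clt}. So the relevant benchmark is the textbook argument, and what you wrote is essentially that argument, executed correctly: the truncation error is killed by condition (1) via $\{S_n \neq S_n'\} \subseteq \bigcup_{i \le n}\{|X_i| > b_n\}$; the recentering term is killed by condition (2); and the centered truncated sum is killed by orthogonality of martingale differences plus Chebyshev under condition (3). The details all check: each $|X_{ni}| \le b_n$, so second moments of the truncated increments exist even when the $X_i$ themselves are not square integrable (which is exactly why one truncates before centering), $\bar{X}_{ni}$ is $\mathcal{H}_i$-measurable so the cross terms vanish, and the tower property gives the identity $\mathbb{E}[\bar{X}_{ni}^2] = \mathbb{E}[X_{ni}^2] - \mathbb{E}\big[(\mathbb{E}[X_{ni}\mid\mathcal{H}_{i-1}])^2\big]$.

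One feature of your write-up is worth flagging explicitly. The paper's condition (3) is mistyped: as printed, $\mathbb{E}\left[\mathbb{E}\left[X_{ni}\mid\mathcal{H}_{t-1}\right]\right]^2$ parses as the square of the unconditional mean, $(\mathbb{E}[X_{ni}])^2$, whereas Hall and Heyde's hypothesis places the square inside the outer expectation, $\mathbb{E}\big[(\mathbb{E}[X_{ni}\mid\mathcal{H}_{i-1}])^2\big]$; the paper also writes $\mathcal{H}_{t-1}$ where $\mathcal{H}_{i-1}$ is meant, which you silently repaired. Your Jensen step, $\mathbb{E}[Z^2] \ge (\mathbb{E} Z)^2$ with $Z = \mathbb{E}[X_{ni}\mid\mathcal{H}_{i-1}]$, is precisely what makes the proof go through under the literal (stronger-hypothesis) reading, while your exact identity for $\mathbb{E}[\bar{X}_{ni}^2]$ already closes the argument, with equality and no inequality needed, under the Hall--Heyde reading. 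So your proof is robust to either interpretation of the misprinted condition, which is more than can be said for the statement as transcribed in the paper.
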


\begin{remark} A sufficient condition for \cref{fact:martingale_wlln} to hold is when $X_i$ is bounded by a constant.
\end{remark}

\begin{remark} \label{remark:wlln_integrable} A sufficient condition for \cref{fact:martingale_wlln} to hold is when $X_{i}$ is square integrable and $\frac{1}{T^{2}} \sum_{t=1}^{T} \E[X_{i}^{2}] \rightarrow 0$.
\end{remark}

\section{Deferred Proofs}

\subsection{Proof of \cref{thm:clt}}
The overall structure of the proof is similar to \cite{zhang2021mestimators}, which in turn is based on classical proofs of the asymptotic normality of $M$-estimators such as the ones described in \cite{van2000asymptotic}. The major differences in the approach are that we now need to deal with the variance stabilizing matrices $\hat{V}_{t}$ and bound the complexity of the predictive model $f_t$. 

The proof relies on two Lemmas:
\begin{enumerate}
    \item \cref{lemma:score_clt} uses the first order conditions to show that an appropriately rescaled version of the score function $\frac{1}{\sqrt{T}}\sum_{t=1}^{T} \hat{V}_{t}^{-1/2} s_{t,\theta^{\star}}$ is asymptotically normal using a Martingale Central Limit Theorem \citep{hall2014martingale}. 
    \item \cref{lemma:consistency} shows that $\norm{\hat{\theta}_{T}- \theta^{\star}}_{1}= o_{p}(1)$.
\end{enumerate}
Given these two lemma, we can now use Taylor expansion around the score function to prove \cref{thm:clt}.

First, by Taylor's theorem we can write for each $t$ and some $\tilde{\theta}_{t}$ on the line segment between $\hat{\theta}_{T}$ and $\theta^{\star}$ that
\begin{align*}
s_{t,\theta^{\star}} &= s_{t,\hat{\theta}_{T}}  + \dot{s}_{t,\hat{\theta}_{T}}(\theta^{\star} - \hat{\theta}_{T}) + \frac{1}{2} \left( \theta^{\star} - \hat{\theta}_{T}\right)^{T} \ddot{s}_{t,\tilde{\theta}_{t}} \left( \theta^{\star} -\hat{\theta}_{T}\right).
\end{align*}
For each $t$, we multiply by $\frac{1}{\sqrt{T}} \hat{V}_{t}^{-1/2}$ and sum over $t \in [T]$ to arrive at 
\begin{align*}
    -\frac{1}{\sqrt{T}} \sum_{t=1}^{T} \hat{V}_{t}^{-1/2} s_{t,\theta^{\star}} &= 
    \frac{1}{\sqrt{T}} \sum_{t=1}^{T} \hat{V}_{t}^{-1/2} \left(  s_{t,\hat{\theta}_{T}} - \dot{s}_{t,\hat{\theta}_{T}}(\hat{\theta}_{T}-\theta^{\star} )  -  \frac{1}{2}\left(  \hat{\theta}_{T}-\theta^{\star} \right)^{T} \ddot{s}_{t,\tilde{\theta}_{t}} \left( \hat{\theta}_{T}-\theta^{\star} \right) \right).
\end{align*}
Recall that by assumption $\frac{1}{T}\sum_{t=1}^{T} \hat{V}_{t}^{-1/2}s_{t,\hat{\theta}_{T}} = o_{p}(1/\sqrt{T})$, this simplifies the expression to 
\begin{align*}
 -\frac{1}{\sqrt{T}} \sum_{t=1}^{T} \hat{V}_{t}^{-1/2} s_{t,\theta^{\star}} &= o_{p}(1) + 
    \frac{1}{\sqrt{T}} \sum_{t=1}^{T} \hat{V}_{t}^{-1/2} \left(  \dot{s}_{t,\hat{\theta}_{T}}(\hat{\theta}_{T}-\theta^{\star} )  +  \frac{1}{2}\left(  \hat{\theta}_{T}-\theta^{\star} \right)^{T} \ddot{s}_{t,\tilde{\theta}_{t}} \left( \hat{\theta}_{T}-\theta^{\star} \right) \right)   \\
    &= o_{p}(1)
     + \frac{1}{\sqrt{T}}
     \Bigg(
         I_{d}
         + \frac{1}{2}
         \sum_{t=1}^{T}
         \hat{V}_{t}^{-1/2}
         (\hat{\theta}_{T}-\theta^{\star})^{T}
         \ddot{s}_{t,\tilde{\theta}_{t}}   \\
     &\qquad\qquad\qquad\qquad
         \times
         \Bigg(
            \sum_{t=1}^{T}
            \hat{V}_{t}^{-1/2}
            \dot{s}_{t,\hat{\theta}_{T}}
         \Bigg)^{-1}
     \Bigg)
     \sum_{t=1}^{T}
     \hat{V}_{t}^{-1/2}
     \dot{s}_{t,\hat{\theta}_{T}}
     (\hat{\theta}_{T}-\theta^{\star}) .
\end{align*}

Recall that  $ -\frac{1}{\sqrt{T}} \sum_{t=1}^{T} \hat{V}_{t}^{-1/2} s_{t,\theta^{\star}} \overset{d}{\to} N(0,I_d)$. By Slutsky's Theorem, it is therefore sufficient to to show that 
$$\norm{\sum_{t=1}^{T}  \hat{V}_{t}^{-1/2} \left(  \hat{\theta}_{T}-\theta^{\star} \right)^{T} \ddot{s}_{t,\tilde{\theta}_{t}} \left(\sum_{t=1}^{T} \hat{V}_{t}^{-1/2} \dot{s}_{t,\hat{\theta}_{T}}\right)^{-1}}_{2} = o_{p}(1),$$
in order to conclude that $\frac{1}{\sqrt{T}} \sum_{t=1}^{T} \hat{V}_{t}^{-1/2} \dot{s}_{t,\hat{\theta}_{T}} \left(\hat{\theta}_{T}-\theta^{\star} \right) \overset{d}{\to} N(0,I_d)$.
We write this as follows. 
\begin{align*}
\norm{\sum_{t=1}^{T}  \hat{V}_{t}^{-1/2} \left(  \hat{\theta}_{T}-\theta^{\star} \right)^{T} \ddot{s}_{t,\tilde{\theta}_{t}} \left(\sum_{t=1}^{T} \hat{V}_{t}^{-1/2} \dot{s}_{t,\hat{\theta}_{T}}\right)^{-1}}_{2} &= \norm{ \frac{1}{T}  \sum_{t=1}^{T}  \hat{V}_{t}^{-1/2} \left(  \hat{\theta}_{T}-\theta^{\star} \right)^{T} \ddot{s}_{t,\tilde{\theta}_{t}} \left( \frac{1}{T}  \sum_{t=1}^{T} \hat{V}_{t}^{-1/2} \dot{s}_{t,\hat{\theta}_{T}}\right)^{-1}}_{2} \\
&\le \norm{\frac{1}{T}  \sum_{t=1}^{T}  \hat{V}_{t}^{-1/2} \left(  \hat{\theta}_{T}-\theta^{\star} \right)^{T} \ddot{s}_{t,\tilde{\theta}_{t}}}_{2} \norm{\left( \frac{1}{T}  \sum_{t=1}^{T} \hat{V}_{t}^{-1/2} \dot{s}_{t,\hat{\theta}_{T}}\right)^{-1}}_{2} \\
&\le\frac{1}{T} \norm{\hat{\theta}_{T}-\theta^{\star}} \sum_{t=1}^{T} \norm{ \hat{V}_{t}^{-1/2}\ddot{s}_{t,\tilde{\theta}_{t}}}_{2} \norm{\left( \frac{1}{T}  \sum_{t=1}^{T} \hat{V}_{t}^{-1/2} \dot{s}_{t,\hat{\theta}_{T}}\right)^{-1}}_{2} \\
&\le\norm{\hat{\theta}_{T}-\theta^{\star}}_{2}\frac{1}{\delta_{\text{min}}}\frac{1}{T}  \sum_{t=1}^{T} \norm{\ddot{s}_{t,\tilde{\theta}_{t}}}_{2} \norm{\left( \frac{1}{T}  \sum_{t=1}^{T} \hat{V}_{t}^{-1/2} \dot{s}_{t,\hat{\theta}_{T}}\right)^{-1}}_{2}.
\end{align*}
By \cref{lemma:consistency}, $\norm{\hat{\theta}_{T}-\theta^{\star}}_{2} =o_{p}(1)$ and by \cref{lemma:invertible}, $\norm{\left( \frac{1}{T}  \sum_{t=1}^{T} \hat{V}_{t}^{-1/2} \dot{s}_{t,\hat{\theta}_{T}}\right)^{-1}}_{2}  = O_{p}(1)$ . Therefore, it is sufficient to show that the remaining terms are $O_p(1)$ to conclude that the entire term is $o_{p}(1)$. We know that $1/\delta_{\text{min}}$ is bounded because $\delta_{\text{min}} > 0$ by assumption. We tackle each of these terms individually (note that by equivalence of matrix norms, it is sufficient to show convergence in $1$-norm). 

\paragraph{Showing $\frac{1}{T}  \sum_{t=1}^{T} \norm{\ddot{s}_{t,\tilde{\theta}_{t}}}_{1} = O_{p}(1)$.}
We for some
\begin{align*}
\norm{\ddot{s}_{t,\tilde{\theta}_{t}}}_{1} &=  \norm{\sum_{a=1}^{K} \pi_{e}(A_{t} =a | X_{t}) \left( \dddot{m}_{\tilde{\theta}_{t}}(a,X_{t},f_{t}(a,X_{t}))   + \mathbbm{1}_{A_{t} = a} \frac{ \dddot{m}_{\tilde{\theta}_{t}}(X_{t},A_{t},Y_{t})  -  \dddot{m}_{\tilde{\theta}_{t}}(X_{t},A_{t},f_{t}(X_{t},A_{t})) }{ \mathbb{P} \left(A_{t} =a | X_{t},\mathcal{H}_{t-1} \right)  }\right)}_1 \\
&\le \frac{\pi_{e}(A_{t} | X_{t})}{\mathbb{P} \left(A_{t} | X_{t},\mathcal{H}_{t-1} \right) }
\norm{\dddot{m}_{\tilde{\theta}_{t}}(X_{t},A_{t},Y_{t})}_{1}+ \pi_{e}(A_{t} =a | X_{t})  \sum_{a=1}^{K} \left(1 - \frac{\mathbbm{1}_{A_{t} = a} }{\mathbb{P} \left(A_{t} =a | X_{t},\mathcal{H}_{t-1} \right)}\right) \norm{\dddot{m}_{\tilde{\theta}_{t}}(X_{t},a,f_{t}(X_{t},a))}_{1}\\
&\le \frac{\pi_{e}(A_{t} | X_{t})}{\mathbb{P} \left(A_{t} | X_{t},\mathcal{H}_{t-1} \right) }
u_{m}(X_{t},A_{t},Y_{t}) + \pi_{e}(A_{t} =a | X_{t})  \sum_{a=1}^{K} \left(1 - \frac{\mathbbm{1}_{A_{t} = a} }{\mathbb{P} \left(A_{t} =a | X_{t},\mathcal{H}_{t-1} \right)}\right) u_{\mathcal{F}}(X_{t},a,f_{t}(X_{t},a)),
\end{align*}
where the last inequalities follow from \cref{assumption:domination} and \cref{assumption:bracket_complexity} after noting that $\norm{\tilde{\theta}_{t} - \theta^{\star}} < \delta$ for any fixed $\delta$ since $\tilde{\theta}_{t}$ must lie on the line segment between $\hat{\theta}_{T}$ and $\theta^{\star}$. 
Therefore, 
\begin{align*}
\sum_{t=1}^{T} \norm{\ddot{s}_{t,\tilde{\theta}_{t}}}
&\le
\sum_{t=1}^{T}
\frac{\pi_{e}(A_{t} \mid X_{t})}
     {\mathbb{P}\!\left(A_{t} \mid X_{t},\mathcal{H}_{t-1}\right)}
\, u_{m}(X_{t},A_{t},Y_{t})  \\
&\quad
+ \sum_{t=1}^{T}
\pi_{e}(A_{t}=a \mid X_{t})
\sum_{a=1}^{K}
\Bigg(
    1
    - \frac{\mathbbm{1}_{A_{t}=a}}
           {\mathbb{P}\!\left(A_{t}=a \mid X_{t},\mathcal{H}_{t-1}\right)}
\Bigg)
\, u_{\mathcal{F}}\!\left(
    X_{t},a,f_{t}(X_{t},a)
\right).
\end{align*}

We will now show that both the first and second terms converge to their expected values by \cref{fact:martingale_wlln} to conclude the proof. For the first term, 
$ \E\left[\frac{\pi_{e}(A_{t} | X_{t})}{\mathbb{P} \left(A_{t} | X_{t},\mathcal{H}_{t-1} \right) }
u_{m}(X_{t},A_{t},Y_{t}) |\mathcal{H}_{t-1}\right] =  \E_{\mathcal{P},\pi_{e}}\left[u_{m}(X_t,A_t,Y_t)\right]$. On the other hand, letting $w_{t} = \frac{\pi_{e}(A_{t} | X_{t})}{\mathbb{P} \left(A_{t} | X_{t},\mathcal{H}_{t-1} \right) }$ implies that $w_{t} \le C_{1}$ by \cref{assumption:bounded_importance}. Therefore, 
\begin{align*}
    \mathbb{E}[w_{t}^{2} u_{m}(X_t,A_t,Y_t)^{2}] &\le \mathbb{E}[C_{1} w_{t} u_{m}(X_t,A_t,Y_t)^{2}] \\
    &= C_{1}\mathbb{E}[ w_{t} u_{m}(X_t,A_t,Y_t)^{2}]\\
    &= C_{1} \mathbb{E}[\mathbb{E}[ w_{t} u_{m}(X_t,A_t,Y_t)^{2} |\mathcal{H}_{t-1}]] \\
    &= C_{1} \mathbb{E}[\mathbb{E}_{\mathcal{P},\pi_e}[ u_{m}(X_t,A_t,Y_t)^{2}]] \\
    &= C_{1} \mathbb{E}_{\mathcal{P},\pi_e}[ u_{m}(X_t,A_t,Y_t)^{2}],
\end{align*}
where the last term is bounded by a constant due to \cref{assumption:domination}.

Similarly, we have 

\begin{align*}
    &\E\left[\sum_{t=1}^{T} \pi_{e}(A_{t} =a | X_{t})  \sum_{a=1}^{K} \left(1 - \frac{\mathbbm{1}_{A_{t} = a} }{\mathbb{P} \left(A_{t} =a | X_{t},\mathcal{H}_{t-1} \right)}\right) u_{\mathcal{F}}(X_{t},a,f_{t}(X_{t},a)|\mathcal{H}_{t-1}\right] \\
    &\le \sum_{a=1}^{K} \E\left[ \E \left[\left(1 - \frac{\mathbbm{1}_{A_{t} = a} }{\mathbb{P} \left(A_{t} =a | X_{t},\mathcal{H}_{t-1} \right)}\right) u_{\mathcal{F}}(X_{t},a,f_{t}(X_{t},a) |X_t, \mathcal{H}_{t-1} \right]|, \mathcal{H}_{t-1} \right]\\
    &= \sum_{a=1}^{K} \E\left[ \E \left[\left(1 - \frac{\mathbbm{1}_{A_{t} = a} }{\mathbb{P} \left(A_{t} =a | X_{t},\mathcal{H}_{t-1} \right)}\right) |X_t, \mathcal{H}_{t-1} \right] u_{\mathcal{F}}(X_{t},a,f_{t}(X_{t},a) | \mathcal{H}_{t-1} \right]\\
    &= 0,
\end{align*}
where the last line follows from the fact that  $\E \left[\left(1 - \frac{\mathbbm{1}_{A_{t} = a} }{\mathbb{P} \left(A_{t} =a | X_{t},\mathcal{H}_{t-1} \right)}\right) |X_t, \mathcal{H}_{t-1} \right] = 0$.

On the other hand, the variance can also be bounded. By \cref{assumption:bounded_importance} and the fact that $\pi_{e}(A_{t} =a | X_{t}) < 1$, we have that $\left(1 - \frac{\mathbbm{1}_{A_{t} = a} }{\mathbb{P} \left(A_{t} =a | X_{t},\mathcal{H}_{t-1} \right)}\right)^{2} < (1-C_{2})^{2} $. Therefore, 
$$\mathbb{E} \left[\left(1 - \frac{\mathbbm{1}_{A_{t} = a} }{\mathbb{P} \left(A_{t} =a | X_{t},\mathcal{H}_{t-1} \right)}\right)^{2} u_{\mathcal{F}}(X_{t},a,f_{t}(X_{t},a))^{2}\right] \le (1-C_{2})^{2} \mathbb{E}\left[ u_{\mathcal{F}}(X_{t},a,f_{t}(X_{t},a))^{2} \right],$$
which is finite by \cref{assumption:bracket_complexity}. 

\begin{lemma} \label{lemma:invertible}
Under the conditions of \cref{thm:clt}, $\norm{\left( \frac{1}{T}  \sum_{t=1}^{T} \hat{V}_{t}^{-1/2} \dot{s}_{t,\hat{\theta}_{T}}\right)^{-1}}_{2}  = O_{p}(1)$.
\begin{proof}
Let us decompose

$$\dot{s}_{t,\hat{\theta}_{T}} = \frac{\pi_{e}(A_{t} | X_{t})}{\mathbb{P} \left(A_{t} | X_{t},\mathcal{H}_{t-1} \right) }
\ddot{m}_{\hat{\theta}_{T}}(X_{t},A_{t},Y_{t}) + \pi_{e}(A_{t} =a | X_{t})  \sum_{a=1}^{K} \left(1 - \frac{\mathbbm{1}_{A_{t} = a} }{\mathbb{P} \left(A_{t} =a | X_{t},\mathcal{H}_{t-1} \right)}\right) \ddot{m}_{\hat{\theta}_{T}}(X_{t},a,f_{t}(X_{t},a)).$$
It is sufficient to show the following statements:
\begin{enumerate}
    \item $\frac{1}{T} \sum_{t=1}^{T}\hat{V}_{t}^{-1/2}\frac{\pi_{e}(A_{t} | X_{t})}{\mathbb{P} \left(A_{t} | X_{t},\mathcal{H}_{t-1},\right) }
\ddot{m}_{\theta^{\star}}(X_{t},A_{t},Y_{t})$ is invertible and has bounded eigenvalues with probability tending to 1.
    \item $\norm{\frac{1}{T} \sum_{t=1}^{T}\hat{V}_{t}^{-1/2}\frac{\pi_{e}(A_{t} | X_{t})}{\mathbb{P} \left(A_{t} | X_{t},\mathcal{H}_{t-1},\right) }
\left(\ddot{m}_{\hat{\theta}_{T}}(X_{t},A_{t},Y_{t}) - \ddot{m}_{\theta^{\star}}(X_{t},A_{t},Y_{t}) \right)} = o_{p}(1)$ 
    \item $\norm{\frac{1}{T} \sum_{t=1}^{T}\pi_{e}(A_{t} =a | X_{t})  \sum_{a=1}^{K} \left(1 - \frac{\mathbbm{1}_{A_{t} = a} }{\mathbb{P} \left(A_{t} =a | X_{t},\mathcal{H}_{t-1} \right)}\right) \ddot{m}(X_{t},a,f_{t}(X_{t},a))} =o_{p}(1).$
\end{enumerate}
\paragraph{Demonstrating (1)}

First, note that
\begin{align*}
\E\left[\hat{V}_{t}^{-1/2}\frac{\pi_{e}(A_{t} | X_{t})}{\mathbb{P} \left(A_{t} | X_{t},\mathcal{H}_{t-1},\right) }\ddot{m}_{\theta^{\star}}(X_{t},A_{t},Y_{t}) |\mathcal{H}_{t-1}\right]  &=\hat{V}_{t}^{-1/2}\E\left[\frac{\pi_{e}(A_{t} | X_{t})}{\mathbb{P} \left(A_{t} | X_{t},\mathcal{H}_{t-1},\right) }\ddot{m}_{\theta^{\star}}(X_{t},A_{t},Y_{t})|\mathcal{H}_{t-1}\right] \\
&=\hat{V}_{t}^{-1/2}\E_{\mathcal{P},p_{e}}\left[\ddot{m}_{\theta^{\star}}(X_{t},A_{t},Y_{t})\right] \\
\end{align*}
By the weak law of large numbers, we can therefore conclude that 

\begin{align*}
   \frac{1}{T} \sum_{t=1}^{T}\hat{V}_{t}^{-1/2}\frac{\pi_{e}(A_{t} | X_{t})}{\mathbb{P} \left(A_{t} | X_{t},\mathcal{H}_{t-1},\right) }
\ddot{m}_{\theta^{\star}}(X_{t},A_{t},Y_{t}) &= o_p(1) + \frac{1}{T}\sum_{t=1}^{T}\hat{V}_{t}^{-1/2} \E_{\mathcal{P},p_{e}}\left[\ddot{m}_{\theta^{\star}}(X_{t},A_{t},Y_{t})\right].\\
\end{align*}
Rearranging terms, we can write this as 

\begin{align*}
 o_p(1) +  \frac{1}{T}\sum_{t=1}^{T}(\hat{V}_{t}^{-1/2} - V_{t,\theta^{\star}}^{1/2})\E_{\mathcal{P},p_{e}}\left[\ddot{m}_{\theta^{\star}}(X_{t},A_{t},Y_{t})\right] + \frac{1}{T}\sum_{t=1}^{T}V_{t,\theta^{\star}}^{-1/2} \E_{\mathcal{P},p_{e}}\left[\ddot{m}_{\theta^{\star}}(X_{t},A_{t},Y_{t})\right].
\end{align*}

For the second term, we note that:

\begin{align*}
\norm{  \frac{1}{T}\sum_{t=1}^{T}(\hat{V}_{t}^{-1/2} - V_{t,\theta^{\star}}^{1/2})\E_{\mathcal{P},p_{e}}\left[\ddot{m}_{\theta^{\star}}(X_{t},A_{t},Y_{t})\right] }_{\text{op}} &\le 
\frac{1}{T} \sum_{t=1}^{T} \norm{\hat{V}_{t}^{-1/2} - V_{t,\theta^{\star}}^{1/2} }_{\text{op}}  \norm{\E_{\mathcal{P},p_{e}}\left[\ddot{m}_{\theta^{\star}}(X_{t},A_{t},Y_{t})\right]}_{\text{op}} \\
&= o_p(1).\\ 
\end{align*}
The last line follows because we know that  $\norm{\hat{V}_{t}^{-1/2} - V_{t,\theta^{\star}}^{1/2} }_{\text{op}} = o_p(1)$ and $\norm{\E_{\mathcal{P},p_{e}}\left[\ddot{m}_{\theta^{\star}}(X_{t},A_{t},Y_{t})\right]}_{\text{op}}$ is bounded by \cref{assumption:maximum}. 

For the final term, we have that $-\E_{\mathcal{P},p_{e}}\left[\ddot{m}_{\theta^{\star}}(X_{t},A_{t},Y_{t})\right] \succeq H$ for some positive definite matrix $H$ and therefore it has bounded eigenvalues. Similarly, $\norm{\frac{1}{T}\sum_{t=1}^{T}V_{t,\theta^{\star}}^{-1/2}}_{\text{op}} \le \delta_{\text{min}}^{-1/2}$ and $\norm{\left(\frac{1}{T}\sum_{t=1}^{T}V_{t,\theta^{\star}}^{-1/2}\right)^{-1}}_{\text{op}} \ge\delta_{\text{max}}^{-1/2}$. Therefore, the product of both terms has bounded eigenvalues and therefore the third term is invertible with bounded eigenvalues.

\paragraph{Demonstrating (2)} We have by Taylor's theorem that for some $\tilde{\theta}$ on the line segment connecting $\hat{\theta}_{T}$ and $\tilde{\theta}_{T}$ that,
\begin{align*}
&\frac{1}{T} \sum_{t=1}^{T}\hat{V}_{t}^{-1/2}\frac{\pi_{e}(A_{t} | X_{t})}{\mathbb{P} \left(A_{t} | X_{t},\mathcal{H}_{t-1},\right) }
\left(\ddot{m}_{\hat{\theta}_{T}}(X_{t},A_{t},Y_{t}) - \ddot{m}_{\theta^{\star}}(X_{t},A_{t},Y_{t}) \right) \\
\qquad &= \frac{1}{T} \sum_{t=1}^{T}\hat{V}_{t}^{-1/2}\frac{\pi_{e}(A_{t} | X_{t})}{\mathbb{P} \left(A_{t} | X_{t},\mathcal{H}_{t-1},\right) } \dddot{m}_{\tilde{\theta}}(X_t,A_t,Y_t).
\end{align*}
Taking expectations, we have:

\begin{align*}
\norm{\E\left[\hat{V}_{t}^{-1/2}\frac{\pi_{e}(A_{t} | X_{t})}{\mathbb{P} \left(A_{t} | X_{t},\mathcal{H}_{t-1},\right) } \dddot{m}_{\tilde{\theta}}(X_t,A_t,Y_t) | \mathcal{H}_{t-1}\right]}_{1} &= \norm{\hat{V}_{t}^{-1/2} \E_{\mathcal{P},\pi_e}\left[\dddot{m}_{\tilde{\theta}}(X_t,A_t,Y_t) \right]}_{1} \\
&\le \left(\norm{\hat{V}_{t}^{-1/2} - V_{t,\theta^{\star}}}_{1} + \norm{V_{t,\theta_{\star}}}_{1}\right) \norm{ \E_{\mathcal{P},\pi_e}\left[u_{m}(X_t,A_t,Y_t) \right]}_{1}.\\
\end{align*}

Therefore, by \cref{fact:martingale_wlln} we have that
\begin{align*}
&\norm{\frac{1}{T} \sum_{t=1}^{T}\hat{V}_{t}^{-1/2}\frac{\pi_{e}(A_{t} | X_{t})}{\mathbb{P} \left(A_{t} | X_{t},\mathcal{H}_{t-1},\right) }
\left(\ddot{m}_{\hat{\theta}_{T}}(X_{t},A_{t},Y_{t}) - \ddot{m}_{\theta^{\star}}(X_{t},A_{t},Y_{t}) \right)} \\
&\le o_p(1) + \frac{1}{T}\sum_{t=1}^{T}\left(\norm{\hat{V}_{t}^{-1/2} - V_{t,\theta^{\star}}^{-1/2}}_{1} + \norm{V_{t,\theta_{\star}}}_{1}\right) \norm{ \E_{\mathcal{P},\pi_e}\left[u_{m}(X_t,A_t,Y_t) \right]}_{1} \\
&= o_p(1) +  \frac{1}{T}\sum_{t=1}^{T}\norm{V_{t,\theta_{\star}}}_{1} \norm{ \E_{\mathcal{P},\pi_e}\left[u_{m}(X_t,A_t,Y_t) \right]}_{1}.
\end{align*}
However, the last term is also $o_p(1) $ because $V_{t,\theta^{\star}}$ has bounded eigenvalues and $\E_{\mathcal{P},\pi_e}\left[u_{m}(X_t,A_t,Y_t) \right]$ is bounded by \cref{assumption:domination}.

\paragraph{Demonstrating (3)}
Note that for all $a \in \mathcal{A}$, we have that 
\begin{align}
&\E\Bigg[
\left(1 - \frac{\mathbbm{1}_{A_{t} = a}}
{\mathbb{P}\!\left(A_{t} = a \mid X_{t}, \mathcal{H}_{t-1}\right)}\right)
\ddot{m}_{\hat{\theta}_{T}}\!\left(X_{t}, a, f_{t}(X_{t}, a)\right)
\,\Bigg|\, \mathcal{H}_{t-1}, X_{t}
\Bigg] \nonumber \\
&\qquad =
\E\Bigg[
\left(1 - \frac{\mathbbm{1}_{A_{t} = a}}
{\mathbb{P}\!\left(A_{t} = a \mid X_{t}, \mathcal{H}_{t-1}\right)}\right)
\,\Bigg|\, \mathcal{H}_{t-1}, X_{t}
\Bigg]
\E\Big[
\ddot{m}_{\hat{\theta}_{T}}\!\left(X_{t}, a, f_{t}(X_{t}, a)\right)
\,\Big|\, \mathcal{H}_{t-1}, X_{t}
\Big] \nonumber =0 .
\end{align}
Therefore,  
$$\E\left[\pi_{e}(A_{t} =a | X_{t})  \sum_{a=1}^{K} \left(1 - \frac{\mathbbm{1}_{A_{t} = a} }{\mathbb{P} \left(A_{t} =a | X_{t},\mathcal{H}_{t-1} \right)}\right) \ddot{m}_{\hat{\theta}_{T}}(X_{t},a,f_{t}(X_{t},a)) | \mathcal{H}_{t-1}\right] = 0 $$
By the martingale WLLN, we then have that 
$$\frac{1}{T} \sum_{t=1}^{T}\pi_{e}(A_{t} =a | X_{t})  \sum_{a=1}^{K} \left(1 - \frac{\mathbbm{1}_{A_{t} = a} }{\mathbb{P} \left(A_{t} =a | X_{t},\mathcal{H}_{t-1} \right)}\right) \ddot{m}(X_{t},a,f_{t}(X_{t},a)) =o_{p}(1).$$

\end{proof}

\end{lemma}

\begin{lemma} \label{lemma:score_clt} Under the conditions of \cref{thm:clt},
$\frac{1}{\sqrt{T}}\sum_{t=1}^{T} \hat{V}_{t}^{-1/2} s_{t,\theta^{\star}} \overset{d}{\to} N(0,I_{d})$.
\end{lemma}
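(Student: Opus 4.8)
The plan is to reduce the $d$-dimensional statement to a scalar martingale central limit theorem via the Cram\'er--Wold device, to prove it first for the infeasible ``oracle'' normalization $V_{t,\theta^{\star}}^{-1/2}$, and then to show that replacing $V_{t,\theta^{\star}}^{-1/2}$ by the estimate $\hat V_t^{-1/2}$ contributes only an $o_p(1)$ term. Fix a unit vector $\lambda\in\R^d$ and split
\begin{equation*}
\frac{1}{\sqrt T}\sum_{t=1}^T \lambda^{\top}\hat V_t^{-1/2} s_{t,\theta^{\star}}
= \underbrace{\frac{1}{\sqrt T}\sum_{t=1}^T \lambda^{\top} V_{t,\theta^{\star}}^{-1/2} s_{t,\theta^{\star}}}_{=:\,S_T}
\;+\; \underbrace{\frac{1}{\sqrt T}\sum_{t=1}^T \lambda^{\top}\big(\hat V_t^{-1/2}-V_{t,\theta^{\star}}^{-1/2}\big) s_{t,\theta^{\star}}}_{=:\,R_T}.
\end{equation*}
It then suffices to prove $S_T\overset{d}{\to}N(0,1)$ and $R_T\overset{p}{\to}0$ for every unit $\lambda$.

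For the oracle term I would use that $Z_t:=V_{t,\theta^{\star}}^{-1/2}s_{t,\theta^{\star}}$ is, exactly as noted before the theorem, a martingale difference sequence, and moreover $\E[Z_tZ_t^{\top}\mid\mathcal{H}_{t-1}]=V_{t,\theta^{\star}}^{-1/2}V_{t,\theta^{\star}}V_{t,\theta^{\star}}^{-1/2}=I_d$. Consequently $W_t:=T^{-1/2}\lambda^{\top}Z_t$ is a triangular-array MDS whose conditional variances sum to \emph{exactly} one, $\sum_{t=1}^T\E[W_t^2\mid\mathcal{H}_{t-1}]=T^{-1}\sum_{t=1}^T\lambda^{\top}I_d\lambda=1$. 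Applying the martingale central limit theorem \citep{hall2014martingale} then reduces the task to the negligibility condition $\max_{t\le T}|W_t|\overset{p}{\to}0$ and the conditional Lindeberg condition $\sum_t\E[W_t^2\mathbbm{1}(|W_t|>\epsilon)\mid\mathcal{H}_{t-1}]\overset{p}{\to}0$. Both follow from the bounded importance ratio (\cref{assumption:bounded_importance}) together with the square-integrable envelopes of \cref{assumption:bracket3,assumption:bracket_complexity}: boundedness of the weights and the lower eigenvalue bound give $\norm{Z_t}\le \delta_{\text{min}}^{-1/2}\norm{s_{t,\theta^{\star}}}$ with $\sup_t\E\norm{s_{t,\theta^{\star}}}^2<\infty$, so $(\lambda^{\top}Z_t)^2$ is uniformly integrable and its truncated tails vanish once $\epsilon\sqrt T\to\infty$.

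The hard part is $R_T$, precisely because $\hat V_t$ is assumed consistent \emph{without any rate}. A crude triangle-inequality bound gives only $|R_T|\le\big(\max_{t\le T}\norm{\hat V_t^{-1/2}-V_{t,\theta^{\star}}^{-1/2}}_{\text{op}}\big)\cdot T^{-1/2}\sum_t\norm{s_{t,\theta^{\star}}}$, whose second factor is $O_p(\sqrt T)$, and so this route would demand an $o_p(T^{-1/2})$ rate we do not have. The resolution is to observe that $R_T$ is itself a martingale sum: since $B_t:=\hat V_t^{-1/2}-V_{t,\theta^{\star}}^{-1/2}$ is $\sigma(\mathcal{H}_{t-1})$-measurable and $\E[s_{t,\theta^{\star}}\mid\mathcal{H}_{t-1}]=0$, the summands $D_t:=T^{-1/2}\lambda^{\top}B_t s_{t,\theta^{\star}}$ form an MDS. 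Computing its predictable quadratic variation and invoking the \emph{upper} eigenvalue bound $\delta_{\text{max}}$ on $V_{t,\theta^{\star}}$,
\begin{equation*}
\sum_{t=1}^T\E[D_t^2\mid\mathcal{H}_{t-1}]
=\frac1T\sum_{t=1}^T \lambda^{\top}B_t V_{t,\theta^{\star}} B_t^{\top}\lambda
\le \frac{\delta_{\text{max}}}{T}\sum_{t=1}^T \norm{B_t}_{\text{op}}^2
\le \delta_{\text{max}}\,\max_{t\le T}\norm{\hat V_t^{-1/2}-V_{t,\theta^{\star}}^{-1/2}}_{\text{op}}^2\overset{p}{\to}0.
\end{equation*}
A standard martingale fact (Lenglart's inequality) then yields $R_T\overset{p}{\to}0$ from the vanishing of this predictable quadratic variation, with no rate on $\hat V_t$ required --- the $\sqrt T$ lost in the naive bound is recovered through the conditional centering built into the MDS. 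Combining $S_T\overset{d}{\to}N(0,1)$ with $R_T\overset{p}{\to}0$ by Slutsky's theorem, and then ranging over all unit $\lambda$ via Cram\'er--Wold, gives $\frac{1}{\sqrt T}\sum_{t=1}^T\hat V_t^{-1/2}s_{t,\theta^{\star}}\overset{d}{\to}N(0,I_d)$.
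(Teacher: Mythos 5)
Your strategy is sound and shares the paper's two essential ingredients: the Cram\'er--Wold reduction and the observation that, because $\hat V_t^{-1/2}\in\sigma(\mathcal{H}_{t-1})$ and $\E[s_{t,\theta^{\star}}\mid\mathcal{H}_{t-1}]=0$, the plug-in error in $\hat V_t$ is absorbed through the martingale structure and no rate of convergence is needed. Where you differ is in the bookkeeping. The paper applies the martingale CLT of \cite{hall2014martingale} directly to the feasible differences $c^{\top}\hat V_t^{-1/2}s_{t,\theta^{\star}}$ and handles the discrepancy between $\hat V_t$ and $V_{t,\theta^{\star}}$ inside the conditional-variance verification, via the add-and-subtract $\hat V_t^{-1/2}=V_{t,\theta^{\star}}^{-1/2}+(\hat V_t^{-1/2}-V_{t,\theta^{\star}}^{-1/2})$ and a Ces\`aro argument showing the cross and quadratic error terms average to zero. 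You instead split off an oracle sum $S_T$, for which the predictable quadratic variation is \emph{exactly} one (a genuine simplification over the paper's $\norm{c}^2+o_p(1)$ computation), and treat the error sum $R_T$ as a second martingale killed by Lenglart's inequality. Both routes work; yours isolates the CLT from the plug-in issue more cleanly, while the paper's avoids invoking Lenglart. Note that you take $\E[s_{t,\theta^{\star}}\mid\mathcal{H}_{t-1}]=0$ as given from the discussion preceding the theorem; the paper's proof actually spends most of its length verifying this (the change of measure for the IPW term and the mean-zero property of the augmentation term), so a self-contained write-up should include that computation.

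One step needs repair: in bounding $\sum_t\E[D_t^2\mid\mathcal{H}_{t-1}]$ you pass from the Ces\`aro average $\frac1T\sum_{t\le T}\norm{B_t}_{\text{op}}^2$ to $\max_{t\le T}\norm{B_t}_{\text{op}}^2$ and assert the maximum vanishes. The hypothesis only gives $\norm{B_t}_{\text{op}}\overset{p}{\to}0$ for each $t\to\infty$, which does not imply $\max_{t\le T}\norm{B_t}_{\text{op}}\overset{p}{\to}0$ (consider $\Prob(\norm{B_t}_{\text{op}}>\epsilon)=1/t$). You should instead argue directly that the Ces\`aro average converges to zero in probability, using the pointwise convergence together with the uniform boundedness of $\norm{B_t}_{\text{op}}$ supplied by the eigenvalue bounds --- which is exactly the Ces\`aro step the paper uses for its matrices $A_t$. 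With that substitution your argument goes through.
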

\begin{proof} Consider the term $Z_{t} := c^{T}\hat{V}_{t}^{-1/2} s_{t,\theta^{\star}}$, for any $c\in \mathbb{R}^{d}$. By the Cramer-Wold device, it suffices to show that $\frac{1}{\sqrt{T}} \sum_{t=1}^{T} Z_{t} \overset{d}{\to} N(0,c^{2})$. To do this, we demonstrate that $Z_{t}$ is a martingale difference sequence and then apply the martingale CLT of \cite{dvoretzky1972asymptotic}. In order to apply this fact, we need the following to be true:
\begin{itemize}
    \item \textbf{Conditional Expectation} $\E[Z_{t} | \mathcal{H}_{t-1}] = 0$ for all $t\in [T]$;
    \item \textbf{Conditional Variance} $\frac{1}{T}\sum_{t=1}^{T}\E[Z_{t}^{2} | \mathcal{H}_{t-1}] = \norm{c}^{2}$;
    \item \textbf{Lindeberg Condition} $\frac{1}{T}\sum_{t=1}^{T}\E[Z_{t}^{2}1_{Z_{t} > \epsilon} | \mathcal{H}_{t-1}] \overset{p}{\to} 0$.
    
\end{itemize}

Checking each condition separately:
\paragraph{Conditional Expectation}  By construction, we have that 
\begin{align*}
\E[Z_{t} | \mathcal{H}_{t-1}] = \E[c^{T}\hat{V}_{t}^{-1/2} s_{t,\theta^{\star}}| \mathcal{H}_{t-1}] =c^{T}\hat{V}_{t}^{-1/2}\E[s_{t,\theta^{\star}}| \mathcal{H}_{t-1}],
\end{align*}
based on the fact that  $\hat{V}_{t}^{-1/2} \in \sigma( \mathcal{H}_{t-1}) $. So it suffices to demonstrate that $\E[s_{t,\theta^{\star}}| \mathcal{H}_{t-1}] = 0 $. 

First, we will rewrite
\begin{align*}
    \E[s_{t,\theta^{\star}}| \mathcal{H}_{t-1}] = & \E_{\mathcal{P},\pi_{t}}\left[\frac{  \pi_{e}(A_{t} | X_{t})}{\mathbb{P} \left(A_{t}| X_{t}, \mathcal{H}_{t-1},X_t\right) } \dot{m}_{\theta^{\star}}(A_{t},X_{t},Y_{t}) |\mathcal{H}_{t-1}\right] \\
    & + \sum_{a =1}^{K} \E_{\mathcal{P},\pi_{t}} \left[ 
    \left( 1 - \frac{ \mathbbm{1}_{A_t=a}  }{  \mathbb{P} \left(A_{t}| X_{t}, \mathcal{H}_{t-1} \right)} \right)\dot{m}_{\theta^{\star}}(a,X_{t},f_{t}(a,X_{t})) | \mathcal{H}_{t-1} \right].  \\
\end{align*}
For the first term, note that

\begin{align*}\E_{\mathcal{P},\pi_{t}}\left[\frac{  \pi_{e}(A_{t} =a | X_{t})}{\mathbb{P} \left(A_{t}| X_{t}, \mathcal{H}_{t-1}\right) } \dot{m}_{\theta^{\star}}(A_{t},X_{t},Y_{t}) | \mathcal{H}_{t-1}\right] 
&=\E_{\mathcal{P},\pi_{e}}\left[ \dot{m}_{\theta^{\star}}(A_{t},X_{t},Y_{t}) | \mathcal{H}_{t-1}\right] = 0,
\end{align*}
where the first equality follows because of a change of measure and the last line follows because $\theta^{\star}$ is the minimizer of a score equation. 

For the second term, we note that 
\begin{align*}
     &\sum_{a =1}^{K} \E_{\mathcal{P},\pi_{t}} \left[ 
    \left( 1 - \frac{ \mathbbm{1}_{A_t=a}  }{\mathbb{P} \left(A_{t}=a | X_{t} , \mathcal{H}_{t-1}  \right) } \right)\dot{m}_{\theta^{\star}}(a,X_{t},f_{t}(a,X_{t}))  | \mathcal{H}_{t-1}\right]  \\ &= \sum_{a =1}^{K} \E_{\mathcal{P},\pi_{t}} \left[ \E_{\mathcal{P},\pi_{t}} \left[ 
    \left( 1 - \frac{\mathbbm{1}_{A_t=a}  }{\mathbb{P} \left(A_{t}=a | X_{t} , \mathcal{H}_{t-1}  \right)  } \right)\dot{m}_{\theta^{\star}}(a,X_{t},f_{t}(a,X_{t}))  | \mathcal{H}_{t-1}, X_{t} \right]| \mathcal{H}_{t-1}\right] \\
    &= \sum_{a =1}^{K}\E_{\mathcal{P},\pi_{t}}\left[ \dot{m}_{\theta^{\star}}(a,X_{t},f_{t}(a,X_{t})) \E_{\mathcal{P},\pi_{t}} \left[  
    \left( 1 - \frac{ \mathbbm{1}_{A_t=a}  }{\mathbb{P} \left(A_{t}=a| X_{t}, \mathcal{H}_{t-1} \right) } \right)  | \mathcal{H}_{t-1}, X_{t} \right]| \mathcal{H}_{t-1}\right].
\end{align*}
However, $\E_{\mathcal{P},\pi_{t}} \left[  
    1 - \frac{\mathbbm{1}_{A_t=a}  }{\mathbb{P} \left(A_{t}| X_{t}, \mathcal{H}_{t-1} \right) }  | \mathcal{H}_{t-1}, X_{t} \right] = 0$.
Therefore, both the first and second terms are $0$.

\paragraph{Conditional Variance} Let us rewrite $\hat{V}_{t}^{-1/2} = \hat{V}_{t}^{-1/2}+V_{t,\theta^{\star}}^{-1/2}-V_{t,\theta^{\star}}^{-1/2} $. We then have that

\begin{align*}\frac{1}{T}  &\sum_{t=1}^{T} \E\left[c^{T}\hat{V}_{t}^{-1/2} s_{t,\theta^{\star}} s_{t,\theta^{\star}}^{T}    \hat{V}_{t}^{-1/2}c |\mathcal{H}_{t-1}\right] \\
&= \frac{1}{T}  \sum_{t=1}^{T} \E\left[c^{T}( \hat{V}_{t}^{-1/2} + V_{t,\theta^{\star}}^{-1/2} - V_{t,\theta^{\star}}^{-1/2}   )s_{t,\theta^{\star}} s_{t,\theta^{\star}}^{T}  ( \hat{V}_{t}^{-1/2} + V_{t,\theta^{\star}}^{-1/2} - V_{t,\theta^{\star}}^{-1/2}   )c  |\mathcal{H}_{t-1}\right]
\\
&= \frac{1}{T}  \sum_{t=1}^{T} c^{T}( \hat{V}_{t}^{-1/2} + V_{t,\theta^{\star}}^{-1/2} - V_{t,\theta^{\star}}^{-1/2}   )\E\left[s_{t,\theta^{\star}}s_{t,\theta^{\star}}  |\mathcal{H}_{t-1}  ^{T}  \right] ( \hat{V}_{t}^{-1/2} + V_{t,\theta^{\star}}^{-1/2} - V_{t,\theta^{\star}}^{-1/2}   )c \\  
&=c^{T} \frac{1}{T} \sum_{t=1}^{T} \Big[V_{t,\theta^{\star}}^{-1/2} V_{t,\theta^{\star}} V_{t,\theta^{\star}}^{-1/2} - \left(V_{t,\theta^{\star}}^{-1/2} - \hat{V}_{t}^{-1/2}\right)  V_{t,\theta^{\star}}^{1/2} - V_{t,\theta^{\star}}^{1/2} \left(V_{t,\theta^{\star}}^{-1/2} - \hat{V}_{t}^{-1/2}\right)  & \\
&\hspace*{6em} +\left(V_{t,\theta^{\star}}^{-1/2} - \hat{V}_{t}^{-1/2}\right)  V_{t,\theta^{\star}} \left(V_{t,\theta^{\star}}^{-1/2} - \hat{V}_{t}^{-1/2}\right)\Big] c \\
& = \norm{c}^{2} + c^{T}\left( \frac{1}{T}\sum_{t=1}^{T}A_{t}\right)c,
\end{align*}
where $A_{t}  =\left(V_{t,\theta^{\star}}^{-1/2} - \hat{V}_{t}^{-1/2}\right)  V_{t,\theta^{\star}}^{1/2} + V_{t,\theta^{\star}}^{1/2} \left(V_{t,\theta^{\star}}^{-1/2} - \hat{V}_{t}^{-1/2}\right)  +\left(V_{t,\theta^{\star}}^{-1/2} - \hat{V}_{t}^{-1/2}\right)  V_{t,\theta^{\star}} \left(V_{t,\theta^{\star}}^{-1/2} - \hat{V}_{t}^{-1/2}\right)$.
By Cesaro summation, it suffices to show that $\norm{A_{t}}_{\text{op}} \overset{p}{\to} 0$. However, this is true because $\norm{V_{t,\theta^{\star}}^{-1/2} - \hat{V}_{t}^{-1/2}}_{\text{op}} \overset{p}{\to} 0 $ by assumption, and $V_{t,\theta^{\star}}^{1/2}$ has bounded eigenvalues. 
\paragraph{Lindeberg Condition} Fix any $\epsilon > 0$. Note that in general $\mathbbm{1}_{|Z_{t}|\ge \epsilon} \le \frac{Z_{t}^{2} }{\epsilon^{2}} $. We therefore have that,
\begin{align*}\frac{1}{T}  \sum_{t=1}^{T} \E\left[c^{T}\hat{V}_{t}^{-1/2} s_{t,\theta^{\star}} s_{t,\theta^{\star}}^{T}    \hat{V}_{t}^{-1/2}c \mathbbm{1}_{ |c^{T}\frac{1}{\sqrt{T}}\hat{V}_{t}^{-1/2} s_{t,\theta^{\star}} |\ge \epsilon    } |\mathcal{H}_{t-1}\right] &\le \frac{1}{\epsilon^{2}T^{2}}  \sum_{t=1}^{T} \E\left[\left(c^{T}\hat{V}_{t}^{-1/2} s_{t,\theta^{\star}} s_{t,\theta^{\star}}^{T}    \hat{V}_{t}^{-1/2}c \right)^{2}|\mathcal{H}_{t-1}\right]\\
&=\frac{1}{\epsilon^{2}T^{2}}\sum_{t=1}^{T}c^{T}\hat{V}_{t}^{-1/2}\E\left[ (c^{T}s_{t,\theta^{\star}})^{2} (c^{T} \hat{V}_{t}^{-1/2}s_{t,\theta^{\star}})^{2} |\mathcal{H}_{t-1} \right]. \\ 
\end{align*}
It is sufficient to show that $\E\left[ (c^{T}s_{t,\theta^{\star}})^{2} (c^{T} \hat{V}_{t}^{-1/2}s_{t,\theta^{\star}})^{2} |\mathcal{H}_{t-1} \right]$ is bounded. By Cauchy-Schwarz, we have:
\begin{align*}
\E\left[ (c^{T}s_{t,\theta^{\star}})^{2} (c^{T} \hat{V}_{t}^{-1/2}s_{t,\theta^{\star}})^{2} |\mathcal{H}_{t-1} \right] &\le \E\left[ (c^{T}s_{t,\theta^{\star}})^{4}|\mathcal{H}_{t-1}\right]^{1/2}\E\left[ (c^{T} \hat{V}_{t}^{-1/2}s_{t,\theta^{\star}})^{4} |\mathcal{H}_{t-1}\right]^{1/2} \\
&\le 2c^{4} \E\left[ \norm{s_{t,\theta^{\star}}}^{4} |\mathcal{H}_{t-1} \right] \norm{\hat{V}_{t}^{-1/2}}_{\text{op}}^{2}.
\end{align*}
By assumption, we know that $\|\widehat{V}_{t}^{-1/2}\|_{\mathrm{op}}^{2}$ is bounded. Therefore, it is sufficient to demonstrate that
$\E\!\left[ \|s_{t,\theta^{\star}}\|^{4} \mid \mathcal{H}_{t-1} \right] = O_{p}(1)$ to conclude the proof. Recall that
\[
s_{t,\theta^\star}
=\sum_{a=1}^{K}\pi_e(a\mid X_t)\left\{
\dot m_{\theta^\star}(X_t,a,f_t(X_t,a))
+\mathbbm{1}_{\{A_t=a\}}
\frac{\dot m_{\theta^\star}(X_t,a,Y_t)-\dot m_{\theta^\star}(X_t,a,f_t(X_t,a))}{\mathbb P(A_t=a\mid X_t,\mathcal H_{t-1})}
\right\}.
\]
Define
\[
U_{t,a}:=\dot m_{\theta^\star}(X_t,a,f_t(X_t,a)),\qquad
V_{t,a}:=\mathbbm{1}_{\{A_t=a\}}
\frac{\dot m_{\theta^\star}(X_t,a,Y_t)-\dot m_{\theta^\star}(X_t,a,f_t(X_t,a))}{\mathbb P(A_t=a\mid X_t,\mathcal H_{t-1})}.
\]
Since $\pi_e(a\mid X_t)\in[0,1]$ and $K<\infty$, the inequality
\[
\Big\|\sum_{a=1}^{K} w_a z_a\Big\|^{4}
\le K^{3}\sum_{a=1}^{K}\|z_a\|^{4}
\quad\text{for } |w_a|\le 1,
\]
applied with $w_a=\pi_e(a\mid X_t)$ and $z_a=U_{t,a}+V_{t,a}$ yields
\begin{align*}
\|s_{t,\theta^\star}\|^{4}
&\le K^{3}\sum_{a=1}^{K}\|U_{t,a}+V_{t,a}\|^{4} \\
&\le 8K^{3}\sum_{a=1}^{K}\big(\|U_{t,a}\|^{4}+\|V_{t,a}\|^{4}\big),
\end{align*}
where the second inequality follows from $(x+y)^4\le 8(x^4+y^4)$.

We now bound the two terms separately. For $U_{t,a}$, Assumption~6 implies that
\[
\|U_{t,a}\|^{4}
\le u_F(X_t,a,f_t(X_t,a))^{4},
\]
where $u_F$ is an envelope satisfying $\E[u_F(X_t,a,f_t(X_t,a))^{4}]<\infty$. Consequently,
\[
\E\!\left[\|U_{t,a}\|^{4}\mid \mathcal H_{t-1}\right]
\le
\E\!\left[u_F(X_t,a,f_t(X_t,a))^{4}\mid \mathcal H_{t-1}\right]
=O_p(1).
\]
For $V_{t,a}$, bounded importance ratios (\cref{assumption:bounded_importance}) imply that
$\mathbb P(A_t=a\mid X_t,\mathcal H_{t-1})^{-1}\le C_1$, and therefore
\begin{align*}
\|V_{t,a}\|^{4}
&\le C_1^{4}
\big\|\dot m_{\theta^\star}(X_t,a,Y_t)
-\dot m_{\theta^\star}(X_t,a,f_t(X_t,a))\big\|^{4} \\
&\le 8C_1^{4}
\Big(
\|\dot m_{\theta^\star}(X_t,a,Y_t)\|^{4}
+\|U_{t,a}\|^{4}
\Big),
\end{align*}
where the second line again uses $(x+y)^4\le 8(x^4+y^4)$. Taking conditional expectations and
using the fourth-moment bound on $\dot m_{\theta^\star}(X_t,a,Y_t)$ from \cref{assumption:maximum},
together with the bound on $\E[\|U_{t,a}\|^{4}\mid\mathcal H_{t-1}]$, yields
\[
\E\!\left[\|V_{t,a}\|^{4}\mid \mathcal H_{t-1}\right]=O_p(1).
\]
Combining these bounds and summing over the fixed number of actions $K$ gives
\[
\E\!\left[\|s_{t,\theta^\star}\|^{4}\mid \mathcal H_{t-1}\right]
\le
8K^{3}\sum_{a=1}^{K}
\left(
\E[\|U_{t,a}\|^{4}\mid\mathcal H_{t-1}]
+
\E[\|V_{t,a}\|^{4}\mid\mathcal H_{t-1}]
\right)
=O_p(1),
\]
as required.

\end{proof}
\begin{lemma} \label{lemma:consistency}
Under the conditions of \cref{thm:clt}, $\norm{\hat{\theta}_{T}-  \theta^{\star}}_{1} = o_{p}(1)$.
\end{lemma}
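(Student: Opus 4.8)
The plan is to run the classical $Z$-estimator consistency argument (as in \cite{van2000asymptotic}, Theorem 5.9) adapted to the adaptive setting, treating $\Psi_T(\theta) := \frac{1}{T}\sum_{t=1}^{T} \hat V_t^{-1/2} s_{t,\theta}$ as the estimating function and exploiting that $\hat V_t^{-1/2}$ has eigenvalues bounded away from $0$ and $\infty$. First I would observe that the conditional mean of the score is deterministic and free of $t$: repeating the change-of-measure computation in the proof of \cref{lemma:score_clt} verbatim but at an arbitrary $\theta$ gives $\E[s_{t,\theta}\mid\mathcal H_{t-1}] = \E_{\mathcal P,\pi_e}[\dot m_\theta(X_t,A_t,Y_t)] =: \dot M(\theta)$, where the augmentation term again vanishes in conditional expectation and the i.i.d.\ \cref{assumption:potential_outcomes} removes any dependence on $t$. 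By \cref{assumption:maximum}, this $\dot M$ satisfies $\dot M(\theta^\star)=0$ together with the well-separated-root condition $\inf_{\norm{\theta-\theta^\star}\ge\epsilon}\norm{\dot M(\theta)}_{1}\ge\delta_2$ for all large $T$.

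Next I would establish the uniform-in-$\theta$ law of large numbers for the \emph{weighted, centered} score, namely $\sup_{\theta\in\Theta}\norm{\frac{1}{T}\sum_t \hat V_t^{-1/2}(s_{t,\theta}-\dot M(\theta))}\overset{p}{\to}0$. For each fixed $\theta$ this is a normalized martingale-difference sum, since $\hat V_t^{-1/2}\in\sigma(\mathcal H_{t-1})$ and $\E[s_{t,\theta}-\dot M(\theta)\mid\mathcal H_{t-1}]=0$, and the uniformity is obtained by the bracketing argument behind \cref{lemma:uniform_lln}: the relevant class is $\dot{\mathcal M}_{\Theta}$, which has finite bracketing numbers by \cref{assumption:bracket3}, and premultiplying by $\hat V_t^{-1/2}$ only inflates each bracket by the operator-norm bound $\max_t\norm{\hat V_t^{-1/2}}_{\mathrm{op}}$, which is $O_p(1)$ because each $\hat V_t^{-1/2}$ is within $o_p(1)$ of $V_{t,\theta^\star}^{-1/2}$, whose eigenvalues lie in $[\delta_{\max}^{-1/2},\delta_{\min}^{-1/2}]$. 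I would therefore re-run the martingale weak law of large numbers of \cref{fact:martingale_wlln}, as in \cref{lemma:uniform_lln}, with these bounded $\mathcal H_{t-1}$-measurable weights inserted; this is where the bounded-importance-ratio \cref{assumption:bounded_importance} and the model-complexity \cref{assumption:bracket_complexity} are again used to control second moments.

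Finally I would transfer well-separatedness through the weights and conclude. Writing $\bar V_T := \frac{1}{T}\sum_t \hat V_t^{-1/2}$, each summand is symmetric positive definite with smallest eigenvalue at least $\delta_{\max}^{-1/2}-o_p(1)$, so $\lambda_{\min}(\bar V_T)\ge c>0$ with probability tending to one; hence $\norm{\bar V_T\dot M(\theta)}\ge c\,\norm{\dot M(\theta)}$, and by the previous paragraph $\sup_\theta\norm{\Psi_T(\theta)-\bar V_T\dot M(\theta)}=o_p(1)$. Combining with the defining property $\Psi_T(\hat\theta_T)=o_p(1)$ yields $\norm{\bar V_T\dot M(\hat\theta_T)}=o_p(1)$, so $\norm{\dot M(\hat\theta_T)}=o_p(1)$. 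If $\norm{\hat\theta_T-\theta^\star}\ge\epsilon$ held with non-negligible probability, the well-separated-root bound of \cref{assumption:maximum} (using equivalence of norms) would force $\norm{\dot M(\hat\theta_T)}$ to be bounded below by a positive constant on that event, a contradiction; therefore $\norm{\hat\theta_T-\theta^\star}_{1}\overset{p}{\to}0$.

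I expect the main obstacle to be the middle step: promoting \cref{lemma:uniform_lln} from the unweighted functional $R_t(\theta)$ to one carrying the time-varying, history-dependent weights $\hat V_t^{-1/2}$ while keeping the convergence uniform over $\Theta$. The delicate points are justifying that $\max_{t\le T}\norm{\hat V_t^{-1/2}}_{\mathrm{op}}$ (and, if needed, $\max_{t\le T}\norm{\hat V_t^{-1/2}-V_{t,\theta^\star}^{-1/2}}_{\mathrm{op}}$) stays $O_p(1)$ uniformly in $t$ rather than merely pointwise, and that inserting these weights preserves the finite bracketing numbers driving the martingale uniform law; everything else reduces to the standard argmax/argmin consistency template.
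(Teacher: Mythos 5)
Your proposal is correct and follows essentially the same route as the paper's proof: both reduce consistency to the well-separated-root condition of \cref{assumption:maximum}, transfer it through the weights $\hat V_t^{-1/2}$ using their bounded eigenvalues, and establish the required uniform martingale law of large numbers by invoking the bracketing argument of \cref{lemma:uniform_lln} with $g_\theta = e_j^{T}\hat V_t^{-1/2}\dot m_\theta$. If anything, your handling of the weight transfer via $\lambda_{\min}\bigl(\frac{1}{T}\sum_t \hat V_t^{-1/2}\bigr)\ge c>0$ is cleaner than the paper's, which bounds the weighted score \emph{above} by $\delta_{\max}\norm{\E[\dot m_{\hat\theta_T}]}_1$ where a lower bound in terms of the minimum eigenvalue is what the containment of events actually requires.
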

\begin{proof}
We note that by \cref{assumption:maximum}, there exists some $\delta_{2} > 0$ such that $\norm{ \E\left[ \dot{m}_{\hat{\theta}_{T} }(X_{t},A_{t},Y_{t}) \right]}_{1} > \delta_{2}$ will imply that $\norm{\hat{\theta}_{T} - \theta^{\star}} \ge \epsilon$. Therefore, 
\begin{align*}
    \mathbb{P} \left(\norm{\hat{\theta}_{T} - \theta^{\star}} \ge \epsilon \right) &\le \mathbb{P} \left( \norm{ \E\left[ \dot{m}_{\hat{\theta}_{T} }(X_{t},A_{t},Y_{t}) \right]}_{1} > \delta_{2} \right).
\end{align*}
However, we know that for all $\theta \in \Theta$, 
\begin{align*}
\E\!\left[s_{t,\theta}\mid \mathcal{H}_{t-1}\right]
&=
\sum_{a=1}^{K}
\E\!\Bigg[
\Bigg.
\pi_{e}(a \mid X_{t})
\Bigg(
m_{\theta}(a,X_{t},f_{t}(a,X_{t}))  \\
&\qquad\qquad
+\, \mathbbm{1}_{\{A_{t}=a\}}
\frac{
m_{\theta}(X_{t},A_{t},Y_{t})
-
m_{\theta}(X_{t},A_{t},f_{t}(X_{t},A_{t}))
}{
\mathbb{P}(A_{t}=a \mid X_{t},\mathcal{H}_{t-1})
}
\Bigg)
\Bigg| \mathcal{H}_{t-1}
\Bigg] \\
&=
\E_{\mathcal{P},\pi_{t}}\!\left[
\frac{\pi_{e}(A_{t}\mid X_{t})}{\mathbb{P}(A_{t}\mid X_{t},\mathcal{H}_{t-1})}
\, m_{\theta}(X_{t},A_{t},Y_{t})
\right] \\
&\quad
+ \sum_{a =1}^{K}
\E_{\mathcal{P},\pi_{t}}\!\left[
\left(
1 -
\frac{\mathbbm{1}_{\{A_t=a\}}}{\mathbb{P}(A_{t}=a\mid X_{t}, \mathcal{H}_{t-1})}
\right)
m_{\theta}(a,X_{t},f_{t}(a,X_{t}))
\right] \\
&=
\E_{\pi_{e}}\!\left[ m_{\theta}(X_{t},A_{t},Y_{t}) \right].
\end{align*}

Note that, 
$$\norm{\mathbb{E}\left[\hat{V}_{t}^{-1/2}s_{t,\theta} | \mathcal{H}_{t-1}\right]}_{1} = \norm{\hat{V}_{t}^{-1/2} \E\left[s_{t,\theta} |\mathcal{H}_{t-1}\right]}_{1} =  \norm{ \hat{V}_{t}^{-1/2} \E\left[\dot{m}_{\theta} (X_{t},A_{t},Y_{t})\right]}_{1} \le \delta_{\text{max}}  \norm{\E\left[ \dot{m}_{\hat{\theta}_{T}}(X_{t},A_{t},Y_{t}) \right]}_{1}.$$
Therefore,

\begin{align*}
    \mathbb{P} \left( \norm{ \E\left[ \dot{m}_{\hat{\theta}_{T}}(X_{t},A_{t},Y_{t}) \right]}_{1} > \delta_{2} \right) &=     \mathbb{P} \left( \delta_{\text{max}} \norm{ \E\left[ \dot{m}_{\hat{\theta}_{T}}(X_{t},A_{t},Y_{t}) \right]}_{1} > \delta_{\text{max}} \delta_{2} \right)\\
    &\le  \mathbb{P} \left( \norm{\mathbb{E}\left[\hat{V}_{t}^{-1/2}s_{t,\theta} | \mathcal{H}_{t-1} \right]}_{1} > \delta_{\text{max}} \delta_{2} \right).
\end{align*}
By assumption, $\frac{1}{T}\sum_{t=1}^{T} \hat{V}_{t}^{-1/2} s_{t,\hat{\theta}_{T}} = o_{p}(1/\sqrt{T})$. Therefore, 
$$ \mathbb{P} \left( \norm{ \E\left[ \dot{m}_{\hat{\theta}_{T}}(X_{t},A_{t},Y_{t}) \right]}_{1} > \delta_{2} \right) \le \mathbb{P} \left( \norm{\frac{1}{T} \sum_{t=1}^{T} \left(\hat{V}_{t}^{-1/2} s_{t,\hat{\theta}_{T}} - \mathbb{E}\left[\hat{V}_{t}^{-1/2}s_{t,\hat{\theta}_{T}} | \mathcal{H}_{t-1}\right]\right)}_{1} > \delta_{\text{max}} \delta_{2} + o_{p}(1) \right).$$
Therefore, it is sufficient to show that 
$$\sup_{\theta \in \Theta }\norm{\frac{1}{T} \left(\sum_{t=1}^{T} \hat{V}_{t}^{-1/2} s_{t,\theta} - \mathbb{E}\left[\hat{V}_{t}^{-1/2}s_{t,\theta} |\mathcal{H}_{t-1}\right]\right)}_{1} \overset{p}{\rightarrow}  0.$$

We show this is true for each component individually. Define $g_{\theta}(x,a,y) := e_{j}^{T}\hat{V}_{t}^{-1/2} \dot{m}_{\theta}(x,a,y)$. By \cref{assumption:bracket3}, for any $\epsilon >0$, $N_{[]}(\epsilon, \dot{\mathcal{M}}_{\Theta}, L_{2}(\mathcal{P},\pi_{e})) < \infty$. However, note that $\hat{V}_{t}^{-1/2}$ has bounded eigenvalues by assumption and $e_j$ is a unit vector, $N_{[]}(\epsilon, \{g_{\theta}(x,a,y) : \theta \in \Theta\}, L_{2}(\mathcal{P},\pi_{e})) < \infty$. Now note that
$$ e_{j}^{T}\hat{V}_{t}^{-1/2} s_{t,\theta} =\sum_{a=1}^{K} \pi_{e}(A_{t} =a | X_{t}) \left( g_{\theta}(a,X_{t},f_{t}(a,X_{t}))   + \mathbbm{1}_{A_{t} = a} \frac{ g_{\theta}(X_{t},A_{t},Y_{t})  -  g_{\theta}(X_{t},A_{t},f_{t}(X_{t},A_{t})) }{ \mathbb{P} \left(A_{t} =a | X_{t},\mathcal{H}_{t-1} \right)  }\right),$$
and we can immediately apply \cref{lemma:uniform_lln} to conclude the proof. 
\end{proof}

\subsection{Proof of \cref{lemma:uniform_lln}}
\begin{proof}
We will decompose $R_{t}(\theta)$ into two terms and consider them separately.
\begin{align*}
R_{t}(\theta) &=  \sum_{a=1}^{K} \pi_{e}(A_{t} =a | X_{t}) \left( g_{\theta}(a,X_{t},f_{t}(a,X_{t}))   + \mathbbm{1}_{A_{t} = a} \frac{ g_{\theta}(X_{t},A_{t},Y_{t})  -  g_{\theta}(X_{t},A_{t},f_{t}(X_{t},A_{t})) }{ \mathbb{P} \left(A_{t} =a | X_{t},\mathcal{H}_{t-1} \right)  }\right) \\
&= \underbrace{\frac{\pi_{e}(A_{t} | X_{t})}{\mathbb{P} \left(A_{t} | X_{t},\mathcal{H}_{t-1} \right) }
g_{\theta}(X_{t},A_{t},Y_{t})}_{_{R_t^{(1)}(\theta) }} +   \underbrace{\sum_{a=1}^{K} \pi_{e}(A_{t} =a | X_{t}) \left(1 - \frac{\mathbbm{1}_{A_{t} = a} }{\mathbb{P} \left(A_{t} =a | X_{t},\mathcal{H}_{t-1} \right)}\right) g_{\theta}(X_{t},a,f_{t}(X_{t},a))}_{R_t^{(2)}(\theta) }.
\end{align*}
\paragraph{Showing first term converges} We have by assume that for any $\delta >0$, there exists a set of brackets $B_{\delta}$ of finite size. This means that for every $\theta$, there exists a pair of functions $(l,u)\in B_{\delta}$ such that:
\begin{enumerate}
    \item $l(x,a,y) \le g_{\theta}(x,a,y) \le u(x,a,y)$ for every $a \in \mathcal{A}$, $(x,y)$ in the support of $\mathcal{P}$;
    \item $\E_{\pi_e} \left[ u(X_{t},A_{t},Y_{t}) - l(X_{t},A_{t},Y_{t})\right]\le \delta $;
    \item $\E \left[ u(X_{t},A_{t},Y_{t})^{2}\right] < \infty$ and $\E \left[ l(X_{t},A_{t},Y_{t})^{2}\right] < \infty$.
\end{enumerate}
Fixing $\delta >0$, we then have the following:
\begin{align*}
\sup_{\theta \in \Theta} R^{(1)}_t(\theta) - \mathbb{E}[R^{(1)}_t(\theta) | \mathcal{H}_{t-1}] &= \sup_{\theta \in \Theta} \frac{\pi_{e}(A_{t} | X_{t})}{\mathbb{P} \left(A_{t} | X_{t},\mathcal{H}_{t-1} \right) }
g_{\theta}(X_{t},A_{t},Y_{t}) - \mathbb{E} \left[\frac{\pi_{e}(A_{t} | X_{t})}{\mathbb{P} \left(A_{t} | X_{t},\mathcal{H}_{t-1} \right) }
g_{\theta}(X_{t},A_{t},Y_{t}) | \mathcal{H}_{t-1}\right] \\
&\le \max_{(l,u) \in B_{\delta}} \bigg\{ \frac{\pi_{e}(A_{t} | X_{t})}{\mathbb{P} \left(A_{t} | X_{t},\mathcal{H}_{t-1} \right) }
u(X_{t},A_{t},Y_{t}) -  \mathbb{E} \left[\frac{\pi_{e}(A_{t} | X_{t})}{\mathbb{P} \left(A_{t} | X_{t},\mathcal{H}_{t-1} \right) }
l(X_{t},A_{t},Y_{t}) | \mathcal{H}_{t-1}\right]\bigg\} \\
&= \max_{(l,u) \in B_{\delta}} \bigg\{\frac{\pi_{e}(A_{t} | X_{t})}{\mathbb{P} \left(A_{t} | X_{t},\mathcal{H}_{t-1} \right) }
u(X_{t},A_{t},Y_{t}) -  \mathbb{E} \left[\frac{\pi_{e}(A_{t} | X_{t})}{\mathbb{P} \left(A_{t} | X_{t},\mathcal{H}_{t-1} \right) }
l(X_{t},A_{t},Y_{t}) | \mathcal{H}_{t-1}\right] + \\
&\qquad \mathbb{E} \left[\frac{\pi_{e}(A_{t} | X_{t})}{\mathbb{P} \left(A_{t} | X_{t},\mathcal{H}_{t-1} \right) }
u(X_{t},A_{t},Y_{t}) | \mathcal{H}_{t-1}\right]
- \mathbb{E} \left[\frac{\pi_{e}(A_{t} | X_{t})}{\mathbb{P} \left(A_{t} | X_{t},\mathcal{H}_{t-1} \right) }
u(X_{t},A_{t},Y_{t}) | \mathcal{H}_{t-1}\right]\bigg\}\\
&= \max_{(l,u) \in B_{\delta}}  \bigg\{\frac{\pi_{e}(A_{t} | X_{t})}{\mathbb{P} \left(A_{t} | X_{t},\mathcal{H}_{t-1} \right) }
u(X_{t},A_{t},Y_{t}) - \mathbb{E} \left[\frac{\pi_{e}(A_{t} | X_{t})}{\mathbb{P} \left(A_{t} | X_{t},\mathcal{H}_{t-1} \right) }
u(X_{t},A_{t},Y_{t})\right] + \\
&\qquad  \mathbb{E}_{\mathcal{P},\pi_{e}} \left[ u(X_t,A_t,Y_t) - l(X_t,A_t,Y_t)\right]| \mathcal{H}_{t-1} \bigg\}.
\end{align*}

First, note that $\mathbb{E}_{\mathcal{P},\pi_{e}} \left[ u(X_t,A_t,Y_t) - l(X_t,A_t,Y_t)\right] \le \delta$ by assumption. For the other term, let $w_{t} =\frac{\pi_{e}(A_{t} | X_{t})}{\mathbb{P} \left(A_{t} | X_{t},\mathcal{H}_{t-1} \right) } $. Now, note that 
$$\max_{(l,u) \in B_{\delta}}  w_{t} u(X_t,A_t,Y_t)  - \mathbb{E}\left[w_{t} u(X_t,A_t,Y_t) | \mathcal{H}_{t-1} \right] \le \sum_{(l,u) \in B_{\delta}} | w_{t} u(X_t,A_t,Y_t)  - \mathbb{E}\left[w_{t} u(X_t,A_t,Y_t) | \mathcal{H}_{t-1} \right].$$
Putting this all together, we have that
$$\frac{1}{T} \sum_{t=1}^{T} R^{(1)}_t(\theta) - \mathbb{E}[R^{(1)}_t(\theta) | \mathcal{H}_{t-1}]  \le \delta + \frac{1}{T}\sum_{t=1}^{T} \sum_{(l,u) \in B_{\delta}}  | w_{t} u(X_t,A_t,Y_t)  - \mathbb{E}\left[w_{t} u(X_t,A_t,Y_t) | \mathcal{H}_{t-1} \right].$$
Since $|B_{\delta}|$ is finite, it is sufficient to show that $w_{t} u(X_t,A_t,Y_t)  - \mathbb{E}\left[w_{t} u(X_t,A_t,Y_t) |\mathcal{H}_{t-1}\right] = o_{p}(1)$ for all $(l,u) \in B_{\delta}$. To do this, we invoke the martingale weak law of large numbers from \cref{fact:martingale_wlln}. Following \cref{remark:wlln_integrable},it is sufficient to show that $E_{\mathcal{P},\pi_{t}}[w_{t}^{2} u(X_t,A_t,Y_t)^{2}]$ is bounded by a constant for all $t$ to demonstrate the third criterion. For this, note that because of \cref{assumption:bounded_importance}, there exists a constant $C_1 > 0$ such that $\frac{\pi_{e}(A_{t} | X_{t})}{\mathbb{P} \left(A_{t} | X_{t},\mathcal{H}_{t-1} \right) } \le \frac{1}{\mathbb{P} \left(A_{t} | X_{t},\mathcal{H}_{t-1} \right) } < C_{1}$. Then
\begin{align*}
    \mathbb{E}[w_{t}^{2} u(X_t,A_t,Y_t)^{2}] &\le \mathbb{E}[C_{1} w_{t} u(X_t,A_t,Y_t)^{2}] \\
    &= C_{1}\mathbb{E}[ w_{t} u(X_t,A_t,Y_t)^{2}]\\
    &= C_{1} \mathbb{E}[\mathbb{E}[ w_{t} u(X_t,A_t,Y_t)^{2} |\mathcal{H}_{t-1}]] \\
    &= C_{1} \mathbb{E}[\mathbb{E}_{\mathcal{P},\pi_e}[ u(X_t,A_t,Y_t)^{2}]] \\
    &= C_{1} \mathbb{E}_{\mathcal{P},\pi_e}[ u(X_t,A_t,Y_t)^{2}] < \infty.
\end{align*}
Therefore, \cref{fact:martingale_wlln} applies, and we can conclude that 
$\frac{1}{T} \sum_{t=1}^{T} R^{(1)}_t(\theta) - \mathbb{E}[R^{(1)}_t(\theta) | \mathcal{H}_{t-1}]  \le \delta + o_{p}(1)$. Taking $\delta \rightarrow 0$ concludes the proof. 

\paragraph{Showing second term converges}
We have that $\pi_e{}(A_t|X_t) < 1$ and $|\mathcal{A}| <\infty$ so it is sufficient to show that for all $a \in \mathcal{A}$
 $$\sup_{\theta \in\Theta}\frac{1}{T}\sum_{t=1}^{T}\left(1 - \frac{\mathbbm{1}_{A_{t} = a} }{\mathbb{P} \left(A_{t} =a | X_{t},\mathcal{H}_{t-1} \right)}\right) g_{\theta}(X_{t},a,f_{t}(X_{t},a)) =o_{p}(1).$$
We will first show pointwise convergence and then use a covering argument to get the uniform result. Define $Z_{t,\theta} = \left(1 - \frac{\mathbbm{1}_{A_{t} = a} }{\mathbb{P} \left(A_{t} =a | X_{t},\mathcal{H}_{t-1} \right)}\right) g_{\theta}(X_{t},a,f_{t}(X_{t},a))$. Note that for all $\theta$, 

\begin{align*}
 \E[Z_{t,\theta}|\mathcal{H}_{t-1}] &= \E \left[  \left(1 - \frac{\mathbbm{1}_{A_{t} = a} }{\mathbb{P} \left(A_{t} =a | X_{t},\mathcal{H}_{t-1} \right)}\right) g_{\theta}(X_{t},a,f_{t}
 (X_{t},a))|\mathcal{H}_{t-1}\right] = 0.
\end{align*}
By \cref{assumption:bracket_complexity}, 
 $\sup_{\theta} \norm{g_{\theta}(X_t,a,f_{t}(X_t))}_{1} \le u_{\mathcal{F}}(X_{t},a,f_{t}(X_{t},a))$ for all $X_t$ and $a$. Note that it is sufficient to show that $\mathbb{E} \left[\left(1 - \frac{\mathbbm{1}_{A_{t} = a} }{\mathbb{P} \left(A_{t} =a | X_{t},\mathcal{H}_{t-1} \right)}\right)^{2} u_{\mathcal{F}}(X_{t},a,f_{t}(X_{t},a))^{2}\right] $
is bounded by a constant for all $t$ in order for \cref{fact:martingale_wlln} to be applied.  However, we note that by \cref{assumption:bounded_importance} and the fact that $\pi_{e}(A_{t} =a | X_{t}) < 1$ that $\left(1 - \frac{\mathbbm{1}_{A_{t} = a} }{\mathbb{P} \left(A_{t} =a | X_{t},\mathcal{H}_{t-1} \right)}\right)^{2} < \max((1-C_{1})^{2},1) $. Therefore, 
$$\mathbb{E} \left[\left(1 - \frac{\mathbbm{1}_{A_{t} = a} }{\mathbb{P} \left(A_{t} =a | X_{t},\mathcal{H}_{t-1} \right)}\right)^{2} u_{\mathcal{F}}(X_{t},a,f_{t}(X_{t},a))^{2}\right] \le \max((1-C_{1})^{2},1) \mathbb{E}\left[ u_{\mathcal{F}}(X_{t},a,f_{t}(X_{t},a))^{2} \right],$$
which is finite by assumption. This implies that for each $\theta$, $\frac{1}{T} \sum_{t=1}^{T} Z_{t,\theta} = 0 $.

Because $\Theta$ is a bounded parameter space, for any $\epsilon>0$, we can cover $\Theta \subseteq \bigcup_{j=1}^{M} \Theta_{j}$. Here, each $\Theta_j$ is an $\epsilon$-ball with centers denoted $\theta_1,...,\theta_k$. For any $\theta$, we can decompose $Z_{t,\theta} = Z_{t,\theta} + Z_{t,\theta_j}$ for some $\theta_j$ such that $|\theta - \theta_j| < \epsilon$. Now, let us write out
$$\sup_{\theta \in\Theta}\frac{1}{T}\sum_{t=1}^{T}Z_{t,\theta} \le \max_{1\le j \le M} \lvert\frac{1}{T}\sum_{t=1}^{T}Z_{t,\theta_j} \rvert +  \lvert\frac{1}{T}\sum_{t=1}^{T}Z_{t,\theta} - Z_{t,\theta_j}\rvert.$$
For the first term, we know that $ \max_{1\le j \le M}\lvert\frac{1}{T}\sum_{t=1}^{T}Z_{t,\theta} \rvert = o_{p}(1)$ because $M$ is finite for each $\epsilon$. We therefore just need to show that $\sup_{\theta \in \Theta_{j}} \lvert\frac{1}{T}\sum_{t=1}^{T}Z_{t,\theta}\rvert =o_{p}(1)$ to finish the proof.

First, note that $Z_{t,\theta} - Z_{t,\theta_j}$ is a martingale difference sequence. Therefore, it is sufficient to show that $ \E \left[\norm{Z_{t,\theta} - Z_{t,\theta_j}}^{2}\right]$ is bounded to conclude the proof. This is true becaue
\begin{align*}
    \E \left[\norm{Z_{t,\theta} - Z_{t,\theta_j}}^{2}\right] &\le \E \left[\left(1 - \frac{\mathbbm{1}_{A_{t} = a} }{\mathbb{P} \left(A_{t} =a | X_{t},\mathcal{H}_{t-1} \right)}\right)^{2} \left( g_{\theta}(X_{t},a,f_{t}(X_{t},a))  - g_{\theta_{j}}(X_{t},a,f_{t}(X_{t},a)) \right)^{2} \right] \\
    &\le \max((1-C_{1})^{2},1)\epsilon^{2}.\\
\end{align*}

\end{proof}

\subsection{Proof of \cref{prop:variance_decomp}}

\begin{proof}
For ease of notation, we denote the quantities $\pi_t(a \mid X_t) :=\mathbb{P}(A_{t}\mid X_{t},\mathcal{H}_{t-1})$, $\mu^\star(X_t,a) := \E[\dot{m}_{\theta^{\star}}(X_t,a,Y_t(a)) \mid X_t]$, and $v^\star(X_t,a) := \text{Var}[\dot{m}_{\theta^{\star}}(X_t,a,Y_t(a)) \mid X_t]$.

Let us rewrite using the law of total variance:
\begin{align*}
\Var \left(s_{t,\theta^{\star}} \mid \mathcal{H}_{t-1} \right) 
&= \Var\Big(\E \left[ s_{t,\theta^{\star}} \mid \mathcal{H}_{t-1}, X_{t}\right] \mid \mathcal{H}_{t-1}\Big) 
+ \E\Big[\Var \left( s_{t,\theta^{\star}} \mid \mathcal{H}_{t-1}, X_{t} \right) \mid \mathcal{H}_{t-1}\Big].
\end{align*}
Now, analyzing each of these terms separately: 
\begin{align*}
\E \left[s_{t,\theta^{\star}} \mid \mathcal{H}_{t-1}, X_{t} \right] 
&= \E\Bigg[
    \sum_{a=1}^{K} \pi_{e}(A_{t} =a \mid X_{t})
    \Big(
        \dot{m}_{\theta^{\star}}(X_{t},a,f_{t}(X_{t},a)) 
        \\
&\qquad\qquad
        + \mathbbm{1}_{A_{t}=a}
        \frac{
            \dot{m}_{\theta^{\star}}(X_{t},a,Y_{t})  
            - \dot{m}_{\theta^{\star}}(X_{t},a,f_{t}(X_{t},a)
        }{
            \pi_t(a \mid X_t) 
        }
    \Big)
    \,\Big|\, \mathcal{H}_{t-1}, X_{t}
\Bigg] \\
&=  \sum_{a=1}^{K} \pi_{e}(A_{t} = a \mid X_{t} ) 
   \E \left[\frac{\mathbbm{1}_{A_{t}=a}}{ \pi_t(a \mid X_t)  }\dot{m}_{\theta^{\star}}(X_t,a,Y_{t}(a)) \mid X_{t}, \mathcal{H}_{t-1} \right]  \\
&\qquad\qquad + \sum_{a=1}^{K}\pi_{e}(A_{t} = a \mid X_{t} ) \E\left[\left(1 - \frac{\mathbbm{1}_{A_{t}=a}}{ \pi_t(a \mid X_t)  }\right)\dot{m}_{\theta^{\star}}(X_{t},a,f_{t}(X_{t},a)  \mid X_{t}, \mathcal{H}_{t-1}\right] \\
&=\sum_{a=1}^{K} \pi_{e}(A_{t} = a \mid X_{t} ) 
   \E \left[\dot{m}_{\theta^{\star}}(X_t,a,Y_{t}(a)) \mid X_{t} \right] \\
&= \E_{A_{t} \sim \pi_{e}}  
   \left[\dot{m}_{\theta^{\star}}(X_t,A_{t},Y_{t}) \mid X_{t}\right].
\end{align*}
On the other hand, note that $\dot{m}_{\theta}(a,X_{t},f_{t}(a,X_{t})) \in \sigma(X_{t}, \mathcal{H}_{t-1})$. Therefore, we can rewrite the conditional variance given \(X_t\) as:
\begin{align*}
\Var \left( s_{t,\theta^{\star}} \mid \mathcal{H}_{t-1}, X_{t} \right) 
&= \Var \Bigg( \sum_{a=1}^{K} \pi_e(a \mid X_t) \frac{\mathbbm{1}_{A_t = a}}{\pi_t(a \mid X_t)} 
\dot{m}_{\theta^{\star}}(X_t,a,Y_t) \,\Big|\, \mathcal{H}_{t-1}, X_{t} \Bigg) \\
&= \sum_{a=1}^{K} \frac{\pi_e(a \mid X_t)^2}{\pi_t(a|X_t)} 
\E[\dot{m}_{\theta^{\star}}(X_t,a,Y_t(a))\dot{m}_{\theta^{\star}}(X_t,a,Y_t(a))^{T} \mid X_t]\\
&\qquad -\Big(\sum_{a=1}^{K} \pi_e(a \mid X_t) \mu^\star(X_t,a)\Big)\Big(\sum_{a=1}^{K} \pi_e(a \mid X_t) \mu^\star(X_t,a)\Big)^{T} \\
&= \sum_{a=1}^{K} \frac{\pi_e(a \mid X_t)^2}{\pi_t(a|X_t)} 
\E[\dot{m}_{\theta^{\star}}(X_t,a,Y_t(a))\dot{m}_{\theta^{\star}}(X_t,a,Y_t(a))^{T} \mid X_t]\\
&\qquad -\E_{A_{t} \sim \pi_{e}}  \left[\dot{m}_{\theta^{\star}}(X_t,A_{t},Y_t) \mid X_t\right]\E_{A_{t} \sim \pi_{e}}  \left[\dot{m}_{\theta^{\star}}(X_t,A_{t},Y_t) \mid X_t\right]^{T}.
\end{align*} 

Finally, combining the two terms, we obtain the full decomposition:
\begin{align*}
\Var(s_{t,\theta^{\star}} \mid \mathcal{H}_{t-1}) 
&=\Var\Big(\E_{A_{t} \sim \pi_{e}}  \left[\dot{m}_{\theta^{\star}}(X_t,A_{t},Y_t) \mid X_t\right]\Big) \\
\\
&\qquad +\E \Bigg[\sum_{a=1}^{K} \frac{\pi_e(a \mid X_t)^2}{\pi_t(a|X_t)} 
\E[\dot{m}_{\theta^{\star}}(X_t,a,Y_t(a))\dot{m}_{\theta^{\star}}(X_t,a,Y_t(a))^{T} \mid X_t] \mid \mathcal{H}_{t-1}\Bigg] \\
 & \qquad - \E\left[\E_{A_{t} \sim \pi_{e}}  \left[\dot{m}_{\theta^{\star}}(X_t,A_{t},Y_t) \mid X_t\right]\E_{A_{t} \sim \pi_{e}}  \left[\dot{m}_{\theta^{\star}}(X_t,A_{t},Y_t) \mid X_t\right]^{T} \right].
\end{align*}

\end{proof}

\subsection{Proof of \cref{prop:estimator}}
For ease of notation, let us write  $\nu^{\star}_{t}(X_t) := \E_{A_{t} \sim \pi_{e}} \left[ \dot{m}_{\theta}(X_t,A_t,Y_t) |X_{t} \right] $. We will tackle this proof in three parts:
\begin{enumerate}
\item Showing $\frac{1}{n - 1} 
\Biggl( \sum_{X_i \in \mathcal{\tilde{X}}} \hat{\nu}_t(X_i)
- \bar{\nu} \Biggr)
\Biggl( \sum_{X_i \in \mathcal{\tilde{X}}} \hat{\nu}_t(X_i) 
- \bar{\nu} \Biggr)^{T} - \Var\left(\E_{A_{t} \sim \pi_{e}} \left[ \dot{m}_{\theta}(X_t,A_t,Y_t) |X_{t} \right]\right) = o_{p}(1)$;
\item Showing 
\begin{align*}
&\frac{1}{n} \sum_{X_i \in \mathcal{\tilde{X}}} \sum_{a=1}^{K} 
\frac{\pi_e(A_t = a \mid X_i)^2 \, h_t(X_i,a)}{\mathbb{P}(A_t = a \mid X_i, \mathcal{H}_{t-1})} \\
& \qquad - \E \Bigg[\sum_{a=1}^{K} \frac{\pi_e(a \mid X_t)^2}{\pi_t(a|X_t)}\E[\dot{m}_{\theta^{\star}}(X_t,a,Y_t(a))\dot{m}_{\theta^{\star}}(X_t,a,Y_t(a))^{T} \mid X_t] \mid \mathcal{H}_{t-1}\Bigg] = o_{p}(1);
\end{align*}
\item Showing 
\begin{align*}&\frac{1}{n} \sum_{X_i \in \mathcal{\tilde{X}}}  \hat{\nu}_{t}(X_i)\hat{\nu}_{t}(X_i)^{T}-  \E\Bigg[\E_{A_{t} \sim \pi_{e}}  \left[\dot{m}_{\theta^{\star}}(X_t,A_{t},Y_t) \mid X_t\right]\E_{A_{t} \sim \pi_{e}}  \left[\dot{m}_{\theta^{\star}}(X_t,A_{t},Y_t) \mid X_t\right]^{T} \mid \mathcal{H}_{t-1}\Bigg] = o_{p}(1).
\end{align*}
\end{enumerate}
Adding term (1)-(3) together will demonstrate that
$\norm{\hat{V}_{t} - V_{t,\theta^{\star}}}_{\text{op}} \overset{p}{\to} 0$. 

\paragraph{Showing (1)}

First, we will show that $\hat{\nu}_{t}(X_i) - \E_{A_{t} \sim \pi_{e}} \left[ \dot{m}_{\theta}(X_t,A_t,Y_t) |X_{t} \right]\overset{p}{\to} 0$. To see this, write
\begin{align*}
\hat{\nu}_{t}(X_i) - \E_{A_{t} \sim \pi_{e}} \left[ \dot{m}_{\theta}(X_t,A_t,Y_t) |X_{t} \right] &= \sum_{a = 1}^K \pi_{e}(a|X_{t})g_{t}(a,X_i) - \E_{A_{t} \sim \pi_{e}} \left[ \dot{m}_{\theta}(X_t,A_t,Y_t) \right] \\ 
&= \sum_{a=1}^K \pi_{e}(a|X_{t})g_{t}(a,X_i) - \sum_{a=1}^K \pi_{e}(a|X_{t}) \E\left[ \dot{m}_{\theta}(X_t,a,Y_t) | X_t\right] \\
&= \sum_{a=1}^K \pi_{e}(a|X_{t}) \left(g_{t}(a,X_i)  - \E\left[ \dot{m}_{\theta}(X_t,A_t,Y_t) | A_t=a, X_t\right] \right) \\
&\le K \left(g_{t}(a,X_i)  - \E\left[ \dot{m}_{\theta}(X_t,A_t,Y_t) | A_t=a, X_t\right] \right)\\
&=o_{p}(1).
\end{align*}
The final line holds by \cref{assumption:converges_var}.

Let us consider the first term in the expression $    \frac{1}{n}\left( \sum_{X_{i} \in \tilde{X}} \hat{\nu}_{t}(X_i) - \bar{\nu} \right)\left( \sum_{X_{i} \in \tilde{X}} \hat{\nu}_{t}(X_i)^{2} - \bar{\nu} \right)^{T}$. Let us substitute $\hat{\nu}_{t}(X_i) = \hat{\nu}_{t}(X_i) - \nu^{\star}_{t}(X_t) + \nu^{\star}_{t}(X_t)$. We then obtain the result that the above expression is equal to 
\begin{align*}
    \frac{1}{n} \sum_{X_{i} \in \tilde{X}} \nu^{\star}_{t}(X_t)\nu^{\star}_{t}(X_t)^{T} - \left(\frac{1}{n} \sum_{X_{i} \in \tilde{X}} \nu^{\star}_{t}(X_t) \right) \left(\frac{1}{n} \sum_{X_{i} \in \tilde{X}} \nu^{\star}_{t}(X_t) \right)^{T}+ o_{p}(1),
\end{align*}
which converges to  $\Var\left(\E_{A_{t} \sim \pi_{e}} \left[ \dot{m}_{\theta}(X_t,A_t,Y_t) |X_{t} \right]\right)$ by the weak law of large numbers.

\paragraph{Showing (2)} Now, let us consider the expression 

\begin{align*}
&\frac{1}{n} \sum_{X_i \in \tilde{X}} \sum_{a=1}^{K}
\frac{\pi_{e}(A_{t}=a \mid X_{t})^{2}}
{\mathbb{P}\!\left(A_{t}=a \mid X_{t}, \mathcal{H}_{t-1}\right)}
\, h_t(X_i,a) \\
&\quad - \E\Bigg[
\sum_{a=1}^{K}
\frac{\pi_{e}(A_{t}=a \mid X_{t})^{2}}
{\mathbb{P}\!\left(A_{t}=a \mid X_{t}, \mathcal{H}_{t-1}\right)}
\, \E\Big[
\dot{m}_{\theta^{\star}}(X_t,A_t,Y_t)
\dot{m}_{\theta^{\star}}(X_t,A_t,Y_t)^{T}
\;\Big|\; X_t, A_t=a
\Big]
\;\Big|\; \mathcal{H}_{t-1}
\Bigg].
\end{align*}
This is simply equal to 
\begin{align*}
  \frac{1}{n} \sum_{X_i \in \tilde{X}} &\sum_{a=1}^{K} \pi_{e}(A_{t} =a | X_{t})^{2} \frac{ \left( h_t(X_i,a)  -  \E\left[\E \left[  \dot{m}_{\theta^{\star}}(X_t,A_t,Y_{t})\dot{m}_{\theta^{\star}}(X_t,A_t,Y_{t})^{T} | X_{t},A_t=a \right]\right] \right) }{ \mathbb{P} \left(A_{t}=a| X_{t}, \mathcal{H}_{t-1}\right)}  \\
  &\le \frac{C_1}{n} \left[\sum_{X_i \in \tilde{X}}\sum_{a=1}^{K} h_t(X_i,a)\right] -C_1\E\left[\E \left[  \dot{m}_{\theta^{\star}}(X_t,A_t,Y_{t})\dot{m}_{\theta^{\star}}(X_t,A_t,Y_{t})^{T} | X_{t},A_t=a \right]\right].
\end{align*}
To show that this expression is $o_{p}(1)$, it is sufficient to show that $$ \frac{1}{n}\sum_{X_i \in \tilde{X}} h_t(X_i,a) - \E\left[\E \left[  \dot{m}_{\theta^{\star}}(X_t,A_t,Y_{t})\dot{m}_{\theta^{\star}}(X_t,A_t,Y_{t})^{T} | X_{t},A_t=a \right]\right] =o_{p}(1).$$ for all $a\in \mathcal{A}$. Note that by Assumption~\ref{assumption:converges_var},
\begin{align*}
   \frac{1}{n}\sum_{X_i \in \tilde{X}} h_t(X_i,a) &=    \frac{1}{n}\sum_{X_i \in \tilde{X}} h_t(X_i,a)  + \E \left[  \dot{m}_{\theta^{\star}}(X_t,A_t,Y_{t})\dot{m}_{\theta^{\star}}(X_t,A_t,Y_{t})^{T} | X_{t},A_t=a \right] - \\
   &\qquad \E \left[  \dot{m}_{\theta^{\star}}(X_t,A_t,Y_{t})\dot{m}_{\theta^{\star}}(X_t,A_t,Y_{t})^{T} | X_{t},A_t=a \right]\\
   &= o_{p}(1) + \frac{1}{n}\sum_{X_i \in \tilde{X}} \E \left[  \dot{m}_{\theta^{\star}}(X_t,A_t,Y_{t})\dot{m}_{\theta^{\star}}(X_t,A_t,Y_{t})^{T} | X_{t},A_t=a \right].
\end{align*}
Therefore,  $ \frac{1}{n}\sum_{X_i \in \tilde{X}} h_t(X_i,a) - \E\left[\E \left[  \dot{m}_{\theta^{\star}}(X_t,A_t,Y_{t})\dot{m}_{\theta^{\star}}(X_t,A_t,Y_{t})^{T} | X_{t},A_t=a \right]\right] =o_{p}(1)$ by the weak law of large numbers. 

\paragraph{Showing (3)} Finally, consider the expression 
\begin{align} \hat{\nu}_{t}(X_i)\hat{\nu}_{t}(X_i)^{T}
-  \Big(\sum_{a=1}^{K} \pi_e(a \mid X_t) \mu^\star(X_t,a)\Big)\Big(\sum_{a=1}^{K} \pi_e(a \mid X_t) \mu^\star(X_t,a)\Big)^{T} . \label{eqn:third}
\end{align}
Let $\Delta_t(x) := \widehat{\nu}_t(x) - \nu_t^\star(x)$ which we already showed was $o_{p}(1)$. Now, we can write:
\begin{align*}
\widehat{\nu}_t(X_i)\widehat{\nu}_t(X_i)^{\!\top}
&=
\big(\Delta_t(X_i) + \nu_t^\star(X_i)\big)
\big(\Delta_t(X_i) + \nu_t^\star(X_i)\big)^{\!\top} \\
&=
\Delta_t(X_i)\Delta_t(X_i)^{\!\top}
\;+\;
\Delta_t(X_i)\nu_t^\star(X_i)^{\!\top}
\;+\;
\nu_t^\star(X_i)\Delta_t(X_i)^{\!\top}
\;+\;
\nu_t^\star(X_i)\nu_t^\star(X_i)^{\!\top}.
\end{align*}

Subtracting the last term on both sides yields the identity
\[
\widehat{\nu}_t(X_i)\widehat{\nu}_t(X_i)^{\!\top}
-
\nu_t^\star(X_i)\nu_t^\star(X_i)^{\!\top}
=
\Delta_t(X_i)\Delta_t(X_i)^{\!\top}
+
\Delta_t(X_i)\nu_t^\star(X_i)^{\!\top}
+
\nu_t^\star(X_i)\Delta_t(X_i)^{\!\top}.
\]
Each individual term is $o_{p}(1)$ because $\nu_t^\star(X_i)$ is integrable. Therefore, we have that \cref{eqn:third} is $o_p(1)$. Now, consider the expression.

\begin{align*} &\frac{1}{n}\sum_{i=1}^{n}\hat{\nu}_{t}(X_i)\hat{\nu}_{t}(X_i)^{T}
- \E\left[\E_{A_{t} \sim \pi_{e}}  \left[\dot{m}_{\theta^{\star}}(X_t,A_{t},Y_t) \mid X_t\right]\E_{A_{t} \sim \pi_{e}}  \left[\dot{m}_{\theta^{\star}}(X_t,A_{t},Y_t) \mid X_t\right]^{T} \mid \mathcal{H}_{t-1}\right] \\
&=\frac{1}{n}\sum_{i=1}^{n}\left(\hat{\nu}_{t}(X_i)\hat{\nu}_{t}(X_i)^{T}
- \nu_t^\star(X_i)\nu_t^\star(X_i)^{\!\top}\right) \\
&\qquad +\frac{1}{n}\sum_{i=1}^{n}\nu_t^\star(X_i)\nu_t^\star(X_i)^{\!\top}-\E\left[\E_{A_{t} \sim \pi_{e}}  \left[\dot{m}_{\theta^{\star}}(X_t,A_{t},Y_t) \mid X_t\right]\E_{A_{t} \sim \pi_{e}}  \left[\dot{m}_{\theta^{\star}}(X_t,A_{t},Y_t) \mid X_t\right]^{T} \mid \mathcal{H}_{t-1}\right].\\
\end{align*}
The first term is $o_{p}(1)$ by the arguments given above, the second term is $o_{p}(1)$ by the weak law of large numbers. 
\subsection{Proof of \cref{prop:glm}}
We have for all $\theta$ that 
$$ \E\left[\dot{m}_{\theta}(X_i,A_i,Y_i) | X_{i},A_i=a \right] = \E\left[z_{\theta}(X_i,a)Y_{i} + v_{\theta}(X_i,a) |X_i,A_i=a\right] =  z_{\theta}(X_i,a)\E[Y_{i}|X_i,A_i=a] + v_{\theta}(X_i,a).$$
Therefore, 
$$ g_{t}(X_i,a) - \E\left[m_{\theta^{\star}}(X_i,A_i,Y_i) | X_{i},A_i =a\right] = 
z_{\bar{\theta}_{T}}(X_i,a) f_{t}(X_i,a) - z_{\theta^{\star}}(X_i,a) E[Y_{i}|X_i,A_i=a]+ v_{\bar{\theta}_{T}}(X_i,a) - v_{\theta^{\star}}(X_i,a). 
$$
We have that $v_{\bar{\theta}_{T}}(X_i,A_i) - v_{\theta^{\star}}(X_i,A_i) = o_{p}(1)$ by the consistency of $\bar{\theta}_{T}$ combined with the continuous mapping theorem. Note for the remaining terms that
\begin{align*}
&z_{\bar{\theta}_{T}}(X_i,a) f_{t}(X_i,a) - z_{\theta^{\star}}(X_i,a) E[Y_{i}|X_i,A_i=a]  \\ 
&=z_{\bar{\theta}_{T}}(X_i,a) f_{t}(X_i,a) - z_{\theta^{\star}}(X_i,a) E[Y_{i}|X_i,A_i=a] + z_{\bar{\theta}_{T}}(X_i,a) E[Y_{i}|X_i,A_i=a]  -  z_{\bar{\theta}_{T}}(X_i,a) E[Y_{i}|X_i,A_i=a]\\ 
&=(z_{\bar{\theta}_{T}}(X_i,a) - z_{\theta^{\star}}(X_i,a)) E[Y_{i}|X_i,A_i=a] 
+z_{\bar{\theta}_{T}}(X_i,a) \left(f_{t}(X_i,a) - E[Y_{i}|X_i,A_i=a]\right). 
\end{align*}
Applying the assumptions that $\bar{\theta}_{t} \overset{p}{\to} \theta^{\star}$ and $f_{t}(X_i,a) - E[Y_{i}|X_i,A_i=a]\overset{p}{\to} 0$ along with the continuous mapping theorem immediately yields the result that $g_{t}(X_i,a)$ is consistent.

Next, we will prove the second part of the theorem. By the properties of variance and because $z_{\theta}(X_i,A_i)$ is fixed conditional on $X_i$ and $A_i$, we have that
$$ \Var\left[\dot{m}_{\theta}(X_i,A_i,Y_i) | X_{i},A_i =a\right] = z_{\theta}(X_i,a) z_{\theta}(X_i,a)^{T} \Var(Y_{i} | X_i,A_i=a).$$
Analyzing the term, 
\begin{align*}
h_t(X_i,a) &- \Var\big(\dot m_{\theta^\star}(X_i,A_i,Y_i)\mid X_i,A_i\big) \\
&= z_{\bar\theta_T}(X_i,a) z_{\bar\theta_T}(X_i,a)^\top j_{t}(X_i,a)  
        - z_{\theta^\star}(X_i,a) z_{\theta^\star}(X_i,a)^\top\Var(Y_{i} | X_i,A_i=a)  \\
&= z_{\bar\theta_T}(X_i,a) z_{\bar\theta_T}(X_i,a)^\top j_{t}(X_i,a) - z_{\bar\theta_T}(X_i,a) z_{\bar\theta_T}(X_i,a)^\top \Var(Y_{i} | X_i,A_i=a) \\
&\quad \quad +z_{\bar\theta_T}(X_i,a) z_{\bar\theta_T}(X_i,a)^\top \Var(Y_{i} | X_i,A_i=a) - z_{\theta^\star}(X_i,a) z_{\theta^\star}(X_i,a)^\top\Var(Y_{i} | X_i,A_i=a)  \\
&= z_{\bar\theta_T}(X_i,a) z_{\bar\theta_T}(X_i,a)^\top \left( j_{t}(X_i,a) - \Var(Y_{i} | X_i,A_i=a)  \right) \\
& \quad \quad +  \left(z_{\bar\theta_T}(X_i,a) z_{\bar\theta_T}(X_i,a)^\top - z_{\theta^\star}(X_i,a) z_{\theta^\star}(X_i,a)^\top \right)\Var(Y_{i} | X_i,A_i=a).
\end{align*}
Applying the assumptions that $\bar{\theta}_{t} \overset{p}{\to} \theta^{\star}$, $j_{t}(X_i,a) - \text{Var}[Y_{i}|X_i,A_i=a]\overset{p}{\to} 0$ together with the continuous mapping theorem immediately yields the result that $h_{t}(X_i,a)$ is consistent.

\subsection{Proof of \cref{prop:consistency_tilde}}
\begin{proof}
Define 
\begin{equation*} 
R_{t}(\theta) =  \sum_{a=1}^{K} \pi_{e}(A_{t} =a | X_{t}) \left( m_{\theta}(a,X_{t},f_{t}(a,X_{t}))   + \mathbbm{1}_{A_{t} = a} \frac{ m_{\theta}(X_{t},A_{t},Y_{t})  -  m_{\theta}(X_{t},A_{t},f_{t}(X_{t},A_{t})) }{ \mathbb{P} \left(A_{t} =a | X_{t},\mathcal{H}_{t-1} \right)  }\right).
\end{equation*}
By definition of $\tilde{\theta}_{T}$, we have that 
$$ \sum_{t=1}^{T} R_{t}(\tilde{\theta}_{T}) = \sup_{\theta \in \Theta} \sum_{t=1}^{T}  R_{t}(\tilde{\theta}_{T})\ge \sum_{t=1}^{T} R_{t}(\theta^{\star}).$$
This implies that
\begin{align*}
    \mathbb{P} \left( \norm{\tilde{\theta}_{T} - \theta^{\star}} \ge \epsilon \right) 
    &\le\mathbb{P} \left( \sup_{\norm{\theta- \theta^{\star}} \ge \epsilon} 
    \sum_{t=1}^{T} R_{t}(\theta)  \ge \sum_{t=1}^{T} R_{t}(\theta^{\star})    \right) \\
    &= \mathbb{P} \left( \sup_{\norm{\theta- \theta^{\star}} \ge \epsilon} 
\left\{\frac{1}{T}\sum_{t=1}^{T} R_{t}(\theta) \right\} - \frac{1}{T}\sum_{t=1}^{T} R_{t}(\theta^{\star}) \ge 0\right) \\
    &\le \mathbb{P}  \Bigg( \sup_{\norm{\theta- \theta^{\star}} \ge \epsilon} \left\{  
    \frac{1}{T} \sum_{t=1}^{T} R_{t}(\theta) - \mathbb{E}\left[R_{t}(\theta) | \mathcal{H}_{t-1} \right] \right\}   +  \sup_{\norm{\theta- \theta^{\star}} \ge \epsilon} \left\{ \frac{1}{T} \sum_{t=1}^{T} \mathbb{E}\left[ R_{t}(\theta) - R_{t}(\theta^{\star})| \mathcal{H}_{t-1} \right]  \right\} - \\ 
    &\qquad \qquad \frac{1}{T} \sum_{t=1}^{T} \left\{R_{t}(\theta^{\star}) - \mathbb{E} \left[R_{t}(\theta^{\star}) | \mathcal{H}_{t-1} \right]\right\} \ge 0  \Bigg).\\
\end{align*}
We consider each term separately. First, note that in all cases we have that

\begin{align*}
    \E\left[R_{t}(\theta)| \mathcal{H}_{t-1} \right] &=  \E_{\mathcal{P},\pi_{t}}\left[\sum_{a=1}^{K} \pi_{e}(A_{t} =a | X_{t}) \left( m_{\theta}(a,X_{t},f_{t}(a,X_{t}))   + \mathbbm{1}_{A_{t} = a} \frac{ m_{\theta}(X_{t},A_{t},Y_{t})  -  m_{\theta}(X_{t},A_{t},f_{t}(X_{t},A_{t})) }{ \mathbb{P} \left(A_{t} =a | X_{t},\mathcal{H}_{t-1} \right)  }\right) |\mathcal{H}_{t-1}\right] \\
    &= \E_{\mathcal{P},\pi_{t}}\left[\frac{ \pi_{e}(A_{t} =a | X_{t})}{\mathbb{P} \left(A_{t} =a | X_{t},\mathcal{H}_{t-1} \right)} m_{\theta}(X_{t},A_{t},Y_{t}) \right ] + \\
    &\qquad \sum_{a =1}^{K}\pi_{e}(A_{t} =a | X_{t}) \E_{\mathcal{P},\pi_{t}} \left[ 
    \left( 1 - \frac{ \mathbbm{1}_{A_t=a}  }{  \mathbb{P} \left(A_{t}=a| X_{t}, \mathcal{H}_{t-1} \right)} \right)m_{\theta}(a,X_{t},f_{t}(a,X_{t}))  \right] \\
    &= \E_{\pi_{e} }\left[ m_{\theta}(X_{t},A_{t},Y_{t}) \right ]. 
\end{align*}

Therefore,
\begin{align*}
\sup_{\norm{\theta- \theta^{\star}} \ge \epsilon} \left\{ \frac{1}{T} \sum_{t=1}^{T} \mathbb{E}\left[ R_{t}(\theta) - R_{t}(\theta^{\star})| \mathcal{H}_{t-1} \right]  \right\} &= \sup_{\norm{\theta- \theta^{\star}} \ge \epsilon} \left\{\mathbb{E}_{\pi_{e}}\left[ m_{\theta}(X_{t},A_{t},Y_{t}) - m_{\theta^{\star}}(X_{t},A_{t},Y_{t})\right]  \right\} < - \delta_{1}, 
\end{align*}
for some $\delta_1 > 0$ by \cref{assumption:maximum}. Note that \cref{assumption:bracket3} implies that the bracketing number $N_{[]}(\epsilon, \mathcal{M}_{\Theta}, L_{2}(\mathcal{P},\pi_{e})) < \infty$ for $ \mathcal{M}_{\Theta} =  \{ m_{\theta}(X_t,A_{t},Y_{t}) : \theta \in \Theta  \}$. This combined with \cref{assumption:bracket_complexity} allows us to apply \cref{lemma:uniform_lln} conclude that the first and third terms also conerge to $0$. Therefore,  $\mathbb{P} \left( \norm{\tilde{\theta}_{T} - \theta^{\star}}_{1} \ge \epsilon \right) \rightarrow 0 $.

\end{proof}
\subsection{Proof of \cref{prop:estimator_sequential}}
By the law of large numbers, we have that $\frac{n_p}{T} \rightarrow{p} p$ and therefore $\frac{1}{n_p} = \frac{1}{pT} + o_p(T^{-1})$. Define the per-round variance summand
\begin{align*}
\Psi_t:= 
\Big(\widehat{\nu}_t(X_t)-\overline{\nu}_t\Big)
\Big(\widehat{\nu}_t(X_t)-\overline{\nu}_t\Big)^\top + \sum_{a=1}^K
\frac{\pi_e(a\mid X_t)^2}{\mathbb P(A_t=a\mid X_t,\mathcal H_{t-1})}\,
h_t(X_t,a)
\;-\;
\widehat{\nu}_t(X_t)\widehat{\nu}_t(X_t)^\top.
\end{align*}
We can therefore write $\widehat V_t = \frac{1}{n_p}\sum_{t=1}^T \zeta_t\,\Psi_t$. Moreover, 
$\widehat V_t
=
\frac{1}{pT}\sum_{t=1}^T \zeta_t\,\Psi_t
+ o_p(1)$. 

By assumption, $\{\zeta_t\}$ are independent of all histories, therefore $
\E[\zeta_t\Psi_t \mid \mathcal H_{t-1}]
=
p\,\E[\Psi_t \mid \mathcal H_{t-1}]$. 
Consequently, by the law of large numbers and Slutsky’s theorem,
$
\widehat V_t
=
\E[\Psi_t \mid \mathcal H_{t-1}]
+ o_p(1).
$
The remainder of the argument is identical to \cref{prop:estimator} and, therefore,
\[
\big\|\widehat V_t - V_{t,\theta^\star}\big\|_{\mathrm{op}}
\xrightarrow{p} 0,
\]
as claimed.
\section{Procedure for Semi-Synthetic Data Construction and Additional Results} \label{appendix:dataset_construction}
For the semi-synthetic dataset construction, we learn $E[Y_{t} | X_{t}]$ via a machine learning model, and then enforce a linear model to describe $E[Y_{t} | X_{t}, A_{t}]$. The procedure is
 \begin{enumerate}
     \item Train a machine learning model $f$ to predict $E[Y_{t} | X_{t}]$ using available dataset. 
     \item Generate synthetic causal outcomes, where it is assumed that the mean
    $$ E[Y_t | X_{t}, A_{t}] = \sum_{a=1}^{K}\beta^{a}_{1} \mathbbm{1}_{A_{t}=a} + \sum_{a=1}^{K}\beta_{2}^{a} f(X_{t}) \times \mathbbm{1}_{A_{t}=a},$$
     for user chosen parameters $\beta^{a}_{1}$, $\beta^{a}_{2}$.
     \item We observe each $X_{t}$ sequentially (sampled with replacement from the actual OAI data), and then in each round $A_{t}$ are chosen using each of the methods in \cref{sec:empirical_results}. We then draw $Y_{t} \sim N\left(\sum_{a=1}^{k}\beta^{a}_{1} \mathbbm{1}_{A_{t}=k} + \sum_{a=1}^{k}\beta_{2}^{a} f(X_{t}) \times \mathbbm{1}_{A_{t}=k},1 \right)$.
     \item  Alternatively, we can introduce heteroskedasticity into the dataset by learning a secondary model $v$ to predict $E\left[\left(Y_{t} - f(X_{t})\right)^{2} | X_{t}\right]$. We then sample 
     $$\tilde{Y}_{t} \sim N\left(\sum_{a=1}^{k}\beta^{a}_{1} \mathbbm{1}_{A_{t}=a} + \sum_{a=1}^{k}\beta_{2}^{a} f(X_{t}) \times \mathbbm{1}_{A_{t}=a}, \sum_{a=1}^{K} v(X_{t}) \gamma_{a} \mathbbm{1}_{A_{t}=a} \ \right),$$ where $\gamma_{k}$ is a user-chosen multiplier to the variance. 

 \end{enumerate}
For the empirical simulations, we choose six different initializations 
\begin{enumerate}
\item \textbf{Scenario 1: Correct Specification, Homoscedasticity, Unique Optimal Arm} 
$$\beta_{1} = (0,1,2,3,4,5,6,7), \beta_{2} = (0,0,0,0,0,0,0,0), \text{and unit variance uniformly}.$$ In this scenario, all methodologies other than naive unweighted MLE should perform relatively well as bandit algorithms \emph{will concentrate}. Results are shown in \cref{fig:simulation_results_6}.
\item \textbf{Scenario 2: Correct Specification, Homoscedasticity, No Unique Optimal Arm.}
$$\beta_{1} = (0,0,1,2,2,3,5,5), \beta_{2} = (0,0,0,0,0,0,0,0), \text{and unit variance uniformly}.$$ In this scenario, IPW-style estimators may have difficulties but confidence intervals constructed as in \cite{zhang2021mestimators} should still cover correctly. Results are shown in \cref{fig:simulation_results_5}.
\item \textbf{Scenario 3: Misspecification, Homoscedasticity.}
$$\beta_{1} = (0,0,1,2,2,3,4,4), \beta_{2} = (1,-1,1,0,1,1,1,-3), \text{and unit variance uniformly}.$$ Results are shown in \cref{fig:simulation_results_1}.
\item \textbf{Scenario 4: Misspecification, Heteroskedasticity}
$$\beta_{1} = (0,0,1,2,2,3,4,5), \beta_{2} = (1,-1,1,0,1,1,1,-2), \text{and } \gamma = (1,2,3,4,5,5,5,5) \times 0.2. $$
Results are shown in \cref{fig:simulation_results_3}.

\end{enumerate}
We showed the results of only Scenario 2 in the main body due to space constraints, but the other scenarios reveal broadly comparable results. 

\begin{figure*}  
    \centering
    \begin{subfigure}[t]{\linewidth}
         \centering
         
         \includegraphics[width=\textwidth]{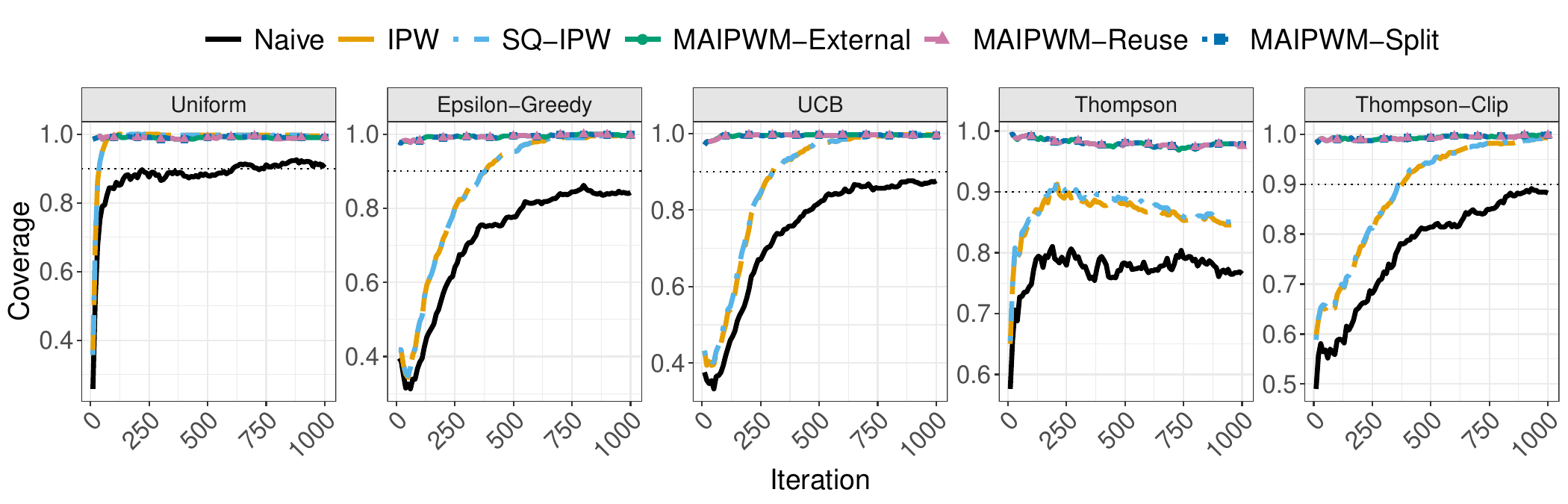}
        \caption{Coverage for target $1-\alpha=0.9$}
     \end{subfigure}
    \begin{subfigure}[t]{\linewidth}
         \centering
         \includegraphics[width=\textwidth]{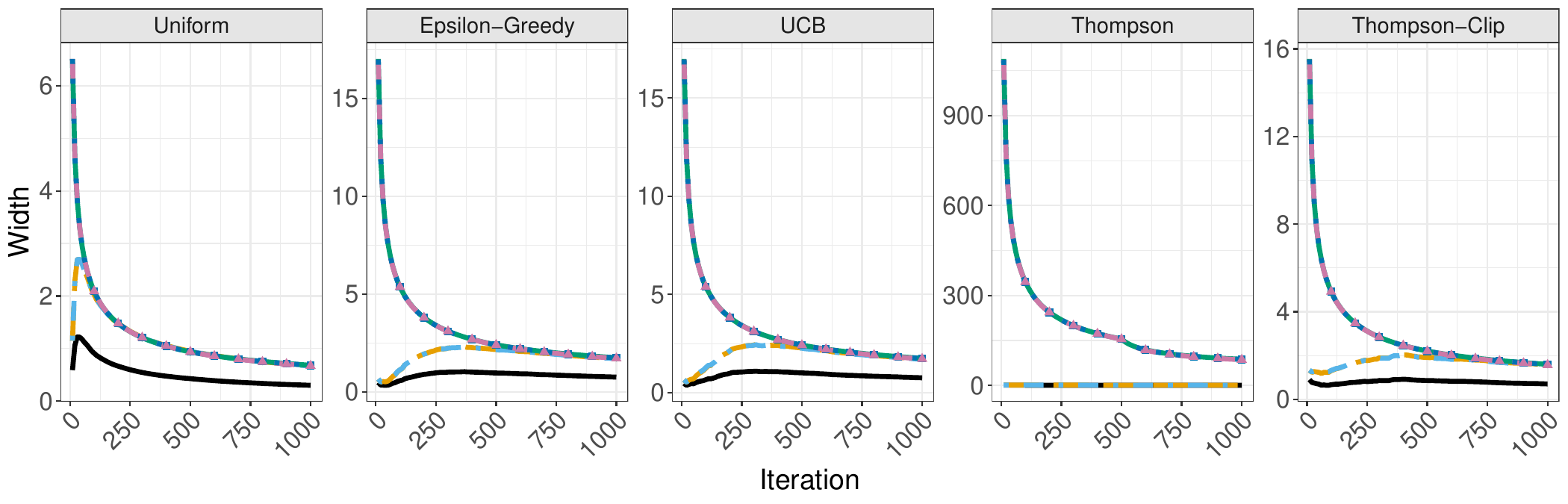}
        \caption{Confidence interval width}
     \end{subfigure}
\caption{Simulation results for scenario 1, where the model is correctly specified. In these settings, we see that all methods cover, but it often takes the naive approaches (IPW and SQ-IPW) estimates significantly more samples to reach the nominal coverage rates. Thompson sampling undercovers for several methods when propensity scores are not clipped, consistent with theory. }
    \label{fig:simulation_results_6}
\end{figure*}

\begin{figure*}  
    \centering
    \begin{subfigure}[t]{\linewidth}
         \centering
         
         \includegraphics[width=\textwidth]{figures/Coverage_scenario_6.pdf}
        \caption{Coverage for target $1-\alpha=0.9$}
     \end{subfigure}
    \begin{subfigure}[t]{\linewidth}
         \centering
         \includegraphics[width=\textwidth]{figures/Width_scenario_6.pdf}
        \caption{Confidence interval width}
     \end{subfigure}
\caption{Simulation results for scenario 2, where the model is correctly specified but there is not a unique optimal arm. The results are broadly similar as in \cref{fig:simulation_results_5}, though we note there is no theoretical guarantee that IPW estimates should cover in this setting (though they do empirically).}
    \label{fig:simulation_results_5}
\end{figure*}

\begin{figure*}  
    \centering
    \begin{subfigure}[t]{\linewidth}
         \centering
         \includegraphics[width=\textwidth]{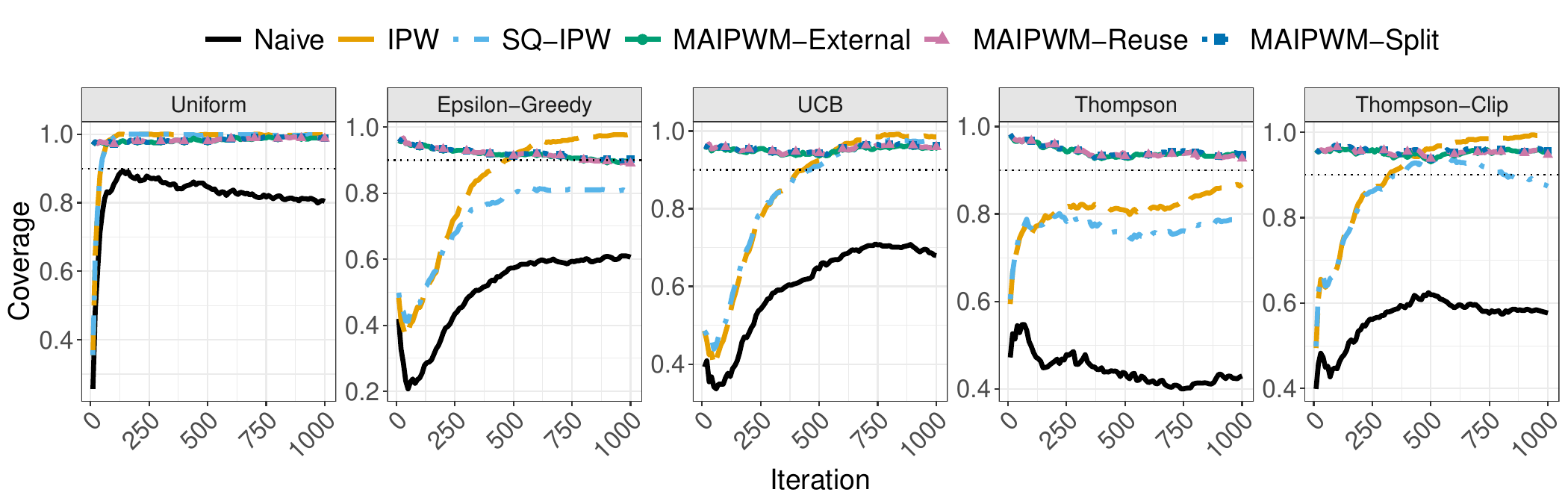}
        \caption{Coverage for target $1-\alpha=0.9$ }
     \end{subfigure}
    \begin{subfigure}[t]{\linewidth}
         \centering
         \includegraphics[width=\textwidth]{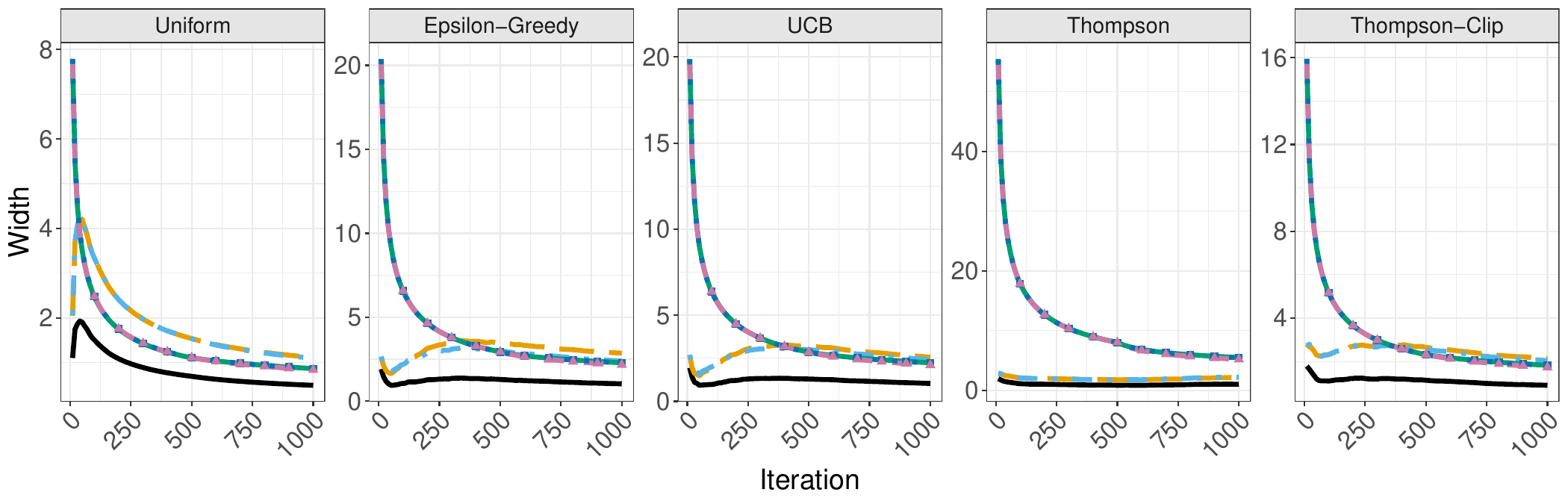}
        \caption{Confidence interval width }
     \end{subfigure}
\caption{Simulation results for scenario 3, when there is model misspecification but homoskedastic errors. IPW and SQ-IPW estimators now do not cover in every scenario, as their theoretical guarantees depend on \emph{correct specification} of the working models.}
    \label{fig:simulation_results_1}
\end{figure*}

\begin{figure*}  
    \centering
    \begin{subfigure}[t]{\linewidth}
         \centering
    
         \includegraphics[width=\textwidth]{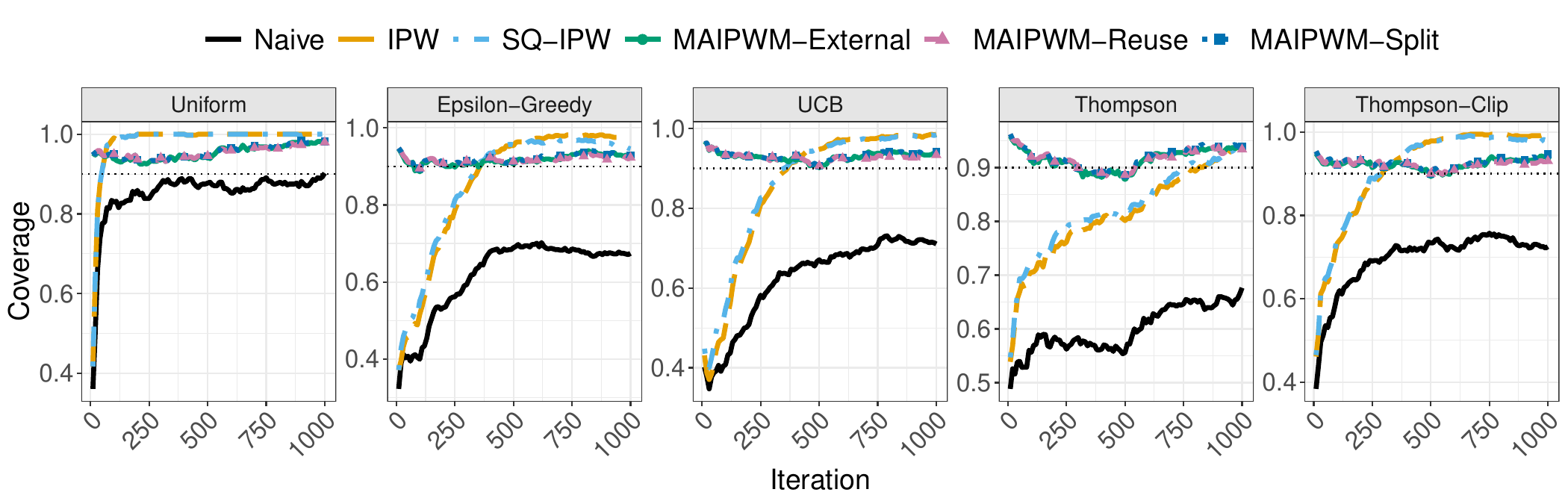}
        \caption{Coverage for target $1-\alpha=0.9$}
     \end{subfigure}
    \begin{subfigure}[t]{\linewidth}
         \centering
         \includegraphics[width=\textwidth]{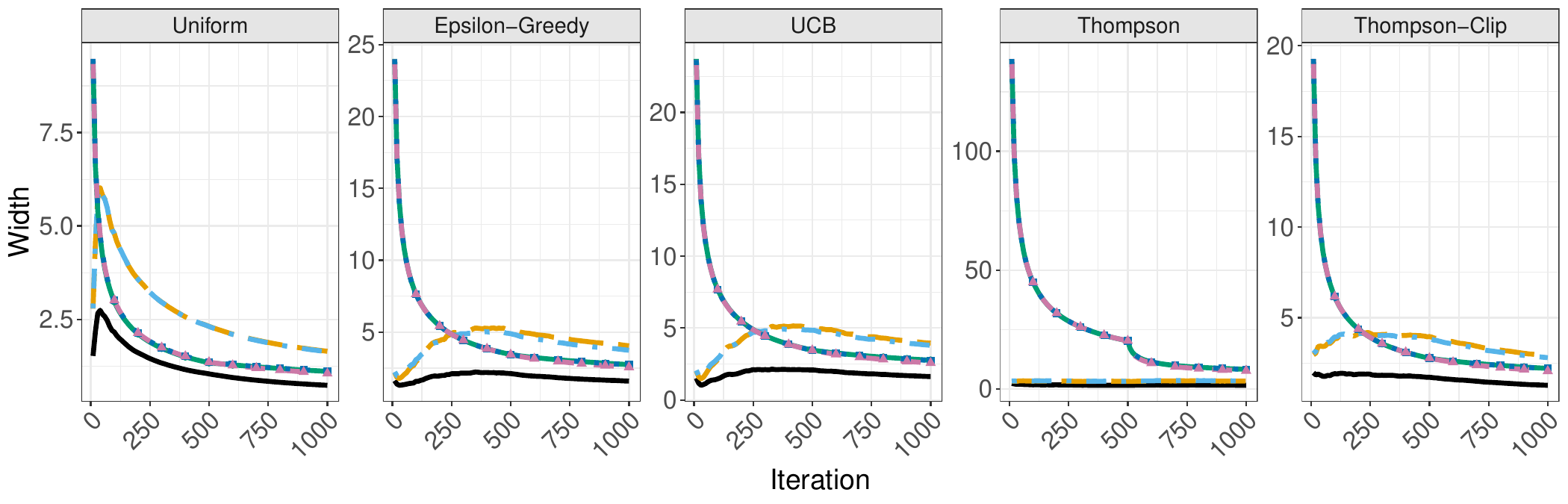}
        \caption{Confidence interval width}
     \end{subfigure}
\caption{Simulation results for scenario 3, when there is model misspecification and heteroskedastic errors. The results are broadly similar to \cref{fig:simulation_results_1}. }
    \label{fig:simulation_results_3}
\end{figure*}

\end{appendix}

\section*{Acknowledgments}

Data and/or research tools used in the preparation of this manuscript were obtained and analyzed from the controlled access data sets distributed by the Osteoarthritis Initiative (OAI), a data repository located within the NIMH Data Archive (NDA). OAI is a collaborative informatics system created by the National Institute of Mental Health and the National Institute of Arthritis, Musculoskeletal and Skin Diseases (NIAMS) to provide a worldwide resource to quicken the pace of biomarker identification, scientific investigation and OA drug development. Dataset identifier(s): [outcomes99, kxrsqbu06, kxrsqbu00, enrollees, allclinical00, allclinical06].

\bibliographystyle{chicago}
\bibliography{ref}

\end{document}